\documentclass[12pt,a4paper,reqno,oneside]{amsart}
\usepackage{t1enc}
\usepackage{a4wide}
\usepackage{times}
\usepackage{amssymb}
\usepackage[mathscr]{euscript}
\usepackage{graphicx}


\newtheorem{thm}{Theorem}[section]
\newtheorem{prop}[thm]{Proposition}
\newtheorem{lem}[thm]{Lemma}
\newtheorem{cor}[thm]{Corollary}

\theoremstyle{remark}
\newtheorem{rem}[thm]{Remark}

\newtheorem{ex}[thm]{Example}

\theoremstyle{definition}
\newtheorem{defn}[thm]{Definition}

\renewcommand{\phi}{\varphi} 
\newcommand{\tr}{\mathrm{Tr}} 
\newcommand{\op}{\mathrm{op}} 

\newcommand{\C}{\mathcal C} 
\newcommand{\E}{\mathrm{E}} 
\newcommand{\T}{\mathcal T} 
\newcommand{\U}{\mathcal U} 

\renewcommand{\L}{\mathcal L}

\newcommand{\<}{\langle}
\renewcommand{\>}{\rangle}
\renewcommand{\Re}{\mathrm{Re}} 
\newcommand{\Q}{\mathcal Q} 

\newcommand{\ran}{\mathrm{ran}} 
\newcommand{\dom}{\mathrm{dom}} 
\newcommand{\fraum}{(\Omega,(\mathcal F_t)_{t\in\mathbb R_+},\mathfrak A,P)} 
\newcommand{\vektor}[2]{\left(\begin{matrix} #1\\ #2\end{matrix}\right)}

\newcommand{\nooutput}[1]{}

\begin{document}
\date{Version of \today}
\title{Representation of infinite dimensional forward price models in commodity markets}
\author[Benth]{Fred Espen Benth}
\address[Fred Espen Benth]{\\
Centre of Mathematics for Applications \\
University of Oslo\\
P.O. Box 1053, Blindern\\
N--0316 Oslo, Norway}
\email[]{fredb\@@math.uio.no}
\urladdr{http://folk.uio.no/fredb/}
\author[Kr\"uhner]{Paul Kr\"uhner}
\address[Paul Kr\"uhner]{\\
Department of Mathematics \\
University of Oslo\\
P.O. Box 1053, Blindern\\
N--0316 Oslo, Norway}
\email[]{paulkru\@@math.uio.no}

\keywords{Forward price, Infinite dimensional stochastic processes, L\'evy processes, commodity markets, Heath-Jarrow-Morton approach}

\thanks{This paper has been developed under financial support of the project "Managing Weather Risk in
Electricity Markets" (MAWREM), funded by the RENERGI-program of the Norwegian Research Council}

\nooutput{
We study the forward price dynamics in commodity markets realized as a process
with values in a family of Hilbert spaces of absolutely continuous functions defined by Filipovi\'c~\cite{filipovic.01}. The forward dynamics is defined as the mild solution of a
certain stochastic partial differential equation driven by an infinite dimensional 
L\'evy process. It is shown that the associated spot price dynamics can 
be expressed as a sum of Ornstein-Uhlenbeck processes, or more generally,  as
a sum of certain stationary processes. These results link the possibly infinite dimensional
forward dynamics to classical commodity spot models. We continue with a detailed analysis
of multiplication and integral operators on the Hilbert spaces and show that 
Hilbert-Schmidt operators are essentially integral operators. The covariance operator of the 
L\'evy process driving the forward dynamics and the diffusion term can both be specified in
terms of such operators, and we analyse in several examples the consequences on model
dynamics and their probabilistic properties. Also, we represent the forward price for contracts delivering over a period in terms of an integral operator, a case being relevant for power and gas markets.
In several examples we reduce our general model to existing commodity spot and 
forward dynamics.  
}

\begin{abstract}
 We study the forward price dynamics in commodity markets realized as a process with values in a Hilbert space of absolutely continuous functions defined by Filipovi\'c~\cite{filipovic.01}. 
The forward dynamics are defined as the mild solution of a certain stochastic partial differential equation driven by an infinite dimensional L\'evy process. 
It is shown that the associated spot price dynamics can be expressed as a sum of Ornstein-Uhlenbeck processes, or more generally,  as a sum of certain stationary processes. 
These results link the possibly infinite dimensional forward dynamics to classical commodity spot models. We continue with a detailed analysis of multiplication and integral operators on the Hilbert spaces and show that Hilbert-Schmidt operators are essentially integral operators. 
The covariance operator of the L\'evy process driving the forward dynamics and the diffusion term can both be specified in terms of such operators, and we analyse in several examples the consequences on model dynamics and their probabilistic properties. 
Also, we represent the forward price for contracts delivering over a period in terms of an integral operator, a case being relevant for power and gas markets.
In several examples we reduce our general model to existing commodity spot and forward dynamics. 
\end{abstract}

\maketitle

\section{Introduction}
In this paper we analyse the dynamics of forward prices in commodity markets in 
the context of infinite dimensional stochastic calculus. The motivation 
for our studies comes mainly from energy markets like power and gas where
one finds strong seasonal patterns, a high degree of idiosyncratic risk over different market segments as well as non-Gaussian features like spikes.

In mathematical finance the arbitrage-free forward price can be derived from the 
buy-and-hold strategy in the underlying spot commodity (see Hull~\cite{H}).  
The forward price dynamics is thus implied from a given stochastic model of the spot
commodity. If one specifies the price of the spot commodity by a semimartingale 
dynamics, one can represent the forward price as the conditional 
expected value of the spot at time of delivery, where the expectation is calculated with
respect to an equivalent martingale measure $Q$.
However, electricity is non-storable and hence not tradeable in the classical sense. 
Consequently, the buy-and-hold hedging strategy breaks down, and the spot price process does not need to be a $Q$-martingale itself. Hence, any equivalent measure $Q$ can be used as a 
pricing measure when deriving an {\it arbitrage-free} forward price. This raises the question of finding a financially valid $Q$.  

As the fundamental relationship between the spot and forward is highly delicate in 
energy markets, it seems like a reasonable alternative to model the forward price dynamics directly. The alternative approach of modelling derivatives directly was first advocated for interest rate markets in Heath, Jarrow and Morton~\cite{heath.al.92}, commonly referred to as the HJM approach, and the idea of this novel method has been transferred to other markets, see e.g. 
Carmona and Nadtochiy~\cite{carmona.nadtochiy.12}, Kallsen and Kr\"uhner~\cite{kallsen.kruehner.13}, and B\"uhler~\cite{buehler.06}.

Due to its mathematical nature it is straightforward to transfer the HJM approach to 
commodity forward markets. This was first done by Clewlow and Strickland~\cite{CS},
and later analysed in the context of power markets by Benth and Koekebakker~\cite{BK}. 
However, both these works used forward curve models driven by finite dimensional 
Wiener noise (in fact, one-dimensional!). There are ample empirical evidence for a 
high degree of idiosyncratic risk in power forward markets, in particular. 
Andresen et al.~\cite{andresen.et.al.10} studied the Nordic power market NordPool, and found a clear correlation 
structure between contracts with different time to delivery. Earlier studies of
the same market by Koekebakker and Ollmar~\cite{KO} using principal component analysis also points towards a very high dimension of the noise. Moreover, marginal price behaviour is strongly non-Gaussian, as the study of Frestad, Benth and
Koekebakker~\cite{benth.et.al.10} clearly indicates.  These insights from real market prices 
motivates an HJM approach based on infinite dimensional non-Gaussian noise. 

The function spaces introduced by Filipovi\'c~\cite{filipovic.01} provide a natural framework to 
model the forward curve evolution. These spaces have been applied in fixed-income markets for the forward {\it rate} dynamics, cf.\ \cite{carmona.tehranchi.06}. Denoting by
$F(t,T)$ the forward price at time $t\leq T$ for a contract delivering a spot commodity at 
time $T$, the HJM approach gives the dynamics of $F$ as the solution of a stochastic differential equation
$$ 
dF(t,T) = \beta(t,T)dt + \Psi(t,T)dL(t),
$$
for  some infinite dimensional L\'evy process $L$ and appropriately defined parameters
$\beta$ and $\Psi$. Since the forward curve $F(t,T)_{T\geq t}$ changes its domain over
 time it is  more convenient to work with the Musiela parametrisation
$$ 
f(t,x) := F(t,t+x),\quad x\geq 0. 
$$
We interpret $x=T-t$ as {\it time to delivery}, whereas $T$ is {\it time of delivery}. Then,
heuristically speaking, the forward curve follows the stochastic partial differential equation (SPDE)
\begin{equation}
\label{eq:spde} 
df(t,x) = (\beta(t,t+x)+\partial_xf(t,x))dt + \Psi(t,t+x)dL(t).
\end{equation}
However, in order to make sense of this SPDE it is useful to work in an appropriate space of functions which contains the entire forward curve $(f(t,x))_{x\geq 0}$ for any time $t\geq0$. 
For this purpose we shall use the spaces introduced by Filipovi\'c~\cite{filipovic.01}.

We remark that the spot price of the commodity is recovered by passing to the limit
for $x\downarrow 0 $ in $f(t,x)$. Moreover, if we model the forward curve directly
under the pricing measure $Q$ which is the common strategy in the HJM approach, 
martingale conditions must be imposed in the dynamics. It is to be noted that the forward
contracts are liquidly traded financial assets in most commodity markets, and therefore it
may sometimes be natural to model the dynamics under the objective (market) probability $P$. 

In this paper we provide a detailed analysis of the forward curve dynamics. In particular,
we are able to derive various representations of the joint forward price dynamics for 
a given finite set of contracts, and relate this to finite-dimensional models. More
specifically, we recover the covariance structure from the covariance operator
of the infinite dimensional noise of the initially given forward curve dynamics. 
Furthermore, we prove that the spot price dynamics in special cases can be represented as
a sum of Ornstein-Uhlenbeck processes. Such mean-reverting processes are popular in
commodity markets to model the stationary evolution of prices, reflecting 
the direct influence of demand and supply (see Benth, \v{S}altyt\.{e} Benth and 
Koekebakker~\cite{BSBK-book} for a discussion and references). More generally, we obtain
a spot price dynamics in terms of so-called L\'evy semistationary processes, which have 
gained recent attention in power markets (see Garcia, Kl\"uppelberg and M\"uller ~\cite{GKM}
for the special case of CARMA processes, and Barndorff-Nielsen, Benth 
and Veraart~\cite{BNBV-spot}). These processes are also natural in modelling the temperature
dynamics, and thus are relevant to weather forward prices.   

We also make a detailed study of operators on the spaces introduced by Filipovi\'c~\cite{filipovic.01}. We have a particular attention on multiplication and integral 
operators, as these are natural objects when specifying $\Psi$ in 
\eqref{eq:spde}, and when defining the covariance operator associated with 
(square-integrable) infinite dimensional L\'evy processes. We give complete characterizations
of these operators, where we specifically show that all Hilbert-Schmidt operators
essentially are integral operators. In examples we link these operators to concrete 
models relevant for the forward dynamics in energy markets.  

In power and gas, and in fact also weather derivative markets, the forward contracts
are typically settled over a {\it delivery period} and not at a specific delivery time. For example, in the NordPool market, forward contracts are settled financially on the hourly spot 
price over different delivery periods that range from a day to a year. As an application of
our study of operators, we identify forward contracts with a delivery period as
an operator on elements in the space defined by Filipovi\'c~\cite{filipovic.01} mapping
onto itself. Hence, the analytic properties of a fixed-delivery forward curve can be transported
to a delivery-period forward curve by an operator, being in fact a sum of the identity operator and an explicitly given integral operator.

Our results are presented as follows: In the next section we find various representations for stochastic processes which are linearly mapped to finite dimensional spaces. Section three contains our main discussion on forward curve models for commodity markets where we derive various representations for the resulting spot price process, the correlation of forward price process and for the forward curve itself. In the last section we analyse integral operators, Hilbert Schmidt operators and multiplication operators on the space defined by Filipovi\'c~\cite{filipovic.01} in detail.

%
%
%
%
%

\subsection{Notation}
$\mathbb R$, resp.\ $\mathbb C$, denotes the real, resp.\ the complex numbers, and $\mathbb R_+:=[0,\infty)$ (resp.\ $\mathbb R_-:=(-\infty,0]$) the non-negative (resp.\ non-positive) real numbers. $\fraum$ will always denote a right-continuous filtered probability space. We denote the set of $p$-integrable functions from a measure space $(\Sigma,\mathcal C,\mu)$ to $\mathbb R$ by $L^p((\Sigma,\mu))$ and if there is no ambiguity with the measure, then we simply write $L^p(\Sigma)$ instead. For an operator $\mathcal{T}\in L(H_1,H_2)$,
$L(H_1,H_2)$ being the space of linear operators between the two Hilbert spaces $H_1$ and $H_2$, we denote the dual operator of $\mathcal{T}$ by $\mathcal{T}^*$. The set of positive opertors with finite trace on a Hilbert space $H$ is denote by $L^+_1(H)$. For separable Hilbert spaces $U$, $H$ and an $U$-valued square integrable  L\'evy process $L$ with martingale covariance $\Q$ relative to some filtered probability space we denote by $\mathcal{L}_{L,T}^2(H)$ the set of predictable $L(U,H)$-valued processes such that
$$ \int_0^T \tr\left( \Psi(s)\Q\Psi(s)^*) \right)ds<\infty,\quad t\geq 0, $$
and $\mathcal L_L^2(H) := \bigcap\{\mathcal L_{L,T}^2(H):T>0\}$, cf.\ Peszat and Zabczyk~\cite[formula (8.6)]{peszat.zabczyk.07} where $\tr$ denotes the trace of the operator.
Further unintroduced notations are used as in Peszat and Zabczyk~\cite{peszat.zabczyk.07}.

\section{Representation of functionals of stochastic integrals in a Hilbert space}
\label{section:gen-repr-integrals}

In this Section we derive some general results on the representation of linear functionals
of stochastic integrals in a Hilbert space. These results will be useful in
our analysis of the forward curve dynamics in Section~\ref{forward-sect}, but are interesting
in their own right as well.

Let us consider a stochastic integral $Y$ of the form
$$
Y(t) = \int_0^t\Psi(s)dW(s)\,,
$$
for some Brownian motion $W$ with values in a separable Hilbert space $U$ and an
integrable stochastic process $\Psi:\mathbb R_+\times U\mapsto H$, with $H$ a 
separable Hilbert space
(see Peszat and Zabczyk~\cite{peszat.zabczyk.07} for conditions). 
Sometimes one is only interested in one-dimensional marginals, i.e.\ one is interested in
$$
X(t) := \mathcal{T}(Y(t))\,,
$$
where $\mathcal{T}$ is a continuous linear functional on the state space $H$ of $Y$. 
Then, of course, we have
$$
X(t) =\int_0^t(\mathcal{T}\circ\Psi(s))dW(s)\,.
$$
If $U$ is finite dimensional, it is well-known that there is some standard (real-valued)
Brownian motion $B$ and some It\^o integrable stochastic process $\sigma$ such that
$$
X(t) = \int_0^t\sigma(s)dB(s)\,.
$$
The next Theorem shows that a similar representation holds if $U$ is any separable Hilbert space:

\begin{thm}\label{S:endl. Darstellung, stetig}
Let $n\in\mathbb N$ and $H,U$ be separable Hilbert spaces. Let $W$ be a square integrable and mean zero $U$-valued Wiener process with covariance $\Q\in L^+_1$. Assume that $\dim\ran(\Q) \geq n$ and $\Q$ is positive definite. Let $\Psi\in\mathcal L^2_{W}(H)$, 
$\mathcal{T}\in L(H,\mathbb R^n )$ and define
$$
X(t):= \mathcal{T}\left(\int_0^{t}\Psi(s)dW(s)\right)\,.
$$
Then there is an $n$-dimensional standard Brownian motion $B$ such that
$$
X(t) = \int_0^t\sigma(s)dB(s)\,,
$$
where $\sigma(s):=(\mathcal{T}\Psi(s)\Q\Psi(s)^*\mathcal{T}^*)^{1/2}\in\mathcal L^2_B(\mathbb R^{n})$. If $\sigma(s)$ is invertible in $\mathbb R^{n\times n}$ for $\lambda\otimes P$-almost any $s\in\mathbb R$, then $\sigma^{-1}\in\mathcal L^2_{X}(\mathbb R^n)$ and
$$ B(t) = \int_0^t(\sigma(s))^{-1}dX(s).$$
\end{thm}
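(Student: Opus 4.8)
The plan is to recognise $X$ as a continuous, square-integrable $\mathbb R^n$-valued martingale, to compute its matrix-valued quadratic variation, and then to read off both assertions from classical continuous-martingale theory. First I would set $\Phi(s):=\mathcal{T}\circ\Psi(s)\in L(U,\mathbb R^n)$ and note that $\Phi\in\mathcal L^2_W(\mathbb R^n)$, so that $X(t)=\int_0^t\Phi(s)\,dW(s)$ is a continuous square-integrable $\mathbb R^n$-valued martingale. Using the It\^o isometry for Hilbert-space valued stochastic integrals (Peszat and Zabczyk~\cite{peszat.zabczyk.07}), its componentwise quadratic covariations are $\langle X^i,X^j\rangle_t=\int_0^t(\mathcal{T}\Psi(s)\Q\Psi(s)^*\mathcal{T}^*)_{ij}\,ds$, so the $\mathbb R^{n\times n}$-valued quadratic variation is $\langle X\rangle_t=\int_0^t\sigma(s)^2\,ds$, where $\sigma(s)=(\mathcal{T}\Psi(s)\Q\Psi(s)^*\mathcal{T}^*)^{1/2}$ denotes the symmetric positive semidefinite square root. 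Since $\sigma$ is symmetric we have $\sigma\sigma^*=\sigma^2$, and $\int_0^t\tr(\sigma(s)\sigma(s)^*)\,ds=\int_0^t\tr(\mathcal{T}\Psi(s)\Q\Psi(s)^*\mathcal{T}^*)\,ds<\infty$ because $\Psi\in\mathcal L^2_W(H)$.

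For the existence statement I would invoke the representation theorem for continuous local martingales whose quadratic variation is absolutely continuous in time (see, e.g., Karatzas and Shreve, or Revuz and Yor): given that $\langle X\rangle_t=\int_0^t\sigma(s)\sigma(s)^*\,ds$ with $\sigma$ a predictable square root of the density, there exists, on a possibly enlarged filtered probability space, a standard $n$-dimensional Brownian motion $B$ with $X(t)=\int_0^t\sigma(s)\,dB(s)$. The integrability just established then reads $\sigma\in\mathcal L^2_B(\mathbb R^n)$, which guarantees that the right-hand side is a genuine square-integrable martingale matching $X$.

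For the second statement, assume $\sigma(s)$ is invertible in $\mathbb R^{n\times n}$ for $\lambda\otimes P$-almost every $s$. Because $\sigma(s)$ is symmetric, so is $\sigma(s)^{-1}$, and the integrability condition defining $\mathcal L^2_X(\mathbb R^n)$ reads $\int_0^t\tr(\sigma(s)^{-1}\sigma(s)^2(\sigma(s)^{-1})^*)\,ds=\int_0^t\tr(I_n)\,ds=nt<\infty$, so $\sigma^{-1}\in\mathcal L^2_X(\mathbb R^n)$ and $B(t):=\int_0^t\sigma(s)^{-1}\,dX(s)$ is a well-defined continuous $\mathbb R^n$-valued local martingale. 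Its quadratic variation is $\langle B\rangle_t=\int_0^t\sigma(s)^{-1}\sigma(s)^2(\sigma(s)^{-1})^*\,ds=t\,I_n$, so by L\'evy's characterisation $B$ is a standard $n$-dimensional Brownian motion. Associativity of the stochastic integral then gives $\int_0^t\sigma(s)\,dB(s)=\int_0^t\sigma(s)\sigma(s)^{-1}\,dX(s)=X(t)$, so this explicitly defined $B$ also furnishes the representation of the first part.

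The main obstacle I anticipate is the existence statement in the degenerate case: when $\sigma(s)$ is not invertible one cannot recover the driving noise from $X$, and one must appeal to the enlargement-of-probability-space version of the representation theorem, adjoining independent Brownian directions spanning $\ker\sigma(s)$ in such a way that they leave $X$ untouched. By contrast the invertible case is completely explicit via L\'evy's characterisation, as above. A routine point to verify along the way is the predictability of $s\mapsto\sigma(s)$ and $s\mapsto\sigma(s)^{-1}$, which follows from the continuity of the matrix square-root and inversion maps on the symmetric positive (definite) matrices together with the predictability of $\Psi$.
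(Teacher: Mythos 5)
Your proposal is mathematically sound and follows the same skeleton as the paper: write $X(t)=\int_0^t(\T\circ\Psi(s))\,dW(s)$, identify it as a continuous square-integrable $\mathbb R^n$-valued martingale with operator angle bracket $\int_0^t\T\Psi(s)\Q\Psi(s)^*\T^*\,ds$, and invoke a representation theorem; your treatment of the invertible case (the trace computation giving $nT<\infty$, hence $\sigma^{-1}\in\mathcal L^2_X(\mathbb R^n)$, and the recovery of $B$ by associativity of the stochastic integral) is essentially the paper's argument, differing only in logical order — the paper derives $B=\int_0^\cdot\sigma^{-1}(s)\,dX(s)$ from the already-established representation, whereas you construct this integral first and certify it by L\'evy's characterisation.

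The genuine divergence is the representation theorem used in the degenerate case, and it has a real consequence. You invoke the Karatzas--Shreve/Revuz--Yor version, which produces the Brownian motion only on an \emph{enlargement} of the probability space. The paper instead applies Jacod~\cite[Corollaire 14.47(b)]{jacod.79}, which yields $B$ on the original filtered space $\fraum$, provided that space already supports an $n$-dimensional standard Brownian motion — and this is precisely the role of the hypotheses $\dim\ran(\Q)\geq n$ and positive definiteness of $\Q$: projecting $W$ onto $n$ orthonormal eigenvectors of $\Q$ with nonzero eigenvalues and normalising produces such a Brownian motion on $\fraum$. Your proof never uses these two hypotheses, which is the tell-tale sign that it establishes a weaker conclusion. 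The distinction matters downstream: in the paper's subordinated-Brownian-motion theorem, the $B$ obtained here is time-changed by the subordinator $\Theta$ and recombined with processes living on the original space, so having $B$ adapted on $\fraum$ itself (not on an extension) is what lets that argument proceed without further modification. To repair your route while keeping its structure, use the rank and definiteness assumptions to exhibit an $n$-dimensional Brownian motion on $\fraum$ as above, and then appeal to a same-space representation theorem (Jacod's, as in the paper) rather than the enlargement version; alternatively, in the invertible case your explicit construction already lives on $\fraum$ and nothing more is needed.
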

\begin{proof}
 Peszat and Zabczyk~\cite[Theorem 8.7(v)]{peszat.zabczyk.07} yield
$$
X(t) = \int_0^{t}\Gamma(s)dW(s)\,,
$$
where $\Gamma(s) := \T\circ\Psi(s)$. 
Peszat and Zabczyk~\cite[Corollary 8.17]{peszat.zabczyk.07} imply
$$
\<\<X,X\>\>_t = \int_0^t \Gamma(s)\Q\Gamma(s)^*ds\,.
$$  
Corollary \ref{C:Pfadeigenschaft} yields that $X$ has a.s.\ continuous paths. The assumptions assure that there is an $n$-dimensional standard Brownian motion $B$ on the filtered probability space $\fraum$. Thus Jacod~\cite[Corollaire 14.47(b)]{jacod.79} yields that there is an $n$-dimensional standard Brownian motion $B$ such that
\begin{eqnarray}\label{e:alternativedarstellung}
X(t) = \int_0^t\sigma(s)dB(s)\,.
\end{eqnarray}

Now assume that $\sigma(s)$ is an invertible element of $\mathbb R^{n\times n}$ for $\lambda\otimes P$-almost any $s\in\mathbb R$, denote the matrix inverse of $\sigma(s)$ by $\gamma(s)$ and denote the identity matrix on $\mathbb R^n$ by $I_n$. The representation \eqref{e:alternativedarstellung} and Peszat and Zabczyk~\cite[Definition 8.3, Theorem 8.7(iv)]{peszat.zabczyk.07} yield that the martingale covariance $\Q^X$ of $X$ is given by 
$$\Q^X_t = 1_{\{\tr(\sigma(t)\sigma(t)^*)\neq 0\}}\frac{\sigma(t)\sigma(t)^*}{\tr(\sigma(t)\sigma(t)^*)},\quad t\geq0.$$
Then we have
$$ \E \left(\int_0^T \tr\left(\gamma(s)\Q_s^X\gamma(s)^*\right) d\<X,X\>(s) \right)  = nT <\infty$$
and hence $\gamma \in \mathcal L^2_X(\mathbb R^n)$ by definition, cf.
Peszat and Zabczyk~\cite[page 113]{peszat.zabczyk.07}.

Hence 
\begin{eqnarray*}
 B(t) &=& \int_0^t \sigma^{-1}(s)\sigma(s)dB(s) \\ 
 &=& \int_0^t\sigma^{-1}(s) dX(s).
\end{eqnarray*}
\end{proof}

Remark that if $n=1$, then there is a stochastic process $\gamma:=(\T\Psi)^*1$ such that $(\T\Psi)x = \<\gamma,x\>$ for any $x\in U$. If the set  
$$
\{(\omega,t)\in\Omega\times\mathbb R_+:\Q\gamma(\omega,t)=0\}
$$
is a $P\otimes\lambda$-null set, then $\varphi:=\frac{\<\gamma,\cdot\>}{\<\Q\gamma,\gamma\>^{1/2}}\in\mathcal L^2_W(U)$ and the standard Brownian motion given in the theorem above is also given by
$$
B(t)= \int_0^t\varphi(s)dW(s)\,.
$$
Moreover, then we have
$$
\T\left(\int_0^t\Psi(s)dW(s)\right)=\int_0^t\<\gamma(s),dW(s)\> = \int_0^t\<\Q\gamma(s),\gamma(s)\>^{1/2}dB(s)\,.
$$

We extend the result in Theorem~\ref{S:endl. Darstellung, stetig} to the class 
of L\'evy processes defined as subordinated Brownian motions in the next Subsection.

\subsection{The case of subordinated Brownian motion}
Let us recall from Benth and Kr\"uhner~\cite[Theorem 4.7]{benth.kruehner.13} the definition of a subordinated 
Brownian motion:
\begin{defn}
 A {\em subordinated Brownian motion} $L$ with values in some Hilbert space $U$ is a L\'evy process such that there is a $U$-valued Brownian motion $W$ and a subordinator $\Theta$ which is independent of $W$ such that
 $$L(t) = W(\Theta(t)).$$
 Subordinated Brownian motions $L$, $N$ are of the {\em same type} if there are Brownian motions $W_L$, $W_N$ and subordinators $\Theta_L$, $\Theta_N$ such that $W_N$, $\Theta_N$ are independent, $W_L,\Theta_L$ are independent, $\Theta_L$, $\Theta_N$ have the same law and
 \begin{eqnarray*}
   L(t) &=& W_L(\Theta_L(t)) \\
   N(t) &=& W_N(\Theta_N(t))
 \end{eqnarray*}
 for any $t\in\mathbb R_+$.
\end{defn}

Subordinated Brownian motions have some similarities with Brownian motion if the subordinator has finite first moment. In particular, the set of integrands can be compared 
easily. First, we recall the notion of time changed filtrations:
\begin{defn}
 A {\em time change} is a right-continuous increasing family $(\theta(t))_{t\in\mathbb R_+}$ of stopping times with respect to some right-continuous filtration $(\mathcal F_t)_{t\in\mathbb R_+}$. The {\em time changed filtration} is the filtration given by
 $$\mathcal F^\theta_t := \mathcal F_{\theta(t)} = \bigcap_{s>t} \mathcal F_{\theta(s)}\quad t\in\mathbb R_+.$$
 Let $X$ be an $\mathcal F$-adapted stochastic process and assume that the time-change is finite valued. The {\em time-changed} process is given by $X^\theta(t):=X(\theta(t))$, $t\in\mathbb R_+$.
\end{defn}
We have the following result on stochastic integration with respect to subordinated 
Brownian motions.
\begin{lem}\label{l:Zeittransformiertes integral}
 Let $W$ be a mean-zero Brownian motion with values in a separable Hilbert space $U$ relative to some filtration $(\mathcal F_t)_{t\in\mathbb R_+}$, $H$ be another separable Hilbert space, $\psi\in\mathcal L^2_L(\mathcal G,H)$ a measurable, $\Theta$ be a non-zero subordinator with finite moment such that $\Theta(t)$ is a stopping time for each $t\in\mathbb R_+$. Let $(\mathcal G_t)_{t\in\mathbb R_+}$ be the time-changed filtration given by $\mathcal G_t:=\mathcal F_{\Theta(t)}$ and $L(t):=W(\Theta(t))$, $t\in\mathbb R_+$. Then $L$ is a $U$-valued square integrable L\'evy process and there is an isometric embedding
 $$\Gamma:\mathcal L^2_L(\mathcal G,H)\rightarrow \mathcal L_W^2(\mathcal F,H)$$
 such that
 \begin{equation}\label{e:Gamma psi 1}
 \int_0^t \psi(s) dL(s) = \int_0^{\Theta(t)} \Gamma(\psi)(s) dW(s)\quad P\text{-a.s.},
\end{equation}
 where the left stochastic integral is with respect to the filtration $(\mathcal G_t)_{t\in\mathbb R_+}$, the right stochastic integral is with respect to the filtration $(\mathcal F_t)_{t\in\mathbb R_+}$.
\end{lem}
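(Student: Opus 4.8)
The plan is to regard $L=W\circ\Theta$ as a time-changed Brownian motion and to transport stochastic integrals against $L$ back to integrals against $W$ via the inverse time change. First I would check that $L$ is a square integrable $U$-valued L\'evy process: that $L$ is L\'evy is the content of the subordination construction recalled in Benth and Kr\"uhner~\cite[Theorem 4.7]{benth.kruehner.13}, while square integrability follows from the finite-moment hypothesis on $\Theta$, since conditioning on $\Theta$ gives $\E\|L(t)\|^2=\E[\Theta(t)]\,\tr(\Q_W)<\infty$, where $\Q_W$ is the covariance operator of $W(1)$. The same conditioning identifies the covariance operator of $L(1)$ as $\Q_L=\E[\Theta(1)]\,\Q_W$, and shows that the quadratic variation satisfies $\<\<L,L\>\>_t=\<\<W,W\>\>_{\Theta(t)}$; I would record both facts for use below.

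Next I would construct the embedding. Let $\Theta^{-1}(s):=\inf\{t\geq0:\Theta(t)\geq s\}$ be the right-continuous inverse of the time change and set $\Gamma(\psi)(s):=\psi(\Theta^{-1}(s))$. Equivalently, $\Gamma(\psi)$ is the image of $\psi$ under $\Theta^{-1}$, extended so as to be constant, with value $\psi(t)$, on each interval $(\Theta(t-),\Theta(t)]$ skipped by a jump of $\Theta$ at time $t$. The first genuinely technical point is to verify that $\Gamma(\psi)$ is $(\mathcal F_t)$-predictable whenever $\psi$ is $(\mathcal G_t)$-predictable. Here one uses the compatibility $\mathcal G_t=\mathcal F_{\Theta(t)}$ of the two filtrations under the time change: the claim is checked first on the generating class of left-continuous step integrands $\psi=\sum_j\xi_j\mathbf 1_{(a_j,b_j]}$ with $\xi_j$ being $\mathcal G_{a_j}$-measurable --- whose image $\Gamma(\psi)=\sum_j\xi_j\mathbf 1_{(\Theta(a_j),\Theta(b_j)]}$ is manifestly $(\mathcal F_t)$-predictable because each $\Theta(a_j)$ is a stopping time and $\xi_j$ is $\mathcal F_{\Theta(a_j)}$-measurable --- and then extended to all predictable $\psi$ by a monotone-class argument.

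Then I would establish the isometry. Writing the $\mathcal L^2_W$-norm of $\Gamma(\psi)$ over the random horizon $\Theta(t)$ (justified by optional stopping, since $\Theta(t)$ is an $(\mathcal F_s)$-stopping time and $\Gamma(\psi)$ is $(\mathcal F_s)$-predictable) and changing variables $u=\Theta(s)$ collapses each skipped interval $(\Theta(s-),\Theta(s)]$, on which $\Gamma(\psi)$ equals the constant $\psi(s)$, to the single time $s$ carrying the atom $\Delta\Theta(s)$ of $d\Theta$; this yields $\E\int_0^{\Theta(t)}\tr(\Gamma(\psi)(u)\Q_W\Gamma(\psi)(u)^*)\,du=\E\int_0^t\tr(\psi(s)\Q_W\psi(s)^*)\,d\Theta(s)$. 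Since $\Theta$ has finite first moment its compensator is $\E[\Theta(1)]\,ds$, so the right-hand side equals $\E[\Theta(1)]\,\E\int_0^t\tr(\psi(s)\Q_W\psi(s)^*)\,ds$, which by $\Q_L=\E[\Theta(1)]\Q_W$ is exactly $\E\int_0^t\tr(\psi(s)\Q_L\psi(s)^*)\,ds$, the $\mathcal L^2_L$-norm of $\psi$. Hence $\Gamma$ is norm preserving, and in particular $\Gamma(\psi)\in\mathcal L^2_W(\mathcal F,H)$ and $\Gamma$ is an isometric embedding.

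Finally I would prove the identity~\eqref{e:Gamma psi 1}. For a step integrand $\psi=\sum_j\xi_j\mathbf 1_{(a_j,b_j]}$ as above, both sides reduce to the same telescoping sum $\sum_j\xi_j\big(W(\Theta(t)\wedge\Theta(b_j))-W(\Theta(t)\wedge\Theta(a_j))\big)$, using $\Theta(t)\wedge\Theta(b_j)=\Theta(t\wedge b_j)$ together with $\Gamma(\mathbf 1_{(a_j,b_j]})=\mathbf 1_{(\Theta(a_j),\Theta(b_j)]}$; thus \eqref{e:Gamma psi 1} holds for step integrands. The general case follows by approximating $\psi$ in $\mathcal L^2_L(\mathcal G,H)$ by step integrands and passing to the limit on both sides, the left limit being controlled by the It\^o isometry for $L$ and the right limit by the It\^o isometry for $W$ combined with the norm preservation of $\Gamma$. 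I expect the main obstacle to be precisely the bookkeeping at the jumps of $\Theta$: defining $\Gamma(\psi)$ on the skipped intervals so that it is simultaneously $(\mathcal F_t)$-predictable, norm preserving, and compatible with the integral identity across the discontinuities of the time change; the transfer of predictability between the time-changed filtration $(\mathcal G_t)$ and the original filtration $(\mathcal F_t)$ is the subtle technical heart of the argument.
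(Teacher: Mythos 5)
Your proposal is correct in substance, and at the decisive combinatorial point it coincides with the paper's proof: elementary integrands $\psi=\sum_j Y_j 1_{(a_j,b_j]}\phi_j$ with $Y_j$ being $\mathcal G_{a_j}$-measurable are sent to $\sum_j Y_j 1_{(\Theta(a_j),\Theta(b_j)]}\phi_j$, the identity \eqref{e:Gamma psi 1} is verified there by the same telescoping computation (via $\Theta(t)\wedge\Theta(b_j)=\Theta(t\wedge b_j)$), and the general case follows by density. Where you genuinely differ is in the construction of $\Gamma$ and the proof of the isometry. The paper never writes $\Gamma$ beyond elementary integrands: it proves the integral identity for elementary $\psi$, deduces the isometry there by combining the It\^o isometry of $L$ (with respect to $\mathcal G$) with that of $W$ (with respect to $\mathcal F$), and then defines $\Gamma$ on all of $\mathcal L^2_L(\mathcal G,H)$ by isometric extension, so that predictability of $\Gamma(\psi)$ is automatic. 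You instead give the closed-form $\Gamma(\psi)=\psi\circ\Theta^{-1}$, check $\mathcal F$-predictability by a monotone-class argument, and prove the isometry intrinsically by the image-measure change of variables $u=\Theta(s)$. Your route buys an explicit description of $\Gamma$ for \emph{all} integrands — which makes properties like $\Gamma(\T\psi)=\T\Gamma(\psi)$, needed later in Theorem~\ref{S:endl. Darstellung, subordiniert}, transparent rather than only verifiable on elementary integrands — at the price of extra predictability and change-of-variables bookkeeping that the abstract extension avoids.

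One step of yours needs a better justification than the one you give: the assertion that the compensator of $\Theta$ is $\E[\Theta(1)]\,ds$, so that
\begin{equation*}
\E\int_0^t \tr\left(\psi(s)\Q^W\psi(s)^*\right)d\Theta(s)=\E[\Theta(1)]\,\E\int_0^t \tr\left(\psi(s)\Q^W\psi(s)^*\right)ds .
\end{equation*}
The compensator in question must be taken relative to the time-changed filtration $(\mathcal G_t)$, and this identity for $\mathcal G$-predictable integrands requires that the increments of $\Theta$ be independent of $\mathcal G_t=\mathcal F_{\Theta(t)}$, i.e.\ that $\Theta$ remains a L\'evy subordinator in the filtration $\mathcal G$. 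Finiteness of the first moment alone does not give this: if $\mathcal G_0$ carried information about future increments of $\Theta$ (which is compatible with each $\Theta(t)$ being an $\mathcal F$-stopping time), the identity would fail, e.g.\ for $\psi$ multiplied by $1_{\{\Theta(1)>c\}}$. This independence is precisely the analogue, for $\Theta$, of the property you and the paper cite from Benth and Kr\"uhner~\cite{benth.kruehner.13} for $L$; it holds in the intended setup and is proved the same way, but it must be invoked as such, not derived from the moment assumption. Alternatively, you can bypass the compensator altogether: your final paragraph already contains the paper's own argument, since on elementary integrands the isometry follows from the telescoping identity together with the two It\^o isometries, and then extends by density. (A minor slip, inconsequential since you never use it: $\<\<L,L\>\>_t=\<\<W,W\>\>_{\Theta(t)}$ holds for the quadratic variation, whereas the predictable operator angle bracket of $L$ in the sense of Peszat and Zabczyk is the deterministic process $t\,\E[\Theta(1)]\,\Q^W$.)
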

\begin{proof}
 $\Theta$ is a time-change on the filtration $(\mathcal F_t)_{t\in\mathbb R_+}$ by definition. Benth and Kr\"uhner~\cite[Theorem 4.7]{benth.kruehner.13} yields that $L$ is a L\'evy process in its own right, however its increments are independent of the time-changed filtration and thus it is L\'evy relative to the filtration $(\mathcal G_t)_{t\in\mathbb R_+}$. In order to establish the isometric embedding $\Gamma$ it is sufficient to work with elementary integrands. Let $\psi\in \mathcal L^2_L(\mathcal G,H)$ be elementary, i.e.\ there is $n\in\mathbb N$, $0\leq a_j\leq b_j<\infty$, $\mathcal G_{a_j}$-measurable square integrable random variabes $Y_j$ and $\phi_j\in L(U,H)$ such that
 $$\psi = \sum_{j=1}^n Y_j 1_{]a_j,b_j]}\phi_j.$$
  Define $\psi_\Theta := \sum_{j=1}^n Y_j 1_{]\Theta(a_j),\Theta(b_j)]}\phi_j$. Observe that $\psi_\Theta$ does not depend on the representation of $\psi$. 
 We have $\psi_\Theta\in \mathcal L_W^2(\mathcal F,H)$ and 
 \begin{align*}
   \int_0^t\psi(s) dL(s) &= \sum_{j=1}^n Y_j \phi_j\left(L(t\wedge b_j)-L(t\wedge a_j)\right) \\
                         &= \sum_{j=1}^n Y_j \phi_j\left(W(\Theta(t)\wedge \Theta(b_j))-W(\Theta(t)\wedge \Theta(a_j))\right) \\
                         &= \int_0^{\Theta(t)} \psi_\Theta(s) dW(s).
 \end{align*}
 Moreover, we have
 \begin{align*}
  \int_0^\infty \E(\<\Q^L\psi(s),\psi(s)\>) d\<L,L\>(s) &= \E\left(\left(\int_0^\infty\psi(s)dL(s)\right)^2\right) \\
                  &=  \E\left(\left(\lim_{n\rightarrow\infty}\int_0^{\Theta(n)} \psi_\Theta(s) dW(s)\right)^2\right) \\
                  &=  \E\left(\left(\int_0^{\infty} \psi_\Theta(s) dW(s)\right)^2\right) \\
                  &= \int_0^\infty \E(\<\Q^W\psi_\Theta(s),\psi_\Theta(s)\>) d\<W,W\>(s)
 \end{align*}
 where $\Q^L$ denotes the martingale covariance of $L$ and 
$\Q^W$ denotes the martingale covariance of $W$.

 From the construction we can see that Equation \eqref{e:Gamma psi 1} holds for elementary integrands, and thus for all integrands by a density argument.
%
\end{proof}
By appealing to this Lemma we can derive a connection between functionals of the
infinite dimensional stochastic integral and finite dimensional versions of it. 
\begin{thm}\label{S:endl. Darstellung, subordiniert}
Let $n\in\mathbb N$ and $H,U$ be separable Hilbert spaces. Let $L$ be a non-zero, square integrable mean zero $U$-valued subordinated Brownian motion with martingale covariance $\Q\in L^+_1$. Assume that $\dim\ran(\Q) \geq n$. Let $\Psi\in\mathcal L^2_{L}(H)$, $\T\in L(H,\mathbb R^n )$ and define
$$
X(t) := \T\left(\int_0^{t}\Psi(s)dL(s)\right)\,.
$$
Then there is an $n$-dimensional square integrable mean-zero L\'evy process $N$ such that
$$X(t) = \int_0^t\sigma(s)dN(s)$$
where $\sigma(s):=(\T\Psi(s)\Q\Psi(s)^*\T^*)^{1/2}\in\mathcal L^2_N(\mathbb R^{n})$. If $\sigma(s)$ is invertible in $\mathbb R^{n\times n}$ for $\lambda\otimes P$-almost any $s\in\mathbb R$, then $\sigma^{-1}\in\mathcal L^2_{X}(\mathbb R^n)$ and
$$ N(t) = \int_0^t(\sigma(s))^{-1}dX(s).$$
\end{thm}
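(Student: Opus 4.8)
The plan is to reduce the assertion to the already proven Gaussian case, Theorem~\ref{S:endl. Darstellung, stetig}, by means of the time-change of Lemma~\ref{l:Zeittransformiertes integral}, and then to subordinate the resulting $n$-dimensional Brownian motion to obtain the driving L\'evy process $N$. Write $L(t)=W(\Theta(t))$ for a mean-zero $U$-valued Brownian motion $W$ relative to a right-continuous filtration $(\F_t)_{t\in\mathbb R_+}$ and a non-zero subordinator $\Theta$, with $\G_t=\F_{\Theta(t)}$ the filtration of $L$; since $L$ is square integrable, $\Theta$ has finite first moment, so the hypotheses of Lemma~\ref{l:Zeittransformiertes integral} hold. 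Note that $\Q=\mathrm{Cov}(L(1))=\E[\Theta(1)]\,\mathrm{Cov}(W(1))$, so $\mathrm{Cov}(W(1))=\E[\Theta(1)]^{-1}\Q$ has range equal to $\ran(\Q)$, of dimension $\geq n$. Lemma~\ref{l:Zeittransformiertes integral} provides $\Gamma(\Psi)\in\mathcal L^2_W(\F,H)$ with
$$\int_0^t\Psi(s)\,dL(s)=\int_0^{\Theta(t)}\Gamma(\Psi)(s)\,dW(s)\quad P\text{-a.s.};$$
restricting $W$ to the space $U_0:=\overline{\ran(\Q)}$, which has dimension $\geq n$ and on which the covariance of $W$ is positive definite, Theorem~\ref{S:endl. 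Darstellung, stetig} applies to the Brownian integral $u\mapsto\T\big(\int_0^u\Gamma(\Psi)(s)\,dW(s)\big)$.

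This yields an $n$-dimensional standard Brownian motion $B$ on $\fraum$, adapted to $(\F_t)$, with
$$\T\Big(\int_0^u\Gamma(\Psi)(s)\,dW(s)\Big)=\int_0^u\tilde\sigma(s)\,dB(s),\qquad \tilde\sigma:=\big(\T\,\Gamma(\Psi)\,\mathrm{Cov}(W(1))\,\Gamma(\Psi)^*\T^*\big)^{1/2},$$
whence $X(t)=\int_0^{\Theta(t)}\tilde\sigma(s)\,dB(s)$. Define $N(t):=B(\Theta(t))$, rescaled by $\E[\Theta(1)]^{-1/2}$ so that $\mathrm{Cov}(N(1))$ is normalised consistently with $\Q=\mathrm{Cov}(L(1))$. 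Since the $\Theta(t)$ are $\F$-stopping times and $B$ is an $\F$-Brownian motion, Benth and Kr\"uhner~\cite[Theorem~4.7]{benth.kruehner.13} shows that $N$ is an $n$-dimensional, square integrable, mean-zero L\'evy process relative to $(\G_t)$; in particular the triple $(B,N,\Theta)$ again meets the hypotheses of Lemma~\ref{l:Zeittransformiertes integral}.

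Applying Lemma~\ref{l:Zeittransformiertes integral} to $(B,N,\Theta)$ gives, for every $\rho\in\mathcal L^2_N(\G,\mathbb R^n)$, the identity $\int_0^t\rho(s)\,dN(s)=\int_0^{\Theta(t)}\Gamma'(\rho)(s)\,dB(s)$, where the embedding $\Gamma'$ time-changes elementary integrands $\sum_j Y_j 1_{]a_j,b_j]}\phi_j$ into $\sum_j Y_j 1_{]\Theta(a_j),\Theta(b_j)]}\phi_j$. Because $\sigma$ and $\tilde\sigma$ arise from $\Psi$ and from its time-change $\Gamma(\Psi)$ by one and the same pointwise operation $\Phi\mapsto(\T\Phi\,C\,\Phi^*\T^*)^{1/2}$ (with $C=\Q$, resp.\ $C=\mathrm{Cov}(W(1))$), and such a pointwise operation commutes with the time-change $\Gamma'$, the normalisation of $N$ gives $\Gamma'(\sigma)=\tilde\sigma$ for $\sigma=(\T\Psi\Q\Psi^*\T^*)^{1/2}$; this is immediate for elementary $\Psi$ and extends by the density and isometry statements of Lemma~\ref{l:Zeittransformiertes integral}. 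Hence $X(t)=\int_0^{\Theta(t)}\Gamma'(\sigma)(s)\,dB(s)=\int_0^t\sigma(s)\,dN(s)$ and $\sigma\in\mathcal L^2_N(\mathbb R^n)$ by the isometry of $\Gamma'$. For the final statement, suppose $\sigma(s)$ is a.e.\ invertible; arguing exactly as in the proof of Theorem~\ref{S:endl. Darstellung, stetig}, we determine the martingale covariance $\Q^X$ of $X$, check that $\gamma:=\sigma^{-1}$ satisfies $\E(\int_0^T\tr(\gamma\,\Q^X_s\,\gamma^*)\,d\<X,X\>(s))<\infty$ so that $\gamma\in\mathcal L^2_X(\mathbb R^n)$, and conclude $N(t)=\int_0^t\sigma^{-1}(s)\,\sigma(s)\,dN(s)=\int_0^t\sigma^{-1}(s)\,dX(s)$.

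The crux, and the step demanding the most care, is the identification $\Gamma'(\sigma)=\tilde\sigma$: one must verify that the diffusion coefficient produced by the Gaussian theorem on the $W$-time scale is exactly the $\Theta$-time-change of the claimed coefficient $\sigma$ on the $N$-time scale. This hinges on reconciling the covariance of $L$ with that of its driving Brownian motion $W$ through the mean $\E[\Theta(1)]$ of the subordinator --- the factor absorbed into the definition of $N$ --- together with the compatibility of the nonlinear map $A\mapsto A^{1/2}$ with the piecewise time-change $\Gamma'$. Both are transparent on elementary integrands and pass to the limit by the density and isometry of $\Gamma'$, but keeping the normalisations consistent across the two time scales is where an error would most easily creep in.
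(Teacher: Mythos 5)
Your proof is correct and follows essentially the same route as the paper's: time-change the integral to the Brownian scale via Lemma~\ref{l:Zeittransformiertes integral}, invoke Theorem~\ref{S:endl. Darstellung, stetig}, set $N=B(\Theta(\cdot))$ (the paper cites Sato~\cite[Theorem 30.1]{sato.99} rather than Benth--Kr\"uhner for the L\'evy property of $N$), and return to the $N$-time scale using exactly the commutation of the pointwise map $\Phi\mapsto(\T\Phi C\Phi^*\T^*)^{1/2}$ with the time-change embedding, which the paper likewise verifies on elementary integrands and extends by density. The only substantive difference is bookkeeping: the paper absorbs the subordinator's mean by observing that the martingale covariances of $L$ and $W$ coincide (so no rescaling of $N$ is needed), whereas you rescale $N$ by $\E[\Theta(1)]^{-1/2}$ after reading $\Q$ as $\mathrm{Cov}(L(1))$ --- both conventions yield the asserted representation, since the theorem leaves $N$ unnormalised, and your explicit restriction to $\overline{\ran(\Q)}$ to meet the positive-definiteness hypothesis of Theorem~\ref{S:endl. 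Darstellung, stetig} is a point the paper's own proof passes over silently.
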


\begin{proof}
First we assume that $\Psi$ is elementary, i.e.\ there are $n\in\mathbb N$, $0\leq a_j\leq b_j<\infty$, $\mathcal F_{a_j}$-measurable square integrable random variabes $Y_j$ and $\phi_j\in L(U,H)$ such that
 $$\Psi = \sum_{j=1}^n Y_j 1_{]a_j,b_j]}\phi_j.$$

By definition of $L$ we have $L(t)=W(\Theta(t))$ for a $U$-valued Wiener process
$W$ with martingale covariance $\Q^W$ and a subordinator $\Theta$. Let $\Gamma$ be the isometric embedding given in Lemma \ref{l:Zeittransformiertes integral}. Note that $\Gamma(\T\Psi) = \T\Gamma(\Psi)$ because this holds if $\Psi$ is elementary. 
Peszat and Zabczyk~\cite[Theorem 8.7(v)]{peszat.zabczyk.07}, Lemma \ref{l:Zeittransformiertes integral} and Theorem \ref{S:endl. Darstellung, stetig} yield
 \begin{align*}
  \T\left(\int_0^t \Psi(s) dL(s)\right) &= \int_0^t \T\Psi(s) dL(s) \\
                          &= \int_0^{\Theta(t)} \Gamma(\T\Psi)(s) dW(s) \\
                          &= \int_0^{\Theta(t)} \T\Gamma(\Psi)(s) dW(s) \\
                          &= \int_0^{\Theta(t)} \sigma_1(s) dB(s)
 \end{align*}
where $B$ is a standard Brownian motion on $\mathbb R^n$ and 
\begin{align*}
 \sigma_1(t)  &= \left(\T\Gamma(\Psi)(t)\Q^W\Gamma(\Psi)^*(t)\T^*\right)^{1/2}. \\
            &= \Gamma\left(\left(\T\Psi\Q^W\Psi^*\T^*\right)^{1/2}\right)(t)
\end{align*}
Applying Lemma \ref{l:Zeittransformiertes integral} again yields
$$ \T\left(\int_0^t \Psi(s) dL(s)\right) = \int_0^t\sigma(s)dN(s) $$
where $N(t):=B(\Theta(t))$, $t\in\mathbb R_+$ and
 $$ \sigma(t) = \left(\T\Psi(t)\Q^W\Psi(t)^*\T^*\right)^{1/2}. $$
Sato~\cite[Theorem 30.1]{sato.99} implies that $N$ is an $n$-dimensional
L\'evy process with the desired properties. Observe that the martingale covariance $\Q^L$ of $L$ and $\Q^W$ coincide.

The general case can be deduced by a density argument.
\end{proof}
We consider an example: Let $\Theta$ be an inverse Gaussian subordinator, that is,
a L\'evy process with increasing paths where $\Theta(1)$ is inverse Gaussian
distributed. Then, $L(t)=W(\Theta(t))$ is a
Hilbert-space valued normal inverse Gaussian (HNIG) L\'evy process in the sense of Benth and Kr\"uhner~\cite[Definition 4.1]{benth.kruehner.13}. The L\'evy process $N$ in the theorem above becomes an $n$-variate normal inverse Gaussian (MNIG) L\'evy process (see 
Rydberg~\cite{R}). In particular, in the case $n=1$ it is an ordinary NIG L\'evy process
(see Barndorff-Nielsen~\cite{BN}). NIG L\'evy processes are frequently used for modelling asset prices in various financial markets, including commodities (see B\"orger et al.~\cite{BCKS}) and power (Andresen, Koekebakker and Westgaard \cite{andresen.et.al.10}, and Benth, Frestad, Koekebakker \cite{benth.et.al.10}). HNIG L\'evy processes are relevant in
the context of modelling the dynamics of the power forward curve.


After these general considerations, we shall now focus on the dynamics of 
forward prices and various representations in the remainder of the paper.


\section{Forward curve modelling and representations}
\label{forward-sect}

We are interested in studying the dynamics of the forward price $F(t,T)$, $0\leq t\leq T$, 
of a contract delivering a commodity at time $T>0$. We shall mostly be concerned with the
more convenient representation of $F$ in terms of the Musiela parametrisation, where we 
let
$$
F(t,T)=f(t,T-t)\,,
$$
for a function $f(t,x)$, $x\geq 0$, denoting the forward price at time $t$ for a 
contract with {\it time to delivery} $x$. We shall interpret $f$ as a
stochastic process with values in a space of functions on $\mathbb R_+$.   

More specifically, we consider the Hilbert spaces $H_w$ introduced by 
Filipovi\'c~\cite[chapter 5]{filipovic.01}. These Hilbert-spaces are 
suitable for modelling the {\it forward rate} curves in the Musiela parametrisation, see 
e.g.~Filipovi\'c, Teichmann and Tappe~\cite{filipovic.et.al.09}. As the modelling of forward prices are similar in
idea to the context of forward rates in fixed-income markets, we adopt the spaces
$H_w$ for our analysis. 

\subsection{Some preliminary results on $H_w$}
Let us introduce and recall some basic facts of the spaces $H_w$ before turning our 
attention to forward curves and their dynamics. We remark that the following results on
$H_w$ can be essentially found in Filipovi\'c~\cite[chapters 4,5]{filipovic.01}. However, some of the results
are slight modifications where proofs have been added.

Let $\mathrm{AC}(\mathbb R_+,\mathbb R)$ be the set of functions $g:\mathbb R_+\rightarrow\mathbb R$ which are absolutely continuous. For a function $w:\mathbb R_+\rightarrow [1,\infty)$ being continuous, increasing and $w(0)=1$, define
$$H_{w}:=\left\{g\in \mathrm{AC}(\mathbb R_+,\mathbb R):\int_{0}^\infty w(x)g'(x)^2dx<\infty\right\}.$$
We introduce the inner product $\<g,h\>:=g(0)h(0)+\int_0^\infty w(x)g'(x)h'(x)dx$ and the corresponding norm $\Vert g\Vert_w := \sqrt{\<g,g\>}$ for any $g,h\in H_{w}$. Note that all the following results also hold if $\mathrm{AC}(\mathbb R_+,\mathbb R)$ is replaced by $\mathrm{AC}(\mathbb R_+,\mathbb C)$ which however has no financial meaning.




We observe that the point evaluation $\delta_x:H_w\rightarrow \mathbb R$ for any 
$x\in\mathbb R$, where $\delta_x(g)=g(x)$, is a continuous linear functional on $H_w$. Its dual operator can be explicitly characterized, as is shown in the following Lemma:
\begin{lem}\label{l:stetigkeit der Punktauswertung}
For $x\in\mathbb R$, define 
$$h_x:\mathbb R_+\rightarrow\mathbb R,y\mapsto 1+\int_0^{y\wedge x}\frac{1}{w(s)}ds$$
Then the dual operator of $\delta_x$ is given by
$$\delta_x^*:\mathbb R\rightarrow H_w,c\mapsto ch_x$$
and its operator norm is given by
$$\Vert\delta_x\Vert^2_{\op} = h_x(x).$$
\end{lem}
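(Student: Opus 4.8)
The plan is to recognise $\delta_x^*$ through the Riesz representation of the functional $\delta_x$. Since $\delta_x\in L(H_w,\mathbb R)$, its dual operator $\delta_x^*\colon\mathbb R\to H_w$ is characterised by $\<\delta_x^*c,g\>=c\,\delta_x(g)=c\,g(x)$ for every $g\in H_w$ and $c\in\mathbb R$. By linearity it therefore suffices to exhibit the single element $v\in H_w$ with $\<v,g\>=g(x)$ for all $g$; then necessarily $\delta_x^*c=cv$, and the operator norm of a functional equals the norm of its representative, so $\Vert\delta_x\Vert_{\op}=\Vert v\Vert_w$. I claim that $v=h_x$.

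First I would check that $h_x$ really lies in $H_w$. From the definition $h_x(y)=1+\int_0^{y\wedge x}w(s)^{-1}\,ds$ one reads off $h_x(0)=1$ and $h_x'(y)=w(y)^{-1}1_{[0,x]}(y)$ for almost every $y$, so that
$$\int_0^\infty w(y)h_x'(y)^2\,dy=\int_0^x\frac{1}{w(y)}\,dy\le x<\infty,$$
using $w\ge 1$; hence $h_x\in H_w$. Next I would compute the inner product against an arbitrary $g\in H_w$:
$$\<h_x,g\>=h_x(0)g(0)+\int_0^\infty w(y)h_x'(y)g'(y)\,dy=g(0)+\int_0^x g'(y)\,dy=g(x),$$
where the weight $w$ cancels exactly against $1/w$, and the last equality is the fundamental theorem of calculus, valid because $g$ is absolutely continuous. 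This establishes the representation $\<h_x,g\>=\delta_x(g)$ for all $g\in H_w$.

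With the representation in hand the two assertions follow at once. Uniqueness of the Riesz representative gives $\delta_x^*c=ch_x$, which is the first claim. For the norm I would simply insert $g=h_x$ into the identity $\<h_x,g\>=g(x)$, obtaining
$$\Vert\delta_x\Vert_{\op}^2=\Vert h_x\Vert_w^2=\<h_x,h_x\>=h_x(x)=1+\int_0^x\frac{1}{w(s)}\,ds,$$
as required.

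There is no serious obstacle here; the argument is a direct Riesz-representation computation. The only points demanding a little care are the identification of the weak derivative $h_x'=w^{-1}1_{[0,x]}$, which is what makes the weight cancel in the inner product, and the appeal to the fundamental theorem of calculus, which is precisely the reason the space $H_w$ is built from absolutely continuous functions. One should also fix a convention for $x\le 0$ (reading $g(x)$ as $g(0)$, so that $h_x$ degenerates to the constant function $1$ and $\Vert\delta_x\Vert_{\op}^2=1=h_x(x)$), but this case is trivial.
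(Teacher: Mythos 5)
Your proposal is correct and follows essentially the same route as the paper: the paper's proof simply cites Filipovi\'c's Lemma 5.3.1 for the identity $\delta_x^*c=ch_x$ (which is exactly the Riesz-representation computation you carry out explicitly) and then evaluates $\Vert h_x\Vert_w^2$ by writing out the integral, whereas you obtain the same value via the reproducing identity $\<h_x,h_x\>=h_x(x)$. The only difference is that your argument is self-contained where the paper delegates the key step to a reference.
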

\begin{proof}
The first statement is a special case of Filipovi\'c~\cite[Lemma 5.3.1]{filipovic.01}.
Next,
since the operator norm coincides with the norm of $h_x$ we get
\begin{eqnarray*}
 \Vert \delta_x\Vert^2_{\op} &=& h_x(0)h_x(0) + \int_0^\infty w(y) (h'_x(y))^2 dy 
                             = 1 + \int_0^x \frac{1}{w(y)}dy
                             = h_x(x)\,.
\end{eqnarray*}
Hence, the Lemma follows.
\end{proof}

Under the additional assumption
$$\int_0^\infty\frac{1}{w(x)}dx<\infty$$
the functions in $H_w$ have a continuous continuation to infinity and the 'point evaluation at infinity' corresponds to taking the scalar product with the function
\begin{equation}
\label{def:h_infinity}
h_\infty:\mathbb R_+\rightarrow\mathbb R,x\mapsto 1 + \int_0^x \frac{1}{w(s)} ds\,.
\end{equation}
These facts are summarised in the following Lemma.
\begin{lem}\label{l:punktauswertung bei undendlich}
 Assume that $\int_0^\infty\frac{1}{w(x)}dx<\infty$. Then
 $$ \Vert g\Vert_\infty:= \sup_{x\in\mathbb R_+}\vert g(x)\vert \leq c\Vert g\Vert_w.$$
 where $c := \sqrt{1+\int_0^\infty\frac{1}{w(x)}dx}$. Moreover, $h_\infty\in H_w$ and
 $$ \lim_{x\rightarrow\infty} g(x) = \< h_\infty,g\>.$$
 for any $g\in H_w$ and $h_{\infty}$ defined in \eqref{def:h_infinity}.
\end{lem}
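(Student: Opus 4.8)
The plan is to prove the three assertions of Lemma~\ref{l:punktauswertung bei undendlich} in order, each building on the previous. For the uniform bound, I would fix $g\in H_w$ and $x\in\mathbb R_+$ and write, using absolute continuity,
$$
g(x) = g(0) + \int_0^x g'(s)\,ds = g(0) + \int_0^x \sqrt{w(s)}\,g'(s)\cdot\frac{1}{\sqrt{w(s)}}\,ds.
$$
Applying the Cauchy--Schwarz inequality to the integral and then to the two-term sum $|g(0)| + \bigl(\int_0^x w(s)g'(s)^2\,ds\bigr)^{1/2}\bigl(\int_0^x \tfrac{1}{w(s)}\,ds\bigr)^{1/2}$, I obtain
$$
|g(x)| \leq \Bigl(1 + \int_0^x \tfrac{1}{w(s)}\,ds\Bigr)^{1/2}\Bigl(g(0)^2 + \int_0^x w(s)g'(s)^2\,ds\Bigr)^{1/2} \leq c\,\Vert g\Vert_w,
$$
where the last step uses the hypothesis $\int_0^\infty \tfrac{1}{w(x)}\,dx<\infty$ to bound the first factor by $c$ and the definition of $\Vert\cdot\Vert_w$ to bound the second. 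Taking the supremum over $x$ gives the claim.

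Next I would show $h_\infty\in H_w$. Since $h_\infty(0)=1$ and $h_\infty'(x) = \tfrac{1}{w(x)}$, the defining integral is
$$
\int_0^\infty w(x)\,h_\infty'(x)^2\,dx = \int_0^\infty w(x)\cdot\frac{1}{w(x)^2}\,dx = \int_0^\infty \frac{1}{w(x)}\,dx < \infty
$$
by hypothesis, and $h_\infty$ is absolutely continuous as an indefinite integral of the locally integrable function $1/w$; hence $h_\infty\in H_w$. For the final identity, I would note that the uniform bound just established guarantees that $\lim_{x\to\infty}g(x)$ exists: indeed, for $y>x$ the same Cauchy--Schwarz estimate gives $|g(y)-g(x)|\leq \bigl(\int_x^y \tfrac{1}{w(s)}\,ds\bigr)^{1/2}\Vert g\Vert_w$, so $g$ is uniformly Cauchy as $x\to\infty$ because the tail $\int_x^\infty \tfrac{1}{w}\,ds$ vanishes. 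Then I would compute the inner product directly:
$$
\langle h_\infty, g\rangle = h_\infty(0)g(0) + \int_0^\infty w(x)\,h_\infty'(x)\,g'(x)\,dx = g(0) + \int_0^\infty g'(x)\,dx,
$$
using $w(x)h_\infty'(x)=1$, and the integral equals $\lim_{x\to\infty}g(x) - g(0)$ by the fundamental theorem of calculus for absolutely continuous functions, yielding $\langle h_\infty,g\rangle = \lim_{x\to\infty}g(x)$.

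The one point requiring genuine care — the main obstacle, though a mild one — is justifying that $\int_0^\infty g'(x)\,dx$ converges and equals $\lim_{x\to\infty}g(x)-g(0)$, rather than merely that the limit of partial integrals exists. This follows from the existence of the limit at infinity established above together with absolute continuity of $g$ on each $[0,x]$, so that $\int_0^x g'=g(x)-g(0)$ and one passes to the limit; the integrand $w(x)h_\infty'(x)g'(x)=g'(x)$ is integrable because $g'\in L^2(w\,dx)$ and $1/w\in L^1$ give $g'\in L^1$ by Cauchy--Schwarz. I would make sure to record this integrability so that the interchange of limit and integral in the inner-product computation is legitimate.
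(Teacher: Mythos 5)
Your proof is correct, but it follows a genuinely different route from the paper's. The paper leans entirely on Lemma~\ref{l:stetigkeit der Punktauswertung}, i.e.\ on the reproducing property $g(x)=\<h_x,g\>$: it first checks $h_\infty\in H_w$ (the same norm computation you do), then observes $\Vert h_\infty-h_x\Vert_w^2=\int_x^\infty \frac{1}{w(y)}dy\rightarrow 0$, so that $\lim_{x\rightarrow\infty}g(x)=\lim_{x\rightarrow\infty}\<h_x,g\>=\<h_\infty,g\>$ follows from mere continuity of the inner product, and finally gets the uniform bound from Cauchy--Schwarz in $H_w$, $\vert g(x)\vert\leq\Vert h_x\Vert_w\Vert g\Vert_w$ with $\Vert h_x\Vert_w^2=h_x(x)\leq c^2$ (note the paper does the limit first, the bound second). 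You never invoke the representation $g(x)=\<h_x,g\>$: you prove the sup bound by applying Cauchy--Schwarz directly to $\int_0^x(\sqrt{w}\,g')\,w^{-1/2}\,ds$, you get existence of $\lim_{x\rightarrow\infty}g(x)$ from a Cauchy-criterion tail estimate, and you identify the limit by computing $\<h_\infty,g\>=g(0)+\int_0^\infty g'(x)\,dx$ via $wh_\infty'=1$ and the fundamental theorem of calculus, with the integrability $g'\in L^1(\mathbb R_+)$ (which you rightly flag as the one point needing care) supplied by $g'\sqrt{w}\in L^2$ and $1/\sqrt{w}\in L^2$. What the paper's argument buys is brevity and re-usability: packaging point evaluations as the kernel elements $h_x$ makes the limit statement literally norm-convergence of representers, a device the paper exploits again in Lemma~\ref{l:hoelderstetigkeit der Punktauswertung} and Theorem~\ref{S:lokale korrelation}. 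What your argument buys is self-containedness — nothing beyond Cauchy--Schwarz and the FTC for absolutely continuous functions — plus the explicit by-product that $g'\in L^1(\mathbb R_+)$ for every $g\in H_w$. Of course your estimates are the paper's unwound: your tail bound $\vert g(y)-g(x)\vert\leq\bigl(\int_x^y\frac{1}{w(s)}ds\bigr)^{1/2}\Vert g\Vert_w$ is exactly $\vert\<h_y-h_x,g\>\vert\leq\Vert h_y-h_x\Vert_w\Vert g\Vert_w$ written out.
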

\begin{proof}
 We have
 $$\Vert h_\infty\Vert^2_w = 1 + \int_0^\infty \frac{1}{w(x)}dx = c^2.$$
 Hence $h_\infty\in H_w$. With $h_x$ given in Lemma \ref{l:stetigkeit der Punktauswertung} we see that
 $$ \Vert h_\infty-h_x\Vert_w^2 = \int_x^\infty \frac{1}{w(y)}dy \rightarrow 0,\quad x\rightarrow \infty.$$
 Thus $h_x\rightarrow h_\infty$ in $H_w$ for $x\rightarrow\infty$. Thus Lemma \ref{l:stetigkeit der Punktauswertung} implies
 \begin{eqnarray*}
\lim_{x\rightarrow\infty} g(x) &=& \lim_{x\rightarrow\infty}\<h_x, g\> 
 =\< h_\infty,g\>\,.
 \end{eqnarray*}
In particular, $g$ is bounded. Let  $x\in\mathbb R_+$. Then Lemma \ref{l:stetigkeit der Punktauswertung} yields
 \begin{eqnarray*}
  \vert g(x)\vert^2 &\leq& \Vert h_x\Vert^2_w\Vert g\Vert^2_w 
                  = h^2_x(x) \Vert g\Vert^2_w 
                  \leq c^2 \Vert g\Vert^2_w\,,
 \end{eqnarray*}
 and hence $\Vert g\Vert_\infty \leq c\Vert g\Vert_w$.
\end{proof}
The following technical Lemma is needed later. It shows that the square function is Lipschitz-continuous on bounded sets.
\begin{lem}\label{l:Stetigkeit der Quadratfunktion}
 Assume that $k:=\int_0^\infty\frac{1}{w(x)}dx<\infty$. Then
 $$\Vert g_1^2-g_2^2\Vert_w \leq \sqrt{5+4k^2} \Vert g_1+g_2\Vert_w \Vert g_1-g_2\Vert_w$$
 for any $g_1,g_2\in H_w$. In particular,
 $$(\cdot)^2:H_w\rightarrow H_w, g\mapsto g^2$$
 is Lipschitz-continuous on bounded subsets of $H_w$.
\end{lem}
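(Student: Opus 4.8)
The plan is to reduce the estimate to a multiplier bound on $H_w$ via the algebraic identity $g_1^2-g_2^2=(g_1+g_2)(g_1-g_2)$. Writing $u:=g_1+g_2$ and $v:=g_1-g_2$, the asserted inequality is exactly $\Vert uv\Vert_w\le\sqrt{5+4k^2}\,\Vert u\Vert_w\Vert v\Vert_w$, so it suffices to show that $H_w$ is closed under pointwise multiplication together with this norm estimate. First I would check that $uv\in\mathrm{AC}(\mathbb R_+,\mathbb R)$ with $(uv)'=u'v+uv'$ and $(uv)(0)=u(0)v(0)$. This is precisely where the hypothesis $k=\int_0^\infty\frac1{w}dx<\infty$ enters: by Lemma~\ref{l:punktauswertung bei undendlich} both $u$ and $v$ are bounded, so the product rule indeed yields a weighted-square-integrable derivative and $uv\in H_w$; in particular the map $(\cdot)^2$ is well defined.

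Next I would compute directly from the definition of the norm
\[
\Vert uv\Vert_w^2=(u(0)v(0))^2+\int_0^\infty w(x)\bigl(u'(x)v(x)+u(x)v'(x)\bigr)^2\,dx.
\]
The boundary term is controlled by $\vert u(0)\vert\le\Vert u\Vert_w$ and $\vert v(0)\vert\le\Vert v\Vert_w$, which follow from Lemma~\ref{l:stetigkeit der Punktauswertung} applied at $x=0$, where $h_0\equiv1$ and hence $\Vert\delta_0\Vert_{\op}=1$; thus $(u(0)v(0))^2\le\Vert u\Vert_w^2\Vert v\Vert_w^2$. For the integral term I would use $(a+b)^2\le2a^2+2b^2$ to split it into $\int w\,u'^2v^2$ and $\int w\,u^2v'^2$, and then bound the passive factor in sup-norm: $\int_0^\infty w\,u'^2v^2\le\Vert v\Vert_\infty^2\int_0^\infty w\,u'^2\le\Vert v\Vert_\infty^2\Vert u\Vert_w^2$, and symmetrically for the other piece, using $\int_0^\infty w\,u'^2\le\Vert u\Vert_w^2$.

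Collecting the boundary contribution together with the two integral contributions yields an estimate $\Vert uv\Vert_w\le C(k)\Vert u\Vert_w\Vert v\Vert_w$ with $C(k)$ expressible through $k=\int_0^\infty\frac1w$; tracking the constants produces the stated value $\sqrt{5+4k^2}$. Here one may prefer to replace the global sup-norm bound $\Vert v\Vert_\infty^2\le(1+k)\Vert v\Vert_w^2$ of Lemma~\ref{l:punktauswertung bei undendlich} by the sharper pointwise Cauchy--Schwarz bound $\vert v(x)-v(0)\vert^2\le\bigl(\int_0^x\frac1w\bigr)\int_0^\infty w\,v'^2$, so as to keep the dependence on $k$ exactly as claimed. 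The Lipschitz statement on bounded sets is then immediate: if $\Vert g_1\Vert_w,\Vert g_2\Vert_w\le R$ then $\Vert g_1+g_2\Vert_w\le2R$, whence $\Vert g_1^2-g_2^2\Vert_w\le2R\sqrt{5+4k^2}\,\Vert g_1-g_2\Vert_w$.

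I expect the genuinely delicate point to be the bookkeeping of the constant, that is, organizing the elementary sup-norm and Cauchy--Schwarz estimates so that the coefficient comes out exactly as $5+4k^2$ rather than a cruder expression; the remaining steps (membership $uv\in H_w$ once boundedness of the factors is in hand, and the reduction to the bounded case) are routine.
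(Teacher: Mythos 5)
Your proposal is correct and is, in substance, the paper's own argument: the paper also factors $g_1^2-g_2^2=(g_1+g_2)(g_1-g_2)$ and reduces the lemma to a product bound $\Vert uv\Vert_w\le C\,\Vert u\Vert_w\Vert v\Vert_w$, which it packages as the multiplication-operator bound of Theorem~\ref{s:stetige Muliplikationsoperatoren} and the Banach algebra property (Proposition~\ref{k:Banachalgebra}, Corollary~\ref{k:Stetigkeit der Quadratfunktion}); you simply prove that product bound inline. The one internal difference is cosmetic: in estimating $\int_0^\infty w\,(u'v+uv')^2dx$ the paper keeps the cross term and bounds it by Cauchy--Schwarz (via $\tilde u(x)=\int_0^x|u'|$, $\tilde v$), while you discard it with $(a+b)^2\le 2a^2+2b^2$; both treatments produce the same coefficient $1+4c^2$ with $c^2=1+\int_0^\infty w^{-1}(x)\,dx$.

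Where you are actually more careful than the paper is the constant, and your hedge there is warranted. Under this lemma's convention $k=\int_0^\infty w^{-1}(x)\,dx$, the global bound $\Vert v\Vert_\infty\le\sqrt{1+k}\,\Vert v\Vert_w$ of Lemma~\ref{l:punktauswertung bei undendlich} gives $1+4c^2=5+4k$, not $5+4k^2$; this matches the stated constant only under the other convention $k^2=\int_0^\infty w^{-1}(x)\,dx$ used in Theorem~\ref{s:stetige Muliplikationsoperatoren} (the paper is inconsistent between the two), and for $k<1$ the claimed $\sqrt{5+4k^2}$ is strictly stronger than what the global bound delivers. So your main-line computation does \emph{not} by itself "produce the stated value"; the pointwise refinement you mention is genuinely needed, and it does work. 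Writing $a=|u(0)|$, $A^2=\int_0^\infty w\,u'^2dx$, $b=|v(0)|$, $B^2=\int_0^\infty w\,v'^2dx$, the bound $|v(x)|\le b+\sqrt{k}\,B$ together with your splitting gives
\[
\Vert uv\Vert_w^2\ \le\ a^2b^2+2(b+\sqrt{k}B)^2A^2+2(a+\sqrt{k}A)^2B^2\ \le\ a^2b^2+4A^2b^2+4a^2B^2+8k\,A^2B^2,
\]
and comparing term by term with $(5+4k^2)(a^2+A^2)(b^2+B^2)$ the only nontrivial requirement is $8k\le 5+4k^2$, i.e.\ $4(k-1)^2+1\ge 0$. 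Hence the stated constant holds for every $k$, and the step you singled out as delicate is indeed the crux of the proof.
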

Clearly, the second part follows from the first. Since a direct proof of the first part is very technical, we delegate it to Corollary \ref{k:Stetigkeit der Quadratfunktion} below where
the Lipschitz-continuity follows easily from the Banach algebra structure of $H_{w}$ 
relative to pointwise multiplication (see Prop.~\ref{k:Banachalgebra}).

In the definition of the dynamics of $f$, the forward price, the differential operator
with respect to $x$, $\partial_x$,  naturally occurs. We have the following convenient result for this operator on $H_w$: 
\begin{thm}\label{s:Halbgruppe von dx}
 We have
 $$ \partial_x:\{g\in H_w: g'\in H_w\}\rightarrow H_w,g\mapsto g'$$
 is the generator of the strongly continuous semigroup given by
 $$ \U_tg(x) = g(t+x)$$
 for any $t,x\in\mathbb R_+$, $g\in H_w$.
\end{thm}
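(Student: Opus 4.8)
The plan is to check that $(\U_t)_{t\ge0}$ is a strongly continuous semigroup of bounded operators on $H_w$ and then to identify its generator $A$ with $\partial_x$ on the stated domain $D:=\{g\in H_w:g'\in H_w\}$.

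\textbf{Boundedness and the semigroup laws.} For $g\in H_w$ the shifted function $\U_tg=g(t+\cdot)$ is absolutely continuous with derivative $g'(t+\cdot)$, and
\[
\Vert\U_tg\Vert_w^2=g(t)^2+\int_0^\infty w(x)g'(t+x)^2\,dx=g(t)^2+\int_t^\infty w(y-t)g'(y)^2\,dy .
\]
Since $w$ is increasing, $w(y-t)\le w(y)$, so the last integral is at most $\Vert g\Vert_w^2$; combined with the point-evaluation estimate $g(t)^2\le h_t(t)\Vert g\Vert_w^2$ from Lemma~\ref{l:stetigkeit der Punktauswertung}, this yields $\Vert\U_tg\Vert_w^2\le(1+h_t(t))\Vert g\Vert_w^2$, so each $\U_t$ is bounded. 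The relations $\U_0=\mathrm{id}$ and $\U_t\U_s=\U_{t+s}$ are immediate, since left shifts compose.

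\textbf{Strong continuity.} It remains to show $\lim_{t\downarrow0}\Vert\U_tg-g\Vert_w=0$. Expanding the norm, the boundary part $(g(t)-g(0))^2\to0$ by continuity of $g$, and the derivative part reduces to the $L^2$-continuity of translation against the weight $w$, i.e.\ $\int_0^\infty w(x)(g'(t+x)-g'(x))^2\,dx\to0$. I would obtain this by density: approximate $g'$ in $L^2(\mathbb R_+,w\,dx)$ by a compactly supported continuous $v$, for which $\int_0^\infty w(x)(v(t+x)-v(x))^2\,dx\to0$ follows from uniform continuity of $v$ and boundedness of $w$ on the relevant compact set, while the two approximation errors are absorbed using $w(y-t)\le w(y)$ exactly as above. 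Controlling the unbounded tail of $w$ through its monotonicity is the crux here and the step I expect to be the main obstacle.

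\textbf{Identifying the generator.} For $g\in D$ set $h:=g'\in H_w$, so that $h'=g''\in L^2(\mathbb R_+,w\,dx)$; the difference quotient $t^{-1}(\U_tg-g)$ then has $x$-derivative $t^{-1}(h(t+x)-h(x))=t^{-1}\int_0^t h'(x+r)\,dr$. Hence $\Vert t^{-1}(\U_tg-g)-g'\Vert_w$ is bounded by the boundary term $|t^{-1}(g(t)-g(0))-h(0)|\to0$ together with $\sup_{0\le r\le t}\Vert h'(\cdot+r)-h'(\cdot)\Vert_{L^2(\mathbb R_+,w\,dx)}$, which tends to $0$ by the translation continuity just established; thus $\partial_x\subseteq A$. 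For the reverse inclusion I would compute the resolvent directly: for sufficiently large $\lambda>0$ and $h\in H_w$ the function $u(x):=\int_0^\infty e^{-\lambda t}h(t+x)\,dt=e^{\lambda x}\int_x^\infty e^{-\lambda y}h(y)\,dy$ equals the Laplace transform $R(\lambda,A)h$, hence lies in $D(A)$, and differentiation gives $u'=\lambda u-h\in H_w$, so $u\in D$ and $Au=\lambda u-h=u'$. Therefore the range of $R(\lambda,A)$, which is exactly $D(A)$, is contained in $D$; combined with $\partial_x\subseteq A$ this forces $D(A)=D$ and $A=\partial_x$.
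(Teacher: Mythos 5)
Your proof is correct, but note that the paper itself offers no argument for this theorem: its ``proof'' is a one-line citation of Filipovi\'c~\cite[Theorem 5.1.1, Remark 5.1.1]{filipovic.01}, so your write-up is a genuinely self-contained alternative. All three of your steps hold up, and they hinge on one good observation: since $w$ is increasing, the substitution $y=t+x$ gives $\int_0^\infty w(x)u(t+x)^2\,dx=\int_t^\infty w(y-t)u(y)^2\,dy\le\int_0^\infty w(y)u(y)^2\,dy$, i.e.\ left translation is a \emph{contraction} on $L^2(\mathbb R_+,w\,dx)$. This uniform bound is exactly what legitimises your density argument for strong continuity (strong continuity on the dense set of compactly supported continuous functions plus uniform boundedness of the translations gives it everywhere), and it is reused to show $\partial_x\subseteq A$, where Jensen's inequality applied to $t^{-1}\int_0^t\bigl(h'(x+r)-h'(x)\bigr)\,dr$ with $h'=g''\in L^2(\mathbb R_+,w\,dx)$ reduces the difference-quotient convergence to the same translation continuity. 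Your resolvent trick for the reverse inclusion is the efficient route: since $R(\lambda,A)$ maps $H_w$ onto $D(A)$, and applying the continuous functional $\delta_x$ to the Bochner integral identifies $R(\lambda,A)h$ with $u(x)=e^{\lambda x}\int_x^\infty e^{-\lambda y}h(y)\,dy$, which satisfies $u'=\lambda u-h\in H_w$, you get $D(A)\subseteq\{g\in H_w:g'\in H_w\}$ with no hard analysis; this avoids the painful direct verification that every $g\in D(A)$ has $g'\in H_w$. Two small points worth making explicit, neither a gap: the existence of the Laplace-transform representation of the resolvent needs a growth bound on $\Vert\U_t\Vert_{\op}$, which your own estimate $\Vert\U_tg\Vert_w^2\le(1+h_t(t))\Vert g\Vert_w^2\le(2+t)\Vert g\Vert_w^2$ supplies (the paper records this separately as quasi-contractivity in Lemma~\ref{l:quasi-contractive}); and the pointwise integral defining $u$ converges for every $\lambda>0$ because any $h\in H_w$ grows at most like $\sqrt{x}$, by Cauchy--Schwarz and $w\ge1$.
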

\begin{proof}
 See  Filipovi\'c~\cite[Theorem 5.1.1, Remark 5.1.1]{filipovic.01}.
\end{proof}

The method of the moving frame works especially well with quasi-contractive generators of strongly continuous semigroups, cf. Tappe~\cite{Tappe.12}.
\begin{lem}\label{l:quasi-contractive}
 The semigroup $\U_t$ generated by $\partial_x$ is quasi-contractive, i.e.\ there is a constant $\beta>0$ such that
 $$ \Vert \U_t \Vert_{\op} \leq e^{t\beta}$$
 for any $t>0$.
\end{lem}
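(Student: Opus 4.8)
The plan is to establish quasi-contractivity directly from the explicit form of the norm on $H_w$ by a Gr\"onwall argument, rather than invoking an abstract generation theorem. First I would write out the squared norm of $\U_t g$. Since $(\U_t g)(x)=g(t+x)$ we have $(\U_t g)(0)=g(t)$ and $(\U_t g)'(x)=g'(t+x)$, so after the substitution $y=t+x$,
$$\Vert\U_t g\Vert_w^2 = g(t)^2 + \int_0^\infty w(x)g'(t+x)^2\,dx = g(t)^2 + \int_t^\infty w(y-t)g'(y)^2\,dy.$$
Denote the right-hand side by $\phi(t)$, so that $\phi(0)=\Vert g\Vert_w^2$. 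It is convenient to work first with $g$ in the (dense) domain $\{g\in H_w:g'\in H_w\}$ of $\partial_x$ and then extend.

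The key step is to differentiate $\phi$. Differentiating the moving lower limit and under the integral sign (using $w(0)=1$) gives
$$\phi'(t) = 2g(t)g'(t) - g'(t)^2 - \int_t^\infty w'(y-t)g'(y)^2\,dy.$$
Two elementary facts now do all the work. Since $w$ is increasing we have $w'\geq 0$, so the integral term is nonpositive; and the identity $(g(t)-g'(t))^2\geq 0$ yields $2g(t)g'(t)-g'(t)^2\leq g(t)^2$. Because $\phi(t)\geq g(t)^2$ (the remaining integral being nonnegative), these combine to the differential inequality $\phi'(t)\leq\phi(t)$. Gr\"onwall's lemma then gives $\phi(t)\leq e^{t}\phi(0)=e^{t}\Vert g\Vert_w^2$, i.e.\ $\Vert\U_t g\Vert_w\leq e^{t/2}\Vert g\Vert_w$. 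By density of the domain and the local boundedness of the strongly continuous semigroup $\U_t$, this extends to all $g\in H_w$, so the claim holds with $\beta=\tfrac12$.

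I expect the main obstacle to be the rigorous justification of the differentiation of $\phi$, which amounts to controlling the term $\int_t^\infty w'(y-t)g'(y)^2\,dy$: one must verify $\int_0^\infty w'(x)g'(x)^2\,dx<\infty$ for $g$ in the generator domain (equivalently, that the boundary term $w(x)g'(x)^2\to 0$ as $x\to\infty$, which follows since $\int_0^\infty w(x)g'(x)^2\,dx<\infty$ forces any limit of the integrand $w\,(g')^2$ to vanish), together with a dominated-convergence argument to differentiate the improper integral. Indeed this computation is precisely the quasi-dissipativity estimate $\<\partial_x g,g\>\leq\tfrac12\Vert g\Vert_w^2$, obtained by integrating $\int_0^\infty w\,g''\,g'\,dx$ by parts. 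It is worth emphasizing that the cancellation encoded in $2g(t)g'(t)-g'(t)^2\leq g(t)^2$ is essential: the cruder bound $g(t)^2\leq h_t(t)\Vert g\Vert_w^2\leq(1+t)\Vert g\Vert_w^2$ from Lemma~\ref{l:stetigkeit der Punktauswertung} only yields $\Vert\U_t\Vert_{\op}\leq\sqrt{2+t}$, which exceeds $1$ as $t\downarrow 0$ and so cannot be dominated by any $e^{\beta t}$ near the origin.
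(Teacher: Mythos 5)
Your proof is correct in substance and takes a genuinely different route from the paper. The paper argues directly on the norm: for $\Vert g\Vert_w\leq 1$ and $t\in[0,1]$ it expands $g(t)^2=\bigl(g(0)+\int_0^t g'(y)\,dy\bigr)^2$, applies Cauchy--Schwarz and the monotonicity of $w$ to reach $\Vert \U_t\Vert_{\op}^2\leq 1+2\Vert h_t-h_0\Vert_w$, and then iterates over unit time steps using the semigroup property. You instead derive the pointwise differential inequality $\phi'\leq\phi$ (equivalently, quasi-dissipativity of $\partial_x$) and close with Gr\"onwall. Your route buys two things: a sharper constant ($\beta=\tfrac12$ rather than $\beta=1$), and, more importantly, robustness near $t=0$. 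Since $\Vert h_t-h_0\Vert_w^2=\int_0^t w(s)^{-1}ds$ is of exact order $t$ (cf.\ Lemma~\ref{l:hoelderstetigkeit der Punktauswertung}), the paper's intermediate bound behaves like $1+2\sqrt{t}$, and the printed step $\sqrt{1+2\Vert h_t-h_0\Vert_w}\leq 1+t$ actually fails for small $t$ (take $w(x)=1+x$ and $t=0.01$: the left side is about $1.095$). This is exactly the $\sqrt{t}$-phenomenon you identify in your closing remark: bounds of the form $1+c\sqrt{t}$ can never be dominated by $e^{\beta t}$ near the origin. Your argument, in which the cancellation $2g(t)g'(t)-g'(t)^2\leq g(t)^2$ is exploited \emph{before} any time-integration, genuinely avoids this defect, whereas the paper's estimate runs into it.

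One technical point in your write-up needs repair: the paper assumes only that $w$ is continuous and increasing, so $w'$ need not exist in any useful sense and the Leibniz differentiation of $\phi$ is not literally available. But you never use the value of the $w'$-term, only its sign, so you can replace derivatives of the weight by differences. For $0\leq s\leq t$, monotonicity of $w$ and $w\geq 1$ give
$$\int_t^\infty w(y-t)g'(y)^2\,dy-\int_s^\infty w(y-s)g'(y)^2\,dy\leq -\int_s^t g'(y)^2\,dy,$$
and hence, since $g^2$ is locally absolutely continuous,
$$\phi(t)-\phi(s)\leq\int_s^t\bigl(2g(y)g'(y)-g'(y)^2\bigr)\,dy\leq\int_s^t g(y)^2\,dy\leq\int_s^t\phi(y)\,dy.$$
The integral form of Gr\"onwall (using that $\phi(t)=\Vert\U_tg\Vert_w^2$ is continuous, by strong continuity of the semigroup from Theorem~\ref{s:Halbgruppe von dx}) then yields $\phi(t)\leq e^{t}\phi(0)$, i.e.\ $\Vert\U_tg\Vert_w\leq e^{t/2}\Vert g\Vert_w$. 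Note that this version works for every $g\in H_w$ directly---all that is used is that $g$ is absolutely continuous with $\sqrt{w}\,g'\in L^2$---so the restriction to $\dom(\partial_x)$, the density/extension step, and the discussion of the boundary term $w(x)g'(x)^2$ at infinity can all be dropped.
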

\begin{proof}
 Let $g\in H_w$ such that $\Vert g\Vert_w\leq 1$ and $t\in[0,1]$. Then
 \begin{eqnarray*}
  \Vert \U_tg\Vert^2_w &=& g^2(t) + \int_0^\infty (g'(t+y))^2 w(y)dy \\
    &\leq& g^2(0) + 2g(0)\int_0^tg'(y)dy + \left(\int_0^tg'(y)dy\right)^2 + \int_t^\infty (g'(y))^2w(y) dy \\
    &\leq& g^2(0) +2\Vert g\Vert_w \vert\<g,h_t-h_0\>\vert + t\int_0^t (g'(y))^2dy  + \int_t^\infty (g'(y))^2w(y) dy \\
    &\leq& g^2(0) +2\Vert g\Vert^2_w \Vert h_t-h_0\Vert_w + \int_0^\infty (g'(y))^2w(y)dy \\
    &=& \Vert g\Vert^2_w (1+2\Vert h_t-h_0\Vert_w)
 \end{eqnarray*}
 where $h_t$ is defined as in Lemma \ref{l:stetigkeit der Punktauswertung}. Above we used the Schwarz inequality for the second inequality and the Cauchy-Schwarz inequality for the third inequality. Thus we have
 $$ \Vert \U_t\Vert_{\op} \leq \sqrt{1+2\Vert h_t-h_0\Vert_w} \leq 1 + t \leq \exp(t)$$
 which shows the asserted inequality for $t\in[0,1]$. The inequality for $t>1$ follows by iteration.
\end{proof}
With these results on the space $H_w$ at hand, we are now ready to start our analysis
of the forward dynamics. 

\subsection{General forward curve dynamics}
We consider a general stochastic equation of the form
\begin{align}
 df(t) &= (\partial_xf(t) +\beta(t)) dt +  \Psi(t)dL(t),\quad t\geq 0\label{G:f dynamik}
\end{align}
in the sense of its mild formulation
\begin{align}\label{G:f Volterra representation}
 f(t) &= \U_tf_0 + \int_0^t\U_{t-s}\beta(s) ds +  \int_0^t\U_{t-s}\Psi(s)dL(s),\quad t\geq 0
\end{align}
to model the dynamics of the forward price specified in the Musiela parametrisation. Here, $(\U_t)_{t\geq 0}$ denotes the semigroup generated by $\partial_x$, cf.\ Lemma \ref{l:quasi-contractive}, and we assume $L$ to be a square-integrable mean-zero L\'evy process with values in some Hilbert space $U$ and $\Q$ its covariance operator.
Furthermore, we denote by $f(0)=f_0$ the initial condition, where $f_0\in H_w$.  The 
"volatility" of the forward dynamics is $\Psi\in\mathcal L^2_{L}(H_w)$, and we have the integrable drift 
$\beta:\mathbb R_+\times \Omega\rightarrow H_w$. For the space
$H_w$, we assume that $k^2:=\int_0^\infty\frac{1}{w(s)}ds<\infty$.

\begin{rem}
 If $(b,C,F)$ is the characteristics of $L$ relative to some truncation function $\chi$, then the mean $m$ and the covariance operator $\Q$ of $L(1)$ are given by 
\begin{align*}
 \Q x & = Cx + \int_U y\<x,y\> F(dy)\quad\text{and} \\
 m & = b + \int_U (y-\chi(y)) F(dy)
\end{align*}
for any $x\in U$, cf.\ \cite[Definition 4.45]{peszat.zabczyk.07}, the angle bracket is given by $\<L,L\>_t = t \tr(\Q)$ cf.\ \cite[page 35]{peszat.zabczyk.07}, the operator angle bracket by $\<\<L,L\>\>_t = t\Q$ cf.\ \cite[Theorem 8.2]{peszat.zabczyk.07} and hence the martingale covariance by $\Q/\tr(\Q)$ cf.\ \cite[Definition 8.3]{peszat.zabczyk.07}.
\end{rem}
 

A natural way to set up an equation like \eqref{G:f dynamik} is to consider stochastic partial differential equations, i.e.\ to assume that $\beta(t) = a(t,f(t))$, $\Psi(t) = \psi(t,f(t))$ for some suitable functions $a,\psi$. The following statement is about existence and
uniqueness of mild solutions to \eqref{G:f dynamik} needed for this paper under such a state-dependent specification of $\beta$ and $\psi$. 
\begin{prop}\label{p:f ODE}
Let $a:\mathbb R_+\times H_w\rightarrow H_w$, $\psi:\mathbb R_+\times H_w\rightarrow L(U,H_w)$ such that
there is an increasing function $K:\mathbb R_+\rightarrow\mathbb R_+$ such that the following Lipschitz conditions hold.
 \begin{align*}
  \Vert a(t,g)-a(t,h)\Vert_w&\ \leq K(t)\Vert g-h\Vert_w,\quad g,h\in H_w,t\in\mathbb R_+,\\
  \Vert a(t,g)\Vert_w&\ \leq K(t)(1+\Vert g\Vert_w),\quad t\in ,g\in H_w,\mathbb R_+\\
  \Vert \psi(t,g)-\psi(t,h)\Vert_{\mathrm{op}}&\ \leq K(t)\Vert g-h\Vert_w,\quad g,h\in H_w,t\in\mathbb R_+,\\
  \Vert \psi(t,g)\Vert_{\mathrm{op}}&\ \leq K(t)(1+\Vert g\Vert_w),\quad g\in H_w,t\in\mathbb R_+,
 \end{align*}
 and let $f_0\in H_w$. Then there is a unique adapted $H_w$-valued c\`adl\`ag stochastic process $f$ satisfying
\begin{align*}
 f(t) :=\U_tf_0 + \int_0^t\U_{t-s}a(s,f(s))ds + \int_0^t 
\U_{t-s}\psi(s,f(s)) dL(s),\quad t\geq0,
\end{align*}
being the mild solution to 
 $$ df(t) = (\partial_xf(t) + a(t,f(t)))dt + \psi(t,f(t))dL(t),\quad t\geq0.$$
\end{prop}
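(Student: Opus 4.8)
The plan is to realise the mild solution as a fixed point of the map
$$
\mathcal K(f)(t) := \U_t f_0 + \int_0^t \U_{t-s} a(s,f(s))\,ds + \int_0^t \U_{t-s}\psi(s,f(s))\,dL(s),
$$
to solve $f=\mathcal K(f)$ by a Banach fixed point argument in an $L^2$-sense, and then to upgrade the solution to a c\`adl\`ag version. First I would fix a horizon $T>0$ and work on the space $\mathcal H_T$ of predictable $H_w$-valued processes with $\Vert f\Vert_{\mathcal H_T}^2 := \sup_{t\in[0,T]}\E\Vert f(t)\Vert_w^2 <\infty$, which is a Banach space. Two structural facts drive every estimate: the semigroup is quasi-contractive, so $\Vert \U_{t-s}\Vert_{\op}\leq e^{\beta(t-s)}\leq e^{\beta T}$ by Lemma~\ref{l:quasi-contractive}, and the covariance operator $\Q$ has finite trace.

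Next I would establish the two estimates showing that $\mathcal K$ maps $\mathcal H_T$ into itself and is (eventually) contractive. For the drift term the Lipschitz and growth bounds on $a$, together with $\Vert \U_{t-s}\Vert_{\op}\leq e^{\beta T}$ and the Cauchy--Schwarz inequality, yield a bound of the form $t\int_0^t e^{2\beta T}K(s)^2\,\E\Vert f(s)-g(s)\Vert_w^2\,ds$. For the stochastic convolution I would invoke the It\^o isometry for square-integrable L\'evy integrals (Peszat and Zabczyk~\cite{peszat.zabczyk.07}), which expresses the second moment of the increment as $\E\int_0^t \tr\!\big(\U_{t-s}\Phi(s)\Q\Phi(s)^*\U_{t-s}^*\big)\,ds$. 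The crucial observation is that the \emph{operator-norm} Lipschitz bound on $\psi$ suffices precisely because $\Q$ is trace class: one has $\tr(\U_{t-s}\Phi\Q\Phi^*\U_{t-s}^*)\leq \Vert \U_{t-s}\Vert_{\op}^2\,\Vert \Phi\Vert_{\op}^2\,\tr(\Q)$, so the diffusion difference is controlled by $\int_0^t e^{2\beta T}\tr(\Q)K(s)^2\,\E\Vert f(s)-g(s)\Vert_w^2\,ds$. Since both estimates are linear in $\int_0^t \E\Vert f(s)-g(s)\Vert_w^2\,ds$, either a short-time contraction followed by concatenation, or passing to the equivalent exponentially weighted norm $\sup_{t\in[0,T]}e^{-\lambda t}\E\Vert f(t)\Vert_w^2$ with $\lambda$ large, makes $\mathcal K$ a strict contraction on $[0,T]$; the Banach fixed point theorem then produces a unique $L^2$ mild solution on each $[0,T]$, and compatibility across horizons gives a unique process on $\mathbb R_+$.

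The hard part will be the c\`adl\`ag regularity: the stochastic convolution $\int_0^t \U_{t-s}\psi(s,f(s))\,dL(s)$ is not itself a stochastic integral in the time variable $t$, and hence does not directly inherit the path regularity of $L$. Here I would appeal to the method of the moving frame of Tappe~\cite{Tappe.12}, for which the quasi-contractivity established in Lemma~\ref{l:quasi-contractive} is exactly the required structural hypothesis: one lifts the mild equation to a genuine stochastic differential equation driven by $L$ on a larger space, where the semigroup no longer appears inside the integral and the standard existence theory furnishes a c\`adl\`ag solution, and then projects back to obtain a c\`adl\`ag, adapted modification of $f$ satisfying the mild formulation. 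The linear growth conditions on $a$ and $\psi$ rule out explosion and hence give global existence for all $t\geq0$, completing the argument.
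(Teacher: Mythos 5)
Your proposal is correct, but it does more work than the paper's own proof, which is essentially two lines: Lemma~\ref{l:quasi-contractive} establishes quasi-contractivity of $(\U_t)_{t\geq 0}$, and then Tappe~\cite[Theorem 4.5]{Tappe.12} is cited for existence, uniqueness \emph{and} the c\`adl\`ag property all at once. Your hybrid route --- a hand-rolled Picard iteration in the space of predictable processes with $\sup_{t\leq T}\E\Vert f(t)\Vert_w^2<\infty$, followed by an appeal to the moving-frame method for path regularity --- is sound: the contraction estimates are right (in particular, the observation that the operator-norm Lipschitz hypothesis on $\psi$ suffices because $\tr\left(\U_{t-s}\Phi\Q\Phi^*\U_{t-s}^*\right)\leq\Vert\U_{t-s}\Phi\Vert_{\op}^2\,\tr(\Q)$ is exactly where the trace-class assumption on $\Q$ earns its keep), and you correctly identify that the stochastic convolution has no intrinsic semimartingale structure in $t$, so regularity must come from the dilation argument, for which quasi-contractivity is precisely the required hypothesis. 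What your version buys is transparency: it shows explicitly where each hypothesis enters, whereas the paper delegates everything to Tappe. What it costs is redundancy: once you invoke the moving frame, the lifted equation on the larger space is itself solved by a fixed-point argument and projects back to a c\`adl\`ag mild solution, so your $L^2$ construction duplicates that step; its only genuine role is to certify, via uniqueness, that the projected solution agrees with yours. One point to tighten: your Banach fixed point gives uniqueness only within the class of predictable processes of finite $\sup_{t\leq T}\E\Vert\cdot\Vert_w^2$-norm, while the Proposition asserts uniqueness among all adapted c\`adl\`ag solutions; to reconcile the two classes you should add the routine localization (stop at $\tau_n=\inf\{t:\Vert f(t)\Vert_w\geq n\}$, use the linear-growth bounds and Gronwall's lemma, and let $n\to\infty$) showing that any adapted c\`adl\`ag mild solution automatically has locally bounded second moments.
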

\begin{proof}
 Lemma \ref{l:quasi-contractive} states that $\U_t$ is quasi-contractive. Thus Tappe~\cite[Theorem 4.5]{Tappe.12} yields the claim. 
%
\end{proof}
We remark that the Lipschitz and growth condition listed above are considerably stronger than the conditions stated in Tappe~\cite[Theorem 4.5]{Tappe.12}, but we have restricted our attention to such coefficients $a$ and $\psi$ for simpicity only. In the rest of the paper we do not assume any functional representation for $\beta$ and
$\Psi$ as in Proposition \ref{p:f ODE}, but apply the representation 
\eqref{G:f Volterra representation} for $f$ from time to time. We refer to Peszat and Zabczyk~\cite{peszat.zabczyk.07} for a definition and further analysis of mild solutions to stochastic partial differential equations.

Before we start our discussion of the main topic let us discuss physical and risk neutral model perspectives. For simplicity assume that the SPDE \eqref{G:f dynamik} is driven by a Wiener process $W$, i.e.\
 $$ df(t) = (\partial_xf(t) +\beta(t)) dt +  \Psi(t)dW(t),\quad t\geq 0. $$
 The fundamental theorem of asset pricing states that there is an equivalent $\sigma$-martingale measure $Q\sim P$, cf. Delbaen and Schachermayer~\cite[Main theorem 1.1]{delbaen.schachermayer.98} for any market with finitely many securities. To avoid further complications we simply assume the existence of a martingale measure $Q\sim P$. Then Carmona and Tehranchi~\cite[Theorem 4.2]{carmona.tehranchi.06} yields that the $Q$-dynamics of $f$ are given by
 $$ df(t) = (\partial_xf(t) + (\beta(t)- \Q^{1/2}\psi(t))) dt +  \Psi(t)dW(t),\quad t\geq 0 $$
 for some $H_w$-valued predictable stochastic process $\psi$ where $\Q$ is the martingale covariance of $W$. However, the forwards $F(t,T)$ in delivery time parametrisation must be martingales under the martingale measure $Q$ and hence Theorem \ref{s:F dynamik} below yields
   $$ \beta(t) = \Q^{1/2}\psi(t),\quad t\geq 0.$$
 We conclude that the no-arbitrage paradigm implies that the drift term $\beta$ must stay in the range of $\Q^{1/2}$. Of course, if the driving noise is not a Wiener process, then the analysis gets more complicated which we can already see in the finite dimensional case, cf.\ Jacod and Shiryaev~\cite[Theorem III.3.24]{js.87}. If we want to model under the risk neutral measure directly, then we can and must put $\beta=0$.

 Note that we recover the spot price dynamics 
$S(t)$ 
by applying
the evaluation map $\delta_0$ on $f$ (cf.\ Proposition \ref{l:stetigkeit der Punktauswertung}):
\begin{equation}
\label{def-spot}
   S(t):= \delta_0(f(t))\,, 
\end{equation}

We have the following result for the implied spot price dynamics from the forward price
$f(t)$ in the case $L$ is a subordinated Brownian motion in $U$:
\begin{thm}\label{s:dynamik}
 If $L$ is a subordinated Brownian motion, i.e.\ $L(t) = W(\Theta(t))$, $t\geq 0$ for some $U$-valued Brownian motion $W$ and some subordinator $\Theta$, then
 \begin{align}\label{e:Stock dynamic}
   S(t) = f_0(t) + \int_0^t \beta(s)(t-s)ds + \int_0^t \sigma(t,s) dN(s)
 \end{align}
 for any $t\in\mathbb R_+$ where $N(t) = B(\Theta(t))$, $t\geq 0$ and $B$ is a constant multiple of an $\mathbb R$-valued Brownian motion and
 \begin{align}\label{e:g}
   \sigma^2(t,s) = \< \Psi(s)\Q\Psi^*(s)h_{t-s},h_{t-s} \>
 \end{align}
 for $s,t\in\mathbb R_+$. 
\end{thm}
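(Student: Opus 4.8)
The plan is to apply the point evaluation $\delta_0$ to the mild representation \eqref{G:f Volterra representation} and to treat the three summands separately. Since $\delta_0\in L(H_w,\mathbb R)$ is continuous it commutes with the Bochner integral and, by Peszat and Zabczyk~\cite[Theorem 8.7(v)]{peszat.zabczyk.07}, with the stochastic integral; moreover the shift action $\U_rg(x)=g(r+x)$ gives $\delta_0\U_r=\delta_r$. Hence I would write
$$ S(t)=\delta_0(f(t)) = f_0(t) + \int_0^t \delta_{t-s}\beta(s)\,ds + \delta_0\left(\int_0^t\U_{t-s}\Psi(s)\,dL(s)\right). $$
The first term is $(\U_tf_0)(0)=f_0(t)$ and the second equals $\int_0^t\beta(s)(t-s)\,ds$ since $\delta_{t-s}\beta(s)=\beta(s)(t-s)$, so the two deterministic terms of \eqref{e:Stock dynamic} already appear. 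It remains to cast the stochastic convolution into the asserted form.

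The key step is to absorb the shift semigroup into the integrand. Fixing $t$ and setting $\tilde\Psi(s):=\U_{t-s}\Psi(s)$ for $s\in[0,t]$, I would pull $\delta_0$ back through the stochastic integral to obtain
$$ \delta_0\left(\int_0^t\U_{t-s}\Psi(s)\,dL(s)\right) = \T\left(\int_0^t\tilde\Psi(s)\,dL(s)\right),\qquad \T:=\delta_0\in L(H_w,\mathbb R), $$
so that now the operator $\T$ in front is \emph{fixed}. Because $\sup_{s\in[0,t]}\Vert\U_{t-s}\Vert_{\op}<\infty$ by Lemma~\ref{l:quasi-contractive}, one has $\tr(\tilde\Psi(s)\Q\tilde\Psi(s)^*)\le\Vert\U_{t-s}\Vert_{\op}^2\,\tr(\Psi(s)\Q\Psi(s)^*)$, whence $\tilde\Psi\in\mathcal L^2_L(H_w)$, and $\tilde\Psi$ inherits predictability from $\Psi$. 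The hypotheses of Theorem~\ref{S:endl. Darstellung, subordiniert} are thus met with $n=1$, and it supplies a scalar subordinated Brownian motion $N=B\circ\Theta$ with $\T\left(\int_0^t\tilde\Psi(s)\,dL(s)\right)=\int_0^t\sigma(s)\,dN(s)$, where $\sigma(s)^2=\T\tilde\Psi(s)\Q\tilde\Psi(s)^*\T^*$.

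Finally I would identify $\sigma$ with the kernel in \eqref{e:g}. Using $\T^*=\delta_0^*1=h_0$ (Lemma~\ref{l:stetigkeit der Punktauswertung}) together with the adjoint of $\delta_0\U_{t-s}=\delta_{t-s}$, namely $\U_{t-s}^*h_0=h_{t-s}$, a direct computation gives
$$ \sigma(s)^2=\<\U_{t-s}\Psi(s)\Q\Psi(s)^*\U_{t-s}^*h_0,h_0\>=\<\Psi(s)\Q\Psi(s)^*h_{t-s},h_{t-s}\>=\sigma^2(t,s). $$
Here I would plug the martingale covariance $\Q/\tr(\Q)$ into Theorem~\ref{S:endl. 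Darstellung, subordiniert} and rescale $B$ by a constant; this replaces the martingale covariance by the covariance operator $\Q$ in the formula above and turns $B$ into a constant multiple of a standard Brownian motion, as stated. Collecting the three terms then yields \eqref{e:Stock dynamic}.

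The main obstacle is conceptual rather than computational: the $\U_{t-s}$ inside the convolution makes the effective projection $\delta_{t-s}\Psi(s)$ depend on both the integration variable $s$ and the horizon $t$, so Theorem~\ref{S:endl. Darstellung, subordiniert} does not apply with its operator held fixed in the obvious way. The device $\tilde\Psi(s)=\U_{t-s}\Psi(s)$ repairs this by transferring the $t$-dependence into the integrand; the price is that the integrand, and hence the driving process $N$ produced by the theorem, depends on the fixed horizon $t$, so that \eqref{e:Stock dynamic} is to be read for each $t$ separately. The remaining work---the commutation of $\delta_0$ with both integrals, the integrability of $\tilde\Psi$, the adjoint identity $\U_{t-s}^*h_0=h_{t-s}$, and the trace normalisation---is routine.
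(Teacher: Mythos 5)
Your proof is correct and follows essentially the same route as the paper: apply $\delta_0$ to the mild formulation \eqref{G:f Volterra representation}, use $\delta_0\U_{t-s}=\delta_{t-s}$, invoke Theorem~\ref{S:endl. Darstellung, subordiniert} with $n=1$, and identify $\sigma^2(t,s)$ via $\delta_{t-s}^*1=h_{t-s}$ from Lemma~\ref{l:stetigkeit der Punktauswertung}. Your two refinements --- absorbing the semigroup into the integrand via $\tilde\Psi(s)=\U_{t-s}\Psi(s)$ so that the linear functional in front is fixed, and reading \eqref{e:Stock dynamic} for each fixed horizon $t$ separately (so that $N$ may depend on $t$) --- simply make explicit what the paper's proof does implicitly, since the paper applies the same theorem to the $t$-dependent integrand $\delta_{t-s}\Psi(s)$ without further comment.
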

\begin{proof}
 Note that if $B$ is an $\mathbb R$-valued Brownian motion, $\phi\in\L_L^2(\mathbb R)$ is an operator valued predictable process, then the integral defined as in Peszat and Zabczyk~\cite{peszat.zabczyk.07} can be represented as a stochastic integral as in Jacod and Shiryaev~\cite{js.87} by
 $$ \int_0^t \phi(s) dB(s) = \int_0^t \phi(s)(1) dB(s) $$
 and by the It\^o isometry we have 
 $$\E \left(\int_0^t \phi(s) dB(s)\right)^2 = \int_0^t \E\tr(\phi(s)\phi(s)^*) ds = \int_0^t \E\left((\phi(s)(1))^2\right) ds.$$
The definition of $S$ in \eqref{def-spot} and the representation of $f$ in \eqref{G:f Volterra representation} yield
 \begin{eqnarray*}
  S(t) &=& \delta_0(f(t)) 
       = f_0(t) + \int_0^t \beta(s)(t-s)ds +  \int_0^t \delta_{t-s}\Psi(s)dL(s)
 \end{eqnarray*}
 for any $t\in\mathbb R_+$. Thus, applying Thm.~\ref{S:endl. Darstellung, subordiniert} and Lemma~\ref{l:stetigkeit der Punktauswertung} we find
 $$ S(t) = f_0(t) + \int_0^t \beta(s)(t-s)ds + \int_0^t \sigma(t,s) dN(s)$$
 for any $t\in\mathbb R_+$. Here,  $N$ is an $\mathbb R$-valued subordinated Brownian motion of the same type as $L$, and
 \begin{eqnarray*}
  \sigma^2(t,s) &=& \delta_{t-s}\Psi(s)\Q\Psi^*(s)\delta^*_{t-s}1 \\
           &=& \<\delta_{t-s}\Psi(s)\Q\Psi^*(s)h_{t-s},1\> \\
           &=& \<\Psi(s)\Q\Psi^*(s)h_{t-s},h_{t-s}\>
 \end{eqnarray*}
 for any $s,t\in\mathbb R_+$.
\end{proof}

Let us for for a moment assume that $\beta=0$ in Thm.~\ref{s:dynamik}.
In the case $\Psi$ is deterministic, $\sigma(t,s)$ will become deterministic as well. Hence,
the spot price dynamics will be a Volterra process. If, additionally, $\Psi$ is constant, then
$$
\sigma^2(t,s)= \< \Psi\Q\Psi^*h_{t-s},h_{t-s} \>=:\gamma^2(t-s)\,.
$$ 
But then $S$ will be a so-called L\'evy stationary process.
An extension of L\'evy stationary processes are the so-called {\it L\'evy semistationary}
(LSS) processes.  By choosing $\Psi(s) =\widetilde{\sigma}(s) \T$ for some linear operator $\T\in L(U,H_w)$ and a predictable $\mathbb R$-valued stochastic $\widetilde{\sigma}$, then $S$ in Thm.~\ref{s:dynamik} becomes an LSS process. Indeed, the volatility is in this case
$$
\sigma^2(t,s)=\widetilde{\sigma}^2(s)\< \T\Q\T^*h_{t-s},h_{t-s} \>\,.
$$
\sloppy{ Such processes have been applied to model the stochastic dynamics of
energy prices (see Barndorff-Nielsen et al.~\cite{BNBV-spot}), but also other random phenomena
like wind speed and temperature (see Benth and \v{S}altyt\.{e} Benth~\cite{BSB-book}) and turbulence (see Barndorff-Nielsen and Schmiegel~\cite{BNSch}). }
We may discuss similar specifications of the drift $\beta$ in light of these different special classes of spot models. 

We are interested in a similar result for the dynamics of the forward price $t\mapsto F(t,T)$,
$t\leq T$, for a contract delivering at time $T$.  Obviously,
\begin{equation}
F(t,T):=\delta_{T-t}f(t)\,,
\end{equation}
for $T\geq t\geq 0$.
A natural adaption of the proof of Theorem~\ref{s:dynamik} reveals an analogous representation of $F(t,T)$ as for 
the spot price $S(t)$. We present this result for the dynamics of the bivariate process
$t\mapsto (F(t,T_1),F(t,T_2))=(\delta_{T_1-t}f(t),\delta_{T_2-t}f(t))$, with $0\leq t\leq T_1\leq T_2<\infty$. This will not only provide us with the dynamics of a single contract, but also give information about how two forward contracts with different times of delivery are connected within our model.  
\begin{thm}\label{s:F dynamik}
 If $L$ is a subordinated Brownian motion, i.e.\ $L(t) = W(\Theta(t))$, $t\geq 0$ for some $U$-valued Brownian motion $W$ and some subordinator $\Theta$, then
 \begin{eqnarray*}
   \vektor{F(t,T_1)}{F(t,T_2)} = \vektor{f_0(T_1)}{f_0(T_2)}+\int_0^t\vektor{\beta(s)(T_1-s)}{\beta(s)(T_2-s)}ds + \int_0^t \Sigma(T_1,T_2,s)dN_2(s)
 \end{eqnarray*}
 for any $t\leq T_1$. Here, $N_2(t)=(B_1(\Theta(t)),B_2(\Theta(t)))$, $t\geq 0$ where 
$B=(B_1,B_2)$ is a $\mathbb R^2$-valued standard Brownian motion and $\Sigma(T_1,T_2,s)$ is the non-negative $2\times 2$ matrix-valued process
with elements 
 $$ (\Sigma(T_1,T_2,s))^2_{i,j} = \< \Psi(s)\Q\Psi^*(s)h_{T_i-s},h_{T_j-s}\> $$
 for any $s,t\in\mathbb R_+$, $i,j = 1,2$.
\end{thm}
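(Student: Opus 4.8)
The plan is to run the argument of Theorem~\ref{s:dynamik} in the bivariate case $n=2$. First I would apply the point evaluations $\delta_{T_1-t}$ and $\delta_{T_2-t}$ to the mild representation \eqref{G:f Volterra representation} of $f(t)$. Using the explicit form $\U_rg(x)=g(r+x)$ of the shift semigroup one checks the identities $\delta_{T_i-t}\U_tf_0=f_0(T_i)$ and $\delta_{T_i-t}\U_{t-s}=\delta_{T_i-s}$, so that the deterministic part collapses to $f_0(T_i)+\int_0^t\beta(s)(T_i-s)ds$, while pulling the continuous functional $\delta_{T_i-t}$ through the stochastic integral (Peszat and Zabczyk~\cite[Theorem 8.7(v)]{peszat.zabczyk.07}) turns the noise part into $\int_0^t\delta_{T_i-s}\Psi(s)dL(s)$. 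Stacking the two coordinates gives
\begin{eqnarray*}
 \vektor{F(t,T_1)}{F(t,T_2)} &=& \vektor{f_0(T_1)}{f_0(T_2)}+\int_0^t\vektor{\beta(s)(T_1-s)}{\beta(s)(T_2-s)}ds + \int_0^t \T_s\Psi(s)dL(s),
\end{eqnarray*}
where $\T_s:=(\delta_{T_1-s},\delta_{T_2-s})\in L(H_w,\mathbb R^2)$, and where the constraint $t\leq T_1\leq T_2$ guarantees $T_i-s\geq0$ on $[0,t]$, so that all point evaluations are legitimate.

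Next I would represent the $\mathbb R^2$-valued stochastic integral by the two-dimensional analogue of Theorem~\ref{S:endl. Darstellung, subordiniert}, yielding a bivariate subordinated Brownian motion $N_2(t)=(B_1(\Theta(t)),B_2(\Theta(t)))$ of the same type as $L$ together with the representation $\int_0^t\Sigma(T_1,T_2,s)dN_2(s)$, where $\Sigma=(\T_s\Psi(s)\Q\Psi(s)^*\T_s^*)^{1/2}$ is the non-negative matrix square root. To identify the entries I would use Lemma~\ref{l:stetigkeit der Punktauswertung}: since $\delta_x^*c=ch_x$, the adjoint satisfies $\T_s^*e_j=h_{T_j-s}$, whence
$$ (\T_s\Psi(s)\Q\Psi(s)^*\T_s^*)_{ij} = \<\Q\Psi(s)^*h_{T_j-s},\Psi(s)^*h_{T_i-s}\> = \<\Psi(s)\Q\Psi(s)^*h_{T_i-s},h_{T_j-s}\>, $$
the last equality using self-adjointness of $\Psi(s)\Q\Psi(s)^*$. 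This is exactly $(\Sigma(T_1,T_2,s))^2_{ij}$, so the claimed covariance structure drops out.

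The main obstacle is that, unlike in Theorems~\ref{S:endl. Darstellung, stetig} and~\ref{S:endl. Darstellung, subordiniert}, the functional $\T_s$ applied to $\Psi(s)$ depends on the integration variable $s$, so the cited theorems, which are phrased for a fixed operator $\T$, do not literally apply. I would resolve this by treating $\Phi:=\T_\cdot\Psi$ as a single $L(U,\mathbb R^2)$-valued predictable integrand and re-running the proof of the representation theorems with $\Phi$ in the role of $\T\Psi$ and the identity in the role of $\T$. Concretely, by Lemma~\ref{l:stetigkeit der Punktauswertung} and the standing hypothesis $\int_0^\infty 1/w<\infty$ one has $\Vert\delta_x\Vert_{\op}^2=h_x(x)\leq 1+\int_0^\infty 1/w(y)dy<\infty$ uniformly in $x\geq0$, so $\T_s$ is uniformly bounded and hence $\Phi\in\mathcal L^2_L(\mathbb R^2)$; then Lemma~\ref{l:Zeittransformiertes integral} reduces $\int_0^t\Phi\,dL$ to a Brownian integral on the subordinated clock, the quadratic-covariation formula gives $\<\<\cdot,\cdot\>\>_t=\int_0^t\Phi(s)\Q\Phi(s)^*ds$, and Jacod's representation theorem (as in the proof of Theorem~\ref{S:endl. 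Darstellung, stetig}) produces the $\mathbb R^2$-valued Brownian motion $B$, which is re-subordinated to $N_2$.

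The availability of a genuinely two-dimensional $B$ is where the hypothesis $\dim\ran(\Q)\geq 2$ would enter, the natural analogue of the assumption in Theorem~\ref{S:endl. Darstellung, subordiniert}. The remaining work is routine: the density argument reducing from elementary integrands to all of $\mathcal L^2_L(\mathbb R^2)$, the verification that $\Phi(s)\Q\Phi(s)^*$ is symmetric positive semidefinite so that its square root is well defined and non-negative, and the check that $N_2$ is again a subordinated Brownian motion of the same type as $L$. I do not expect any of these to present difficulty beyond bookkeeping, since each mirrors a step already carried out in the univariate Theorem~\ref{s:dynamik}.
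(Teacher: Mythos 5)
Your proposal is correct and takes essentially the same approach as the paper: the paper's entire proof is the remark that it runs ``along the same lines as the proof of Theorem~\ref{s:dynamik}'', which is exactly your route --- apply the point evaluations $\delta_{T_i-t}$ to the mild representation, use $\delta_y\U_r=\delta_{y+r}$, and invoke the bivariate analogue of Theorem~\ref{S:endl. Darstellung, subordiniert} together with Lemma~\ref{l:stetigkeit der Punktauswertung} to identify the entries of $\Sigma$. Your additional care regarding the $s$-dependence of the functional $\T_s$ (which prevents a literal application of the fixed-$\T$ representation theorems) addresses a point the paper passes over silently, since the same issue already occurs in its proof of Theorem~\ref{s:dynamik}, and your fix of treating the composition as a single predictable $L(U,\mathbb R^2)$-valued integrand is the natural one.
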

\begin{proof}
 This is along the same lines as the proof of Theorem \ref{s:dynamik}.
\end{proof}
We note that $\Sigma(T_1,T_2,s)_{i,i}$ will depend on $T_i$ only, for $i=1,2$. 
The dependency structure between $F(t,T_1)$ and $F(t,T_2)$ will be determined by $\Sigma(T_1,T_2,s)_{1,2}$, which we can intepret as the covariance in case of 
$L=W$ and $\Psi$ deterministic. This explicit representation of two forward
contracts can be utilized in the analysis of calendar spread options, that is, options 
with payoff depending on $F(T,T_1)-F(T,T_2)$ at some exercise time $T\leq T_1$.

\begin{ex}\label{b:Psi=I}
Consider a simple example of a forward dynamics: Let $\Psi(s)=I$, the identity operator
on $H_w$ and assume $W$ is a Wiener process with values in $U=H_w$. Furthermore,
let $\beta=0$. Hence, 
$$
f(t)=\U_tf_0+\int_0^t\U_{t-s}\,dW(s)\,.
$$
Since $\delta_y \U_s=\delta_{y+s}$, we find
\begin{align*}
F(t,T)=\delta_{T-t}f(t)&=f_0(T)+\int_0^t\delta_{T-s}\,dW(s)\,.
\end{align*}
We have that $B_T(t):=\int_0^t\delta_{T-s}dW(s)$, $t\in[0,T]$ and $T> 0$ is a time-inhomogeneous Brownian motion. Its quadratic variation structure is given by 
$\E(B_T^2(t)) = \int_0^t (\Q h_{T-s})(T-s) ds$ and $\E(B_{T_1}(t)B_{T_2}(t)) = \int_0^t(\Q h_{T_1-s})(T_2-s) ds$. Note that the process 
$t\mapsto B_t(t)=\int_0^t\delta_{t-s}dW(s)$ becomes a Gaussian martingale, and 
$$
S(t)=f_0(t)+B_t(t)
$$ 
is the spot price dynamics.
\end{ex}

In view of Theorem~\ref{s:F dynamik} we would like to analyse in more detail the spatial correlation structure for random fields defined as stochastic processes with values in 
the Hilbert space $H_w$.  
For this purpose we show first a technical Lemma, which reveals that two point evaluations on $H_w$ at close points are almost the same and hence cannot be orthogonal at all. In other words, the specific geometric structure of the space $H_w$ implies a strong connection between the different point evaluations.
\begin{lem}\label{l:hoelderstetigkeit der Punktauswertung}
Recall the function $h_x:\mathbb R_+\rightarrow\mathbb R,z\mapsto 1 + \int_0^{x\wedge z}\frac{1}{w(s)}ds$ for any $x\in\mathbb R_+$ defined in Lemma~\ref{l:stetigkeit der Punktauswertung}. Then $$h:\mathbb R_+\rightarrow H_w, x\mapsto h_x$$
 is $0.5$-H\"older continuous with constant $1$. Moreover,
 \begin{align*}
  \Vert h_y - h_x\Vert_w &\geq \sqrt{\frac{y-x}{w(y)}}
 \end{align*}
  for any $0\leq x\leq y<\infty$.  
\end{lem}
\begin{proof}
 Let $0\leq x\leq y<\infty$. Then
\begin{align*}
  \Vert h_y-h_x\Vert_w^2 &\ = \int_x^y\frac{1}{w(s)}ds \begin{cases}\leq y-x&\text{and}\\ \geq \frac{y-x}{w(y)}.& \end{cases}
\end{align*}
and hence $\Vert h_y-h_x\Vert_w \leq \sqrt{y-x}$ and 
$\Vert h_y-h_x\Vert_w \geq \sqrt{(y-x)/w(y)}$. 
\end{proof}
Note that for $0\leq x\leq y<\infty$ we have $\<h_y,h_x\> = \Vert h_x\Vert_w^2$, and therefore it also holds that
$$
 \frac{\<h_y,h_x\>}{\Vert h_y\Vert_w\Vert h_x\Vert_w} = \frac{\Vert h_x\Vert_w}{\Vert h_y\Vert_w}\,. 
$$

Consider now a square integrable random variable $X$ with values in $H_w$ having mean zero and covariance operator $\Q$. The correlation between $X(x):=\delta_x(X)$ and
$X(y)=\delta_y(X)$ is given as
\begin{equation}
\label{def:correl}
\rho:\mathbb R_+^2\rightarrow\mathbb R, (x,y)\mapsto \begin{cases} \frac{\E\left[\delta_x(X)\delta_y(X)\right]}{\sqrt{\E\left[\delta_x(X)^2\right]\E\left[\delta_y(X)^2\right]}} &\text{if the denominator is non-zero,}\\1&\text{otherwise}.\end{cases}
\end{equation}
The next Theorem shows, based on the Lemma above, that the correlation 
converges to one at a rate $\sqrt{|x-y|}$ as $x\rightarrow y$:
\begin{thm}\label{S:lokale korrelation}
Let $X$ be a mean zero, square integrable and $H_w$-valued random variable with covariance operator $\Q$. For any $x\in\mathbb R_+$ there is $\epsilon >0$ such that for $\rho$ defined in
\eqref{def:correl} it holds 
 $$ \rho(x,y) \geq 1 - \frac{2\Vert \Q\Vert_{\mathrm{op}}^{1/2}\sqrt{|x-y|}}{\Vert \Q^{1/2}h_x\Vert_w+\Vert \Q\Vert_{\mathrm{op}}^{1/2}\sqrt{|x-y|}}$$
 for any $|x-y|\leq\epsilon$.
\end{thm}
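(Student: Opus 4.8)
The plan is to recognise $\rho(x,y)$ as the cosine of an angle in $H_w$ and then estimate that angle by elementary Hilbert-space geometry. First I would invoke Lemma~\ref{l:stetigkeit der Punktauswertung} to write $\delta_x(X)=\<h_x,X\>$, so that the defining property of the covariance operator gives $\E\left[\delta_x(X)\delta_y(X)\right]=\<\Q h_x,h_y\>=\<\Q^{1/2}h_x,\Q^{1/2}h_y\>$ and in particular $\E\left[\delta_x(X)^2\right]=\Vert\Q^{1/2}h_x\Vert_w^2$. Substituting into \eqref{def:correl} yields, whenever the denominator is non-zero,
$$ \rho(x,y)=\frac{\<\Q^{1/2}h_x,\Q^{1/2}h_y\>}{\Vert\Q^{1/2}h_x\Vert_w\,\Vert\Q^{1/2}h_y\Vert_w}, $$
which is exactly the cosine between $u:=\Q^{1/2}h_x$ and $v:=\Q^{1/2}h_y$.

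Next I would bound this cosine from below using only Cauchy--Schwarz and the triangle inequality. Writing $a:=\Vert u\Vert_w$ and $d:=\Vert u-v\Vert_w$, one has $\<u,v\>=\Vert u\Vert_w^2+\<u,v-u\>\geq a(a-d)$, while $\Vert v\Vert_w\leq a+d$. Hence, as soon as $d\leq a$ (so that the numerator bound $a(a-d)$ is non-negative), dividing gives $\rho(x,y)\geq\frac{a-d}{a+d}$. To turn this into the stated estimate I would control $d$ by $|x-y|$: since $\Q^{1/2}$ is self-adjoint and positive, $\Vert\Q^{1/2}\Vert_{\op}=\Vert\Q\Vert_{\op}^{1/2}$, and Lemma~\ref{l:hoelderstetigkeit der Punktauswertung} gives $\Vert h_x-h_y\Vert_w\leq\sqrt{|x-y|}$, whence $d=\Vert\Q^{1/2}(h_x-h_y)\Vert_w\leq d':=\Vert\Q\Vert_{\op}^{1/2}\sqrt{|x-y|}$. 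Because $t\mapsto\frac{a-t}{a+t}$ is decreasing and $d\le d'$, replacing $d$ by $d'$ only weakens the bound, producing $\rho(x,y)\geq\frac{a-d'}{a+d'}=1-\frac{2d'}{a+d'}$, which is precisely the asserted inequality.

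The role of $\epsilon$ is to guarantee $d'\le a$ (hence also $d\le a$): this is what makes the geometric step legitimate and, since then $\Vert v\Vert_w\geq a-d>0$, what ensures the denominator in \eqref{def:correl} is genuinely non-zero. Concretely, for $a>0$ one may take $\epsilon:=a^2/\Vert\Q\Vert_{\op}$, noting that $a>0$ forces $\Q\neq0$ and hence $\Vert\Q\Vert_{\op}>0$. The one point needing separate care, and the main (if mild) obstacle, is the degenerate case $\Q^{1/2}h_x=0$, i.e.\ $a=0$: then $\E\left[\delta_x(X)^2\right]=0$, so by the convention in \eqref{def:correl} we have $\rho(x,y)=1$ for every $y$, while the right-hand side equals $1-\frac{2d'}{d'}=-1\le1$, so the inequality holds trivially for any $\epsilon>0$. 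Apart from this case and the need to keep track of the direction of the monotonicity, the argument is routine.
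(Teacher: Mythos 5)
Your proof is correct and follows essentially the same route as the paper's: both reduce $\rho(x,y)$ to the cosine between $\Q^{1/2}h_x$ and $\Q^{1/2}h_y$, bound the numerator from below by Cauchy--Schwarz and the denominator from above by the triangle inequality, control $\Vert \Q^{1/2}(h_x-h_y)\Vert_w$ via the $0.5$-H\"older bound of Lemma~\ref{l:hoelderstetigkeit der Punktauswertung} together with $\Vert\Q^{1/2}\Vert_{\op}=\Vert\Q\Vert_{\op}^{1/2}$, and use the identical choice $\epsilon=\Vert\Q^{1/2}h_x\Vert_w^2/\Vert\Q\Vert_{\op}$ with the same trivial treatment of the degenerate case $\Q^{1/2}h_x=0$. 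Your intermediate step of first working with $d=\Vert\Q^{1/2}(h_x-h_y)\Vert_w$ and then passing to $d'=\Vert\Q\Vert_{\op}^{1/2}\sqrt{|x-y|}$ by monotonicity of $t\mapsto(a-t)/(a+t)$ is only a cosmetic repackaging of the paper's direct substitution.
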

\begin{proof}
Denote by $C:=\Q^{1/2}$ the positive semidefinite square root of $Q$ and let $x\in\mathbb R_+$. If $\Vert \Q^{1/2}h_x\Vert_w=0$, then the claimed inequality is trivial. Thus we assume that $\Vert \Q^{1/2}h_x\Vert_w\neq0$. 

Define
 $$ \epsilon := \frac{\Vert Ch_x\Vert_w^2}{\Vert \Q\Vert_{\mathrm{op}}}$$
 and let $y\in\mathbb R_+$ such that $\vert y-x\vert <\epsilon$. Then Lemma \ref{l:hoelderstetigkeit der Punktauswertung} yields
 $$ \Vert Ch_y\Vert_w \geq \Vert Ch_x\Vert_w-\Vert C\Vert_{\mathrm{op}}\sqrt{\vert y-x\vert} > 0$$
 and hence 
$$\rho(x,y) = \frac{\E\left[\delta_x(X)\delta_y(X)\right]}{\Vert Ch_x\Vert_w\Vert Ch_y\Vert_w}.$$
We have
 $$ \E\left(\delta_x(X)\delta_y(X)\right) = \<\Q h_x,h_y\>$$
 by Lemma \ref{l:stetigkeit der Punktauswertung}. Moreover, we have
 \begin{align*}
   \E\left[\delta_x(X)^2\right] & = \<\Q h_x,h_x\> = \Vert Ch_x\Vert_w^2,
\end{align*}
and
\begin{align*}
   \Vert Ch_y\Vert_w & \leq \Vert Ch_x\Vert_w + \Vert C\Vert_{\mathrm{op}}\Vert h_y-h_x\Vert_w \\
  & \leq \Vert Ch_x\Vert_w + \Vert C\Vert_{\mathrm{op}}\sqrt{\vert y-x\vert}. 
 \end{align*}
Thus we get
\begin{align*}
 \E\left[\delta_x(X)\delta_y(X)\right] & = \<Ch_x,Ch_y\> \\
    & \geq \<Ch_x,Ch_x\>-\vert\<Ch_x,C(h_y-h_x)\>\vert \\
    & \geq \Vert Ch_x\Vert_w^2 - \Vert Ch_x\Vert_w \Vert C\Vert_{\mathrm{op}}\Vert h_y-h_x\Vert_w \\
    & \geq \Vert Ch_x\Vert_w\left(\Vert Ch_x\Vert_w - \Vert C\Vert_{\mathrm{op}}\sqrt{\vert y-x\vert}\right)
\end{align*}
where the third inequality follows from the Cauchy-Schwarz inequality. We conclude
\begin{align*}
  \rho(x,y) & \geq \frac{\Vert Ch_x\Vert_w - \Vert C\Vert_{\mathrm{op}}\sqrt{\vert y-x\vert}}{\Vert Ch_x\Vert_w + \Vert C\Vert_{\mathrm{op}}\sqrt{\vert y-x\vert}} \\
                 & = 1 - \frac{2\Vert \Q\Vert_{\mathrm{op}}^{1/2}\sqrt{\vert y-x\vert}}{\Vert \Q^{1/2}h_x\Vert_w+\Vert \Q\Vert_{\mathrm{op}}^{1/2}\sqrt{\vert y-x\vert}}
\end{align*}
 because $\Vert \Q\Vert_{\mathrm{op}}^{1/2} = \Vert \Q^{1/2}\Vert_{\mathrm{op}}$ by Palmer~\cite[Theorem 3.4.21]{palmer.94}.
\end{proof}

Observe, that there is no a priori upper bound for the correlation. Indeed, consider the random variable $X:=(h_x+h_y)A$ where $A$ is a standard normal random variable, $0\leq x\leq y\leq \infty$. Then, $\rho(x,y) = 1$.

Thm.~\ref{S:lokale korrelation} implies that the correlation among forwards with close maturities is high, which is very natural from a practical perspective as there is little
economical difference of having a commodity delivered at different but nearby times. 
However, it is 
remarkable that the spatial geometry of the space $H_w$ is imposing an explicit lower
bound for the correlation, which is stationary in distance between times to delivery and
following a square-root function.

\subsection{Representing the forwards by a sum of Ornstein-Uhlenbeck type processes}

In commodity markets a popular class of spot price models is given in terms of a
finite sum of Ornstein-Uhlenbeck processes (see for instance Benth et al.~\cite{BSBK-book} 
and the references therein).  In the traditional set-up, one is modelling the logarithmic spot
price dynamics by a finite sum of Ornstein-Uhlenbeck processes, each driven by a
L\'evy process. However, several papers also advocate to model the spot price
dynamics directly by such a finite sum of Ornstein-Uhlenbeck processes (see for example
Benth, Meyer-Brandis and Kallsen~\cite{benth.et.al.05} and 
Garcia, Kl\"uppelberg and M\"uller~\cite{GKM} for the case of power). In a very simplistic setting, one starts out with 
a model
$$
S(t)=\Lambda(t)+X(t)\,,
$$  
where $\Lambda$ is some deterministic seasonality function, and $X$ follows an
Ornstein-Uhlenbeck process
$$
dX(t)=-\lambda X(t)\,dt+dN(t)
$$
for some $\mathbb R$-valued L\'evy process $N$ and $\lambda>0$ a constant. One defines the forward
price to be the conditional expectation of the spot price at delivery, given information at current
time, where the expectation is taken under some pricing measure 
(see Benth et al.~\cite{BSBK-book}). I.e., 
$$
F(t,T)=\E[S(T)\,|\,\mathcal{F}_t]
$$
for $t\leq T$. To avoid unnecessary technical details, we assume that the spot model is already
stated under the pricing measure, and that the L\'evy process $N$ is integrable. Then it is a straightforward task to derive that 
$$
F(t,T)=\Lambda(T)+\frac{1}{\lambda}\E[N(1)](1-e^{-\lambda(T-t)})+
e^{-\lambda(T-t)}X(t)\,.
$$ 
In other words, the forward price is linear in the Ornstein-Uhlenbeck process, with a coefficient
which is exponentially decreasing in time to maturity $T-t$ at a rate given by the 
speed of mean reversion $\lambda$. If we 
consider a spot model being a sum of such Ornstein-Uhlenbeck processes, we find that the above expression for the forward price will generalize naturally to a sum of exponentially weighted Ornstein-Uhlenbeck processes.

We want now to analyse to what extent the opposite holds, that is, when can we represent our 
forward price dynamics $f(t)$ being a process with values in $H_w$ as a weighted series of Ornstein-Uhlenbeck processes? Of course, as long as we are letting the forward price be perturbed by an infinite dimensional noise we cannot expect a {\it finite} sum of Ornstein-Uhlenbeck processes (appropriately weighted) representing the stochastic evolution of the curve in $H_w$, but an {\it infinite} representation may be within reach. The main technical obstacle for such an infinite series representation is that the exponential functions do not constitute any orthogonal set in $H_w$. However, if the forward curve takes 
values in a convenient subspace of $H_w$ having exponential functions as a Riesz basis, 
it turns out that we can find a series representation in terms of Ornstein-Uhlenbeck 
processes. We present some auxiliary results to show this.

Recall that a Riesz basis is a sequence $(x_n)_{n\in\mathbb N}$ in a Hilbert space $H$ such that there is an orthonormal basis $(e_n)_{n\in\mathbb N}$ in $H$ and an invertible linear operator $\T$ such that $\T e_n = x_n$. For further equivalent statements we refer the reader to Young~\cite[Theorem 1.9]{young.80}.
In the next Theorem we find a Riesz basis consisting only of exponential functions for a 'large' subspace of $H_w$. This subspace will contain a natural copy of the first Sobolev space over $L^2([0,T],\mathbb C)$. We formulate our results for the particular choice of weight functions $w:x\mapsto e^{\alpha x}$ for $\alpha>0$ in $H_w$:
\begin{thm}\label{s:exponentielle Riesz basis}
 Let $\lambda>0$, and fix $x_0>0$. 
Then there is a closed subspace $H_w^{x_0}$ such that the following statements hold.
 \begin{enumerate}
  \item $H_w^{x_0}$ has a Riesz basis $(g_n)_{n\in \mathbb Z}$ such that $g_0(x)=1$, $x\in\mathbb R_+$ and $g_n(x) = \frac{1}{\lambda_n\sqrt{x_0}}(1-e^{\lambda_nx})$, $x\in\mathbb R_+,n\neq 1$ where $\lambda_n=\frac{2\pi i n}{x_0}-\lambda-\alpha/2$.
  \item $\overline g_n = g_{-n}$ for any $n\in\mathbb C$.
  \item There is a continuous linear projection $\Pi_{x_0}:H_w\rightarrow H_w^{x_0}$ such that $\Pi_{x_0} g(x) = g(x)$ for any $g\in H_w$, $x\in[0,x_0]$.
  \item $H_w^{x_0}$ is invariant under the semigroup $(\U_t)_{t\geq 0}$ defined in 
Thm.~\ref{s:Halbgruppe von dx}.
  \item If $(g_n^*)_{n\in \mathbb Z}$ is the corresponding biorthogonal system in the sense of Young~\cite[page 28]{young.80} and 
$(\U_t)_{t\geq 0}$ the semigroup defined in Thm.~\ref{s:Halbgruppe von dx}, then $\U_t^*g_n^*=e^{-\lambda_nt}g_n^*$ and $g_0^*=g_0$.
 \end{enumerate}
\end{thm}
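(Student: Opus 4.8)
The plan is to realise $H_w^{x_0}$ concretely and to exploit one arithmetic coincidence: since $w(x)=e^{\alpha x}$ and $\lambda_n=\tfrac{2\pi i n}{x_0}-\lambda-\alpha/2$, every exponential in the list satisfies
$$e^{\lambda_n x_0}=e^{2\pi i n}e^{-(\lambda+\alpha/2)x_0}=\rho,\qquad \rho:=e^{-(\lambda+\alpha/2)x_0},$$
with $\rho$ \emph{independent of $n$}. Consequently, for every $g$ in the linear span of $(g_n)$ the derivative is $\rho$-quasiperiodic, $g'(x+x_0)=\rho\,g'(x)$. I would therefore \emph{define} $H_w^{x_0}$ as the closed span of $(g_n)_{n\in\mathbb Z}$ in $H_w$ and carry this quasiperiodicity through the whole argument as the guiding structural fact.

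I would first establish (1). Using the isometry $g\mapsto(g(0),g')$ from $H_w$ into $\mathbb C\oplus L^2(\mathbb R_+,w\,dx)$ together with $\rho$-quasiperiodicity, a termwise summation over the intervals $[kx_0,(k+1)x_0]$ gives, on the span of $(g_n)$,
$$\int_0^\infty w(x)|g'(x)|^2\,dx=\frac{1}{1-e^{-2\lambda x_0}}\int_0^{x_0}e^{\alpha y}|g'(y)|^2\,dy,$$
so that restriction to $[0,x_0]$ is, up to the constant $(1-e^{-2\lambda x_0})^{-1/2}$, an isometry onto $\mathbb C\oplus L^2([0,x_0],e^{\alpha y}dy)$. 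Under $v\mapsto e^{\alpha y/2}v$ the derivatives $g_n'|_{[0,x_0]}\propto e^{\lambda_n y}$ become $e^{-\lambda y}e^{2\pi i n y/x_0}$, i.e.\ the trigonometric orthonormal system multiplied by the bounded and bounded-below factor $e^{-\lambda y}$; multiplication by such a factor is boundedly invertible on $L^2([0,x_0])$, so $(g_n')_{n\neq0}$ is a Riesz basis of its span, and $(g_n)_{n\in\mathbb Z}$ is a Riesz basis of $H_w^{x_0}$ by Young~\cite[Theorem 1.9]{young.80}. An equivalent but more computational route is to evaluate the Gram matrix $\langle g_n,g_m\rangle$ directly: it is Toeplitz, and its symbol can be summed in closed form and shown to be strictly positive and bounded.

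Statement (2) is immediate from $\overline{\lambda_n}=\lambda_{-n}$. For (4) I would use $\U_t e^{\lambda_n\cdot}=e^{\lambda_n t}e^{\lambda_n\cdot}$, which yields the explicit, essentially triangular action
$$\U_tg_n=\frac{1-e^{\lambda_n t}}{\lambda_n\sqrt{x_0}}\,g_0+e^{\lambda_n t}g_n\in\mathrm{span}(g_0,g_n),$$
together with $\U_t g_0=g_0$; since $H_w^{x_0}$ is closed and $\U_t$ bounded, invariance follows. For (5) I would feed this action into the biorthogonal system. One checks directly that $g_0=1$ is orthogonal to every $g_n$, $n\neq0$, and has unit norm, whence $g_0^*=g_0$ and $g_n^*\perp g_0$ for $n\neq0$. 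Pairing $\langle\U_t^*g_n^*,g_m\rangle=\langle g_n^*,\U_tg_m\rangle$ with the display above, the $g_0$-contribution drops because $g_n^*\perp g_0$, and biorthogonality leaves the claimed eigenrelation $\U_t^*g_n^*=e^{-\lambda_n t}g_n^*$.

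The delicate point, and the one I expect to be the main obstacle, is (3): producing a \emph{bounded} projection that reproduces $g$ on all of $[0,x_0]$. The natural candidate is the quasiperiodic extension operator: given $g\in H_w$, expand $g'|_{[0,x_0]}$ in the exponential Riesz basis of $L^2([0,x_0])$, extend each mode by the rule $v(y+kx_0)=\rho^k v(y)$, and let $\Pi_{x_0}g$ be the antiderivative normalised by $\Pi_{x_0}g(0)=g(0)$. By construction $\Pi_{x_0}g$ agrees with $g$ on $[0,x_0]$ and $\Pi_{x_0}$ is idempotent, while the norm identity from step~(1) shows both the extension and its inverse (restriction) are bounded. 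The crux is to verify that $\Pi_{x_0}g$ genuinely lands in $H_w^{x_0}$ — equivalently, that the restricted exponential system carries every admissible restricted derivative — which is exactly the completeness and minimality content of the exponential Riesz basis and is where the behaviour at the $n=0$ mode must be treated with care.
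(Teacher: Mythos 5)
You follow essentially the same route as the paper: the isometry $f\mapsto(f(0),\sqrt{w}f')$ between $H_w$ and $\mathbb{R}\times L^2(\mathbb{R}_+)$, the observation that the weighted derivatives $\sqrt{w}\,g_n'$ are, on $[0,x_0]$, the trigonometric orthonormal system times the boundedly invertible multiplier $e^{-\lambda y}$, extended to $\mathbb{R}_+$ by $\rho$-quasiperiodicity, and Young~\cite[Theorem 1.9]{young.80}; the paper packages the quasiperiodicity in the extension operator of Lemma~\ref{l:stetige Einbettung}, you in an exact norm identity for restriction, which is the same computation. Your treatment of (2), (4), (5) matches the paper's; indeed your formula $\U_t g_n=e^{\lambda_n t}g_n+g_n(t)g_0$ is the correct one (the paper's $e^{-\lambda_n t}$ is a sign slip, and the eigenvalue in (5) should correspondingly be $e^{\overline{\lambda_n}t}$, of modulus less than one), and, like the paper, your argument for (5) really determines the compression of $\U_t^*$ to $H_w^{x_0}$: the pairings $\langle \U_t^* g_n^*, g_m\rangle$ fix $\U_t^* g_n^*$ only modulo $(H_w^{x_0})^\perp$, and $\U_t^*$ does not leave $H_w^{x_0}$ invariant (e.g.\ $\U_t^* g_0^*=h_t$, the reproducing kernel from Lemma~\ref{l:stetigkeit der Punktauswertung}).

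The genuine gap is (3), and with your definition of $H_w^{x_0}$ it cannot be repaired, because (3) is then false. If $H_w^{x_0}$ is the closed linear span of $g_0=1$ and $g_n$, $n\neq 0$, then for every $h$ in it the restriction $\sqrt{w}\,h'|_{[0,x_0]}$ lies in the closed span of $\{e^{-\lambda y}e^{2\pi i ny/x_0}:n\neq 0\}$, a closed subspace of codimension one in $L^2([0,x_0])$: the full family over $n\in\mathbb{Z}$ is a Riesz basis of $L^2([0,x_0])$ (the image of an orthonormal basis under an invertible multiplication operator), and Riesz bases are minimal. In particular your claim that restriction maps your span \emph{onto} $\mathbb{C}\oplus L^2([0,x_0],e^{\alpha y}dy)$ is incorrect---only the into-isomorphism holds, which is all the Riesz-sequence argument needs, but surjectivity is exactly what (3) requires. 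Concretely, no element of your space agrees on $[0,x_0]$ with $g(x)=(1-e^{\lambda_0 x})/(\lambda_0\sqrt{x_0})\in H_w$, where $\lambda_0=-\lambda-\alpha/2$, so no projection with property (3) can exist. The paper's space is one dimension larger: it sets $H_w^{x_0}:=\{f\in H_w:\sqrt{w}f'\in V\}$ with $V$ the closed span of \emph{all} damped exponentials $e^{(2\pi i n/x_0-\lambda)x}$, $n\in\mathbb{Z}$, including $n=0$. There, restriction of derivatives to $[0,x_0]$ is bijective onto $L^2([0,x_0])$ by Lemma~\ref{l:Riesz basis}(2), so $F:=\{f\in H_w: f|_{[0,x_0]}=0\}$ is a closed algebraic complement and the projection along $F$ is bounded by the closed graph theorem; your explicit quasiperiodic-extension operator also works verbatim on this larger space. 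The price is that the Riesz basis must contain the $n=0$ exponential in addition to $g_0=1$, so the indexing in item (1) needs adjusting; that inconsistency sits in the paper's own statement, but your construction, which builds the space from the listed functions only, turns it from a notational blemish into a failure of (3).
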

\begin{proof}
Let $V$ be as in Lemma \ref{l:Riesz basis}. Then Lemma~\ref{l:Riesz basis} states that $(\tilde g_n)_{n\in\mathbb Z}$ is a Riesz basis of $V$ where
$$\tilde g_n:\mathbb R_+\rightarrow\mathbb C,x\mapsto \frac{1}{\sqrt{x_0}}e^{(\frac{2\pi i n}{x_0}-\lambda) x},\quad n\in\mathbb N.$$
Define $H_w^{x_0}:=\{f\in H_w:f'\sqrt{w}\in V\}$ and
\begin{align*}
 g_n(x) &:= \int_0^x \tilde g_n(y)e^{-y\alpha/2}dy = \frac{1-e^{\lambda_n x}}{\lambda_n\sqrt{x_0}},\quad x\in\mathbb R_+.\\
\Psi: &L^2(\mathbb R_+,\mathbb C)\rightarrow H_w,f\mapsto\left(x\mapsto \int_0^x f(y)e^{-y\alpha/2}dy\right) 
\end{align*}
is an isometric embedding and hence (1) and (2) follow. For the remainder of the proof let $(g_n)_{n\in\mathbb Z}$ be as in (1).

Let $F:=\{f\in H_w:f(x)=0,x\in[0,x_0]\}$. Then $F$ is a closed vector space complement to $H_w^{x_0}$ in $H_w$. Hence there is a continous linear projection $\Pi:H_w\rightarrow H_w^{x_0}$ with kernel $F$. Let $g\in H_w$, $x\in[0,x_0]$. Then $g-\Pi g$ is in the kernel of $\Pi$. Hence $g-\Pi g\in F$ which implies (3).

Let $(\U_t)_{t\geq0}$ be the shift semigroup defined in Thm.~\ref{s:Halbgruppe von dx}. Then
 $$ \U_t g_n = e^{-\lambda_n t}g_n + g_n(t)g_0 \in H_w^{x_0}\,,$$
which shows (4).

Since $g_0$ is normed and orthogonal to $(g_n)_{n\neq 0}$ we have $g_0^*=g_0$. Young~\cite[Theorem 1.8]{young.80} implies that $(g_n^*)_{n\in\mathbb Z}$ is a Riesz basis as well. Let $n\neq 0$. Then we have
\begin{align*}
 \U_t^*g_n^* &= \sum_{k\in\mathbb Z} \<\U_t^*g_n^*,g_k\> g_k^* \\
            &= \sum_{k\in\mathbb Z} \<g_n^*,\U_tg_k\> g_k^* \\
            &= \sum_{k\in\mathbb Z} \<g_n^*,e^{-\lambda_k t}g_k+g_k(t)g_0\> g_k^* \\
            &= e^{-\lambda_n t}g_n^*.
\end{align*}
Hence, the proof is complete.
\end{proof}

In view of Thm.~\ref{s:exponentielle Riesz basis} any stochastic process $X$ on $H_w$ can be mimicked by a stochastic process $Y$ on $H_w^{x_0}$, i.e.\ $Y(t,x)=X(t,x)$ for any $x\in[0,x_0]$. If we assume that the forwards price process $f(t)$ evolves in the space $H_w^{x_0}$, then it can be represented by a sum of Ornstein-Uhlenbeck type processes.
\begin{thm}
 Let $x_0>0$ and $H_w^{x_0}$ be as in Theorem~\ref{s:exponentielle Riesz basis}. Assume that $f(t)$ is $H_w^{x_0}$-valued. Then there is a sequence $(M_n)_{n\in\mathbb N}$ of complex valued square integrable martingales such that
  $$f(t) = S(t) + 2\sum_{n=1}^\infty \Re\left(g_n\int_0^t e^{(s-t)\lambda_n}\left\{\<g^*_n,\beta(s)\>\,ds+dM_n(s)\right\}\right),\quad t\in\mathbb R_+.$$
 where the sum converges almost surely in $H_w$ and $(g_n)_{n\in\mathbb Z}$ is the Riesz basis provided in Thm.~\ref{s:exponentielle Riesz basis}.

 Moreover, if $\Psi$ is deterministic, then $M_n$ is a process with independent increments. If $\Psi$ is deterministic and constant, then $M_n$ is a L\'evy process. If $\Psi$ is deterministic and $L$ a Brownian motion, then $(M_n)_{n\geq 1}$ is a sequence of Gaussian processes.
\end{thm}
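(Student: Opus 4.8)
The plan is to expand $f(t)$ in the Riesz basis $(g_n)_{n\in\mathbb Z}$ of $H_w^{x_0}$ furnished by Theorem~\ref{s:exponentielle Riesz basis}, to identify each coordinate process as a complex Ornstein-Uhlenbeck integral, and finally to recombine conjugate pairs into the real form claimed. Writing $(g_n^*)_{n\in\mathbb Z}$ for the biorthogonal system, every $g\in H_w^{x_0}$ satisfies $g=\sum_{n\in\mathbb Z}\<g_n^*,g\>g_n$ with unconditional convergence in $H_w$; since $f(t)$ is assumed $H_w^{x_0}$-valued, this applies almost surely to $g=f(t)$. The coordinate processes are $c_n(t):=\<g_n^*,f(t)\>$, and the whole proof reduces to computing them.

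First I would compute $c_n(t)$ from the mild representation \eqref{G:f Volterra representation}. Applying the continuous functional $\<g_n^*,\cdot\>$ and pushing it through the Bochner integral and --- exactly as in the proof of Theorem~\ref{S:endl. Darstellung, stetig}, via Peszat and Zabczyk~\cite[Theorem 8.7(v)]{peszat.zabczyk.07} --- through the stochastic integral gives
\begin{align*}
 c_n(t) &= \<g_n^*,\U_tf_0\> + \int_0^t\<g_n^*,\U_{t-s}\beta(s)\>\,ds + \int_0^t\<g_n^*,\U_{t-s}\Psi(s)\,dL(s)\>.
\end{align*}
Now the eigen-relation $\U_{t-s}^*g_n^*=e^{-\lambda_n(t-s)}g_n^*$ from Theorem~\ref{s:exponentielle Riesz basis}(5) collapses each inner product into a scalar weight: $\<g_n^*,\U_{t-s}x\>=e^{(s-t)\lambda_n}\<g_n^*,x\>$. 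Hence, setting $M_n(t):=\<g_n^*,f_0\>+\int_0^t\<g_n^*,\Psi(s)\,dL(s)\>$ --- a complex square integrable martingale, since $\Psi\in\mathcal L^2_L(H_w)$ forces the integrand $\Psi(\cdot)^*g_n^*$ into the requisite integrability class --- I obtain the Ornstein-Uhlenbeck representation
\begin{align*}
 c_n(t) &= \int_0^t e^{(s-t)\lambda_n}\left\{\<g_n^*,\beta(s)\>\,ds + dM_n(s)\right\},
\end{align*}
where the contribution $e^{-\lambda_n t}\<g_n^*,f_0\>$ of the initial datum is carried by the initial value $M_n(0)=\<g_n^*,f_0\>$ picked up by the integral at $s=0$.

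The second step is bookkeeping. For $n=0$ we have $g_0^*=g_0$ with $g_0\equiv 1$, while the explicit formula in Theorem~\ref{s:exponentielle Riesz basis}(1) gives $g_n(0)=0$ for $n\neq 0$; hence $S(t)=\delta_0(f(t))=\sum_{n}c_n(t)g_n(0)=c_0(t)$, so the $n=0$ summand $c_0(t)g_0$ is exactly the spot $S(t)$. For $n\neq 0$ I pair $n$ with $-n$: the symmetry $\overline{g_n}=g_{-n}$, together with $\overline{\lambda_n}=\lambda_{-n}$, $\overline{g_n^*}=g_{-n}^*$ and the fact that $\beta$ and $L$ are real, yields $\overline{M_n}=M_{-n}$ and thus $c_{-n}(t)g_{-n}=\overline{c_n(t)g_n}$. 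Therefore
\begin{align*}
 f(t) &= c_0(t)g_0 + \sum_{n\neq 0}c_n(t)g_n = S(t) + 2\sum_{n=1}^\infty\Re\bigl(c_n(t)g_n\bigr),
\end{align*}
which upon inserting the Ornstein-Uhlenbeck formula for $c_n(t)$ is the claimed identity; the series converges almost surely in $H_w$ because its symmetric partial sums coincide with the partial sums $\sum_{\vert n\vert\le N}c_n(t)g_n$ of the unconditionally convergent Riesz expansion of $f(t)$.

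Finally, the distributional statements read off from $M_n(t)=\<g_n^*,f_0\>+\int_0^t\<\Psi(s)^*g_n^*,dL(s)\>$: if $\Psi$ is deterministic the integrand is deterministic, so the integral has independent increments; if moreover $\Psi$ is constant the integral equals $\<\Psi^*g_n^*,L(t)\>$, a linear functional of the L\'evy process $L$ and hence itself L\'evy; and if in addition $L$ is Brownian, the Wiener integral of a deterministic integrand is Gaussian. The step I expect to cost the most care is the first one: justifying that the biorthogonal functional may be carried inside the stochastic convolution and fused with the semigroup eigen-relation so as to reduce the infinite-dimensional integral to the scalar weight $e^{(s-t)\lambda_n}$, and verifying that the resulting integrands genuinely lie in the integrability classes needed for $c_n$ and $M_n$ to be well-defined square integrable martingales.
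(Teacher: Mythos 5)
Your proposal follows the paper's own proof almost step for step: expand $f(t)$ in the Riesz basis via the biorthogonal system, push $\langle g_n^*,\cdot\rangle$ through the Bochner and stochastic integrals (Peszat--Zabczyk, Theorem 8.7(v)), collapse the shift semigroup with the eigen-relation $\U_{t-s}^*g_n^* = e^{(s-t)\lambda_n}g_n^*$ from Theorem~\ref{s:exponentielle Riesz basis}(5), identify the $n=0$ coordinate with $S(t)$ (you do this by evaluating at $x=0$ and using $g_n(0)=0$ for $n\neq 0$; the paper computes $\langle f(t),g_0^*\rangle=S(t)$ directly --- the two are equivalent), and recombine the $\pm n$ pairs into the $2\Re(\cdot)$ form. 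Your justification of the tail claims about $M_n$ (independent increments, L\'evy, Gaussian) is also exactly what the paper intends by its one-line remark.

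The one genuine divergence is the initial datum, and it is worth dwelling on. The paper's proof writes, for $n\neq 0$, the coordinate $\langle f(t),g_n^*\rangle$ as the sum of the drift and noise convolutions only, silently discarding the term $\langle g_n^*,\U_t f_0\rangle = e^{-\lambda_n t}\langle g_n^*,f_0\rangle$ coming from $\U_t f_0$ in the mild formulation \eqref{G:f Volterra representation}; correspondingly its martingales $M_n(t)=\langle g_n^*,\int_0^t\Psi_s\,dL(s)\rangle$ start at zero. You spotted this term and tried to absorb it by setting $M_n(0)=\langle g_n^*,f_0\rangle$ and letting the integral ``pick it up at $s=0$''. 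Be aware that this device only works under the convention that $\int_0^t\cdots\,dM_n(s)$ charges an atom of mass $M_n(0)$ at $s=0$ (i.e.\ one regards $M_n(0-)=0$); under the usual convention of integrating the increments over $(0,t]$ (as in Jacod--Shiryaev and Peszat--Zabczyk, the references both you and the paper rely on), the constant $M_n(0)$ contributes nothing, and your displayed formula for $c_n(t)$ is then off by exactly the same term the paper drops. The clean repair is to carry the explicit summand $e^{-\lambda_n t}\langle g_n^*,f_0\rangle\,g_n$ in the representation (or to note it vanishes when $\langle g_n^*,f_0\rangle=0$ for all $n\neq0$). Note also a side effect of your shift: when $\Psi$ is deterministic and constant, your $M_n$ is a L\'evy process translated by $\langle g_n^*,f_0\rangle$, hence not literally L\'evy since it does not start at $0$. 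In short: same route as the paper; at the single point where you deviate you are patching a real slip in the paper's argument, but the patch should appear as an explicit additional term rather than through the atom-at-zero convention.
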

\begin{proof}
  Let $(g_n^*)_{n\in\mathbb Z}$ be the biorthogonal system associated with $(g_n)_{n\in\mathbb Z}$. We define $M_n(t) : =\<g_n^*,\int_0^t\Psi_sdL(s)\>$ for any $n\in\mathbb Z$. Observe that $M_n$, $n\geq 1$ have the properties stated at the end of the theorem. We have
  \begin{align*}
    f(t) &= \sum_{n\in\mathbb Z} g_n\<f(t),g_n^*\> \\
        &=\<f(t),g^*_0\> + \sum_{n\in\mathbb Z,n\neq0}\left( g_n\int_0^t \<g_n^*,\U_{t-s}\beta(s) ds\>
          + g_n\int_0^t \<g_n^*,\U_{t-s}\Psi(s) dL(s)\>\right)
  \end{align*}
  for any $t\in\mathbb R_+$. As $g_0^*=1$, we find $\<f(t),g_0^*\>= S(t)$. Moreover, we have
  \begin{align*}
     \int_0^t \<g_n^*,\U_{t-s}\Psi(s) dL(s)\> &= \int_0^t\<\U_{t-s}^*g_n^*,\Psi(s)dL(s)\> \\
                      &= \int_0^t e^{(s-t)\lambda_n} \<g_n^*,\Psi(s)dL(s)\> \\
                      &= \int_0^t e^{(s-t)\lambda_n} dM_n(s)\,.
  \end{align*}
Here we have applied Thm.~\ref{s:exponentielle Riesz basis}, part (5) in the second equality. 
Similarly for the drift part $\beta$ we calculate,
  \begin{align*}
     \int_0^t \<g_n^*,\U_{t-s}\beta(s)\,ds\> &= \int_0^t\<\U_{t-s}^*g_n^*,\beta(s)\,ds\> \\
                      &= \int_0^t e^{(s-t)\lambda_n} \<g_n^*,\beta(s)\,ds\> \\
                      &= \int_0^t e^{(s-t)\lambda_n} \<g_n^*,\beta(s)\>\,ds
  \end{align*}
 for any $n\in\mathbb Z$, $n\neq0$, $t\in\mathbb R_+$. Finally, observe that $g_n\int_0^t e^{(s-t)\lambda_n} dM_n(s)$ is the complex conjugate of $g_{-n}\int_0^t e^{(s-t)\lambda_{-n}} dM_{-n}(s)$ and hence their sum equals $2\Re\left(g_n\int_0^t e^{(s-t)\lambda_n} dM_n(s)\right)$.
\end{proof}

The Theorem tells us that the forward curves are indeed representable as 
an infinite series of (complex-valued) Ornstein-Uhlenbeck processes. We must restrict
our attention to the space $H^{x_0}_w$, however, as we have from above, any element
in $H_w$ can be projected to $H_w^{x_0}$, and the two curves will coincide on $[0,x_0]$.
We may view $x_0$ as the maximal horizon of time-to-maturities of interest in the market. For example, in power markets typically contracts has deliveries for up to 4 years 
(this is the case in the Nordic market NordPool, or the German EEX market). One may model
the forward curve in $H_w$, project it down to $H_w^{x_0}$, and then have the 
representation in terms of Ornstein-Uhlenbeck processes. For the times of deliveries
in question, i.e., for $x\leq x_0$, we will have that the representation coincides
with the dynamics of $f(t)$. Outside, for $x>x_0$, we do not know if this is true
for the curve in $H_w$.

Let us give one final remark on the specific weight function $w$. The case when $w$ is not an exponential function is very delicate and requires an extensive generalization of the 
analysis above. Since we do not want to deviate from our main topic too much, we decided not to include the more general cases here.

\subsection{Factorial models and covariance representation}\label{a:faktorielle Modelle}

As we recall from Peszat and Zab\-czyk~\cite{peszat.zabczyk.07}, $L$ can be written as a sum of 
orthogonal and uncorrelated real-valued L\'evy processes based on the spectral decomposition of the covariance operator $\Q$. 
This decomposition can be used to express the forward dynamics $f(t)$ in terms of a series representation. However, in many practical applications it is difficult to find the spectral decomposition of $\Q$
explicitly. Moreover, for modelling purposes it is sometimes more convenient to specify the directions in which the driving noise pushes the forward curve. For instance, one would like to specify the covariance by
 \begin{align*}
   \Q g = \sum_{n=1}^k \<g_n,g\>g_n
 \end{align*}
where $g_1,\dots,g_k$ is a finite set of functions in $H_w$. In an empirical context,
one may view these functions as factor loadings for observed directions. Note that the representation of $\Q$ in terms of the finite set of functions $\{g_n\}_{n=1}^k$ yields that the noise $L$ can be viewed as a $k$-dimensional
L\'evy process. This can be generalised to infinite sums as long as the extra condition $\sum_{n\in\mathbb N}\Vert g_n\Vert^2_w<\infty$ is satisfied. This section focuses on such decompositions and summarises our main results in this regard.


\begin{rem}
 Let $(L_n)_{n\in I}$, $I\subseteq\mathbb N$ be a family of uncorrelated, mean-zero, square integrable L\'evy processes with variance $1$ and $(y_n)_{n\in I}$ a family in a Hilbert space $H$ such that $\sum_{n\in I}\vert y_n\vert^2 <\infty$. Then $$\sum_{j=1}^NL_n(t)y_n,\quad t\in\mathbb R_+$$
 converges in $L^2(\Omega, H)$ to some $H$-valued mean-zero, square integrable L\'evy process $L$ with covariance
  $$ \Q x = \sum_{n\in I} \<y_n,x\>y_n,\quad x\in H.$$
  If the sequence is stochastically independent, then Peszat and Zabczyk~\cite[Corollary 3.12]{peszat.zabczyk.07} yield that the convergence is $P$-a.s.
\end{rem}


Although we are working in the particular Hilbert space $H_w$, the following result
holds for general Hilbert spaces $H$ and we formulate it in such a case: 
\begin{thm}\label{s:allgemeine Darstellung}
Let $L$ be a square integrable L\'evy process with covariance $\Q$ and $m:=\E L(1)$. Let $(y_n)_{n\in I}$ be a Riesz basis of $H$ with $I=\mathbb N$ or let $(y_n)_{n\in I}$ be a finite linear independent set of elements in $H$. Assume that the covariance $\Q$ of $L$ is given by
 \begin{align}
   \Q x &= \sum_{n\in I} \< x,y_n\> y_n\quad x\in H,
  \end{align}
where
\begin{align}
\sum_{n\in I} \Vert y_n\Vert^2 &<\infty.
 \end{align}
Then there is a family $(L_n)_{n\in I}$ of uncorrelated, mean-zero, square integrable and $\mathbb R$-valued L\'evy processes with $\E L^2_n(1) = 1$ such that
\begin{itemize}
 \item $L_n$ is adapted to the filtration generated by $L$ for any $n\in \mathbb N$,
 \item $L(t) = tm + \sum_{n\in \mathbb N} L_n(t) y_n$ where the sum converges in $L^2(\Omega,H)$ uniformly in $t$ on compact intervalls.
\end{itemize}
 In particular, if $W=L$ is a Wiener process, then $W_n:=L_n$ defines a family of independent standard Brownian motions and $W(t) = \sum_{n\in \mathbb N}W_n(t)y_n$ where the convergence is $P$-a.s.\ for any $t\in\mathbb R_+$.
\end{thm}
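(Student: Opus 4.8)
The plan is to construct the scalar processes $L_n$ by reading off the coordinates of $L$ dual to the family $(y_n)$, and then to verify the listed properties one at a time. First I would center the process: since $m=\E L(1)$, the shifted process $\tilde L(t):=L(t)-tm$ is a mean-zero, square integrable L\'evy process, adapted to the filtration generated by $L$, with the same covariance operator $\Q$. Let $(y_n^*)_{n\in I}$ be the system biorthogonal to $(y_n)_{n\in I}$, i.e.\ $\<y_k,y_n^*\>=\delta_{kn}$; in the finite case this is just the dual basis of the linearly independent family, while in the Riesz-basis case it exists and is itself a complete Riesz basis by Young~\cite[Theorem 1.8]{young.80}. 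I would then set
$$L_n(t):=\<\tilde L(t),y_n^*\>,\quad n\in I,\ t\geq0.$$

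Next I would check that each $L_n$ has the required structure. Being a continuous linear functional applied to $\tilde L$, every $L_n$ is a real-valued, mean-zero, square integrable L\'evy process adapted to the filtration generated by $L$. For the second-order structure I would use the identity $\E[\<\tilde L(t),a\>\<\tilde L(t),b\>]=t\<\Q a,b\>$ together with the assumed form of $\Q$ and biorthogonality. Indeed $\Q y_n^*=\sum_{k\in I}\<y_n^*,y_k\>y_k=y_n$, so that $\E[L_m(t)L_n(t)]=t\<\Q y_m^*,y_n^*\>=t\<y_m,y_n^*\>=t\delta_{mn}$. This yields $\E L_n^2(1)=1$ and pairwise uncorrelatedness at equal times; the extension to unequal times $s\leq t$ follows by splitting $L_n(t)=L_n(s)+(L_n(t)-L_n(s))$ and using that $\tilde L$ has independent increments, giving $\E[L_m(s)L_n(t)]=s\delta_{mn}$.

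It then remains to establish the series representation and its mode of convergence. Setting $S_N(t):=\sum_{n\leq N}L_n(t)y_n$, I would expand the $H$-norm and invoke the uncorrelatedness just obtained: for $N<M$,
$$\E\Vert S_M(t)-S_N(t)\Vert^2=\sum_{N<n,n'\leq M}\E[L_n(t)L_{n'}(t)]\<y_n,y_{n'}\>=t\sum_{N<n\leq M}\Vert y_n\Vert^2,$$
which tends to $0$ as $N\to\infty$, uniformly for $t$ in any compact interval, precisely because of the trace condition $\sum_{n\in I}\Vert y_n\Vert^2<\infty$. Hence $(S_N(t))_N$ is Cauchy in $L^2(\Omega,H)$ with some limit $Z(t)$, and I would identify $Z(t)=\tilde L(t)$ by testing against the biorthogonal system, which spans the relevant subspace ($H$ in the Riesz case, and $\ran\Q\subseteq\mathrm{span}(y_n)$ in the finite case, a space containing $\tilde L(t)$ a.s.): one has $\<Z(t),y_m^*\>=\sum_{n\in I}L_n(t)\<y_n,y_m^*\>=L_m(t)=\<\tilde L(t),y_m^*\>$ for every $m$. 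This delivers $L(t)=tm+\sum_{n\in I}L_n(t)y_n$ with the asserted convergence.

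Finally, in the Wiener case $L=W$ (so $m=0$) each $W_n:=L_n$ is a continuous real L\'evy process with $\E W_n^2(1)=1$, hence a standard Brownian motion; since the whole family consists of linear functionals of the Gaussian process $W$, it is jointly Gaussian, so uncorrelatedness upgrades to independence, and the $P$-a.s.\ convergence of $\sum_n W_n(t)y_n$ then follows from independence of the summands via Peszat and Zabczyk~\cite[Corollary 3.12]{peszat.zabczyk.07}, exactly as in the Remark preceding the theorem. I expect the main obstacle to be the convergence step above: obtaining the uniform-on-compacts $L^2$ estimate and cleanly identifying the limit as $\tilde L$ both hinge on the cross terms vanishing by uncorrelatedness, which is what reduces the estimate to the summability $\sum_{n\in I}\Vert y_n\Vert^2<\infty$.
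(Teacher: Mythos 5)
Your proposal is correct and follows essentially the same route as the paper: both center the process, define $L_n(t)=\<\tilde L(t),y_n^*\>$ via the biorthogonal system, and get $\E[L_m(t)L_n(t)]=t\delta_{mn}$ from $\Q y_n^*=y_n$. The only divergence is the convergence step — the paper expands the orthogonal projection $\Pi L(t)$ pathwise in the Riesz basis and then shows $\Pi L(t)=L(t)$ a.s.\ using $\Q\Pi=\Q$, whereas you prove the partial sums are Cauchy in $L^2(\Omega,H)$ uniformly on compacts and identify the limit by testing against $(y_m^*)$, which if anything justifies the stated mode of convergence more directly.
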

\begin{proof}
 We may assume that $m=0$ because we can work with $\widetilde L(t):=L(t)-tm$ instead.

 For any $n\in I$ let $z_n\in H$ such that $\<z_n,y_k\>=1_{\{n\neq k\}}$ for any $k\in I$. Define
  $$ L_n(t) := \<z_n, L(t)\>,\quad t\in\mathbb R_+ $$
 Let $\Pi$ be the orthonormal projection onto the closed subspace generated by $(y_n)_{n\in I}$. We have
  $$ \Pi(L(t)) = \sum_{n\in I} L_n(t)y_n\quad\text{surely in }H $$
 for any $t\in\mathbb R_+$. Moreover, we have
 \begin{align*}
  \E(L_n(t)L_k(t)) &= t\< \Q z_n, z_k \> \\
                   &= t\< y_n,z_k\> \\
                   &= t1_{\{n=k\}}
 \end{align*}
for any $n,k\in I$. Thus $(L_n)_{n\in I}$ is a a family of uncorrelated, mean-zero, square integrable and $\mathbb R$-valued L\'evy processes with $\E L^2_n(1) = 1$. Moreover, observe that $\Q\Pi = \Q$ and hence we have
 \begin{align*}
  \E \<\Pi(L(t))-L(t) ,x\>^2 &= t\Vert \Q(\Pi x-x)\Vert^2 \\
           &= 0
 \end{align*}
for any $x\in H$.
\end{proof}

For the remainder of this Section~\ref{a:faktorielle Modelle}, we do assume the following {\em standing assumption}
\begin{align}\label{e:gv}
  \Psi(t)=\sigma(t)\T 
\end{align}
where $\sigma$ is some $\mathbb R$-valued, locally bounded and adapted 
stochastic process and $\T$ is a linear operator from $U$ to $H_w$. Then
 $\T \Q \T^*$ is a positive semidefinite trace-class operator. Hence, $\T \Q\T^*$ has a unique positive semidefinite root $C$, which is a Hilbert-Schmidt operator. Note that this specification of $\Psi$ is a simple approach 
to include seasonality and stochastic volatility into the forward curve evolution, while  
$\T$ embeds our L\'evy process into our curve space $H_w$. For example, Barndorff-Nielsen et al.~\cite{BNBV-spot} and Benth~\cite{B-sv} find evidence for stochastic volatility in power and gas spot prices, resp. Seasonality of volatility in forward prices is discussed from an empirical point of view for the NordPool market in Benth and Koekebakker~\cite{BK}, Andresen et al.~\cite{andresen.et.al.10} and Frestad et al.~\cite{benth.et.al.10}, where their results may also indicate presence of stochastic volatility. We have the following series representation of $f(t)$:


\begin{thm}\label{s:Levy Treiber}
 Let $(g_n)_{n\in I}$ be elements of $H_w$, satisfying either, 
\begin{itemize}
 \item $I\subseteq\mathbb N$ is finite and $(g_n)_{n\in I}$ is linear independent, or
 \item $I=\mathbb N$ and $(g_n)_{n\in I}$ is a Riesz-basis of a closed subspace of $H_w$ with $\sum_{n\in I}\Vert g_n\Vert_w^2<\infty$.
\end{itemize}
 Moreover, assume that $(\T \Q \T^*)g= \sum_{n\in I}\<g_n,g\>g_n$ for any $g\in H_w$. Then there is a family of mean-zero $\mathbb R$-valued uncorrelated L\'evy processes $(L_n)_{n\in I}$ with $\E (L_n(1))^ 2 = 1$ such that 
 \begin{align*}
   f(t) &=\U_tf_0 + \int_0^t\U_{t-s}\beta(s)ds + \sum_{n\in I}\int_0^t \sigma(s) \U_{t-s}g_ndL_n(s)
 \end{align*}
for any $t\geq 0$ where the sums converge in $L^2(\Omega,H_w)$. Moreover, if additionally $g_n\in \dom(\partial_x)$ for any $n\in I$, then
 \begin{align*}
   f(t) &= \U_tf_0 + \int_0^t\U_{t-s}\beta(s)ds \\
        &\ + \sum_{n\in I}\left(\int_0^t\int_0^s \left(\U_{s-r}g'_n\right)\sigma(r)dL_n(r)ds + g_n\int_0^t\sigma(s)dL_n(s) \right)
 \end{align*}
 for any $t\in\mathbb R_+$ where the sums converge in $L^2(\Omega,H_w)$. Moreover, under the additional assumption we have
 \begin{align*}
   S(t) &= f_0(t) + \int_0^t\mathcal \delta_{t-s}\beta(s)ds + \sum_{n\in I} \int_0^t g_n(t-s)\sigma(s)dL_n(s) \\
        &=f_0(t) + \int_0^t\U_{t-s}\beta(s)ds\\&\ + \sum_{n\in I} \left(\int_0^t\int_0^s g'_n(s-r)\sigma(r)dL_n(r)ds + g_n(0)\int_0^t\sigma(s)dL_n(s) \right)
 \end{align*}
 for any $t\in\mathbb R_+$.
\end{thm}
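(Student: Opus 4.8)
The plan is to reduce the statement to the already established decomposition Theorem~\ref{s:allgemeine Darstellung}. The crucial observation is that, under the standing assumption $\Psi(t)=\sigma(t)\T$, the noise entering the forward dynamics is governed by the $H_w$-valued L\'evy process $\tilde L(t):=\T L(t)$, which is square integrable, mean zero, and has covariance operator $\T\Q\T^*$. By hypothesis $(\T\Q\T^*)g=\sum_{n\in I}\<g_n,g\>g_n$, so $\tilde L$ satisfies precisely the assumptions of Theorem~\ref{s:allgemeine Darstellung} (with $H$ the closed span of the $g_n$, and $y_n=g_n$). That theorem therefore supplies a family $(L_n)_{n\in I}$ of uncorrelated, mean-zero, square integrable real L\'evy processes with $\E(L_n(1))^2=1$, adapted to the filtration of $L$, such that $\tilde L(t)=\sum_{n\in I}L_n(t)g_n$ in $L^2(\Omega,H_w)$. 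Using the associativity of the stochastic integral with respect to the fixed bounded operator $\T$, the noise term in the mild solution \eqref{G:f Volterra representation} rewrites as $\int_0^t\U_{t-s}\Psi(s)\,dL(s)=\int_0^t\sigma(s)\U_{t-s}\,d\tilde L(s)$, and everything reduces to identifying this integral with the claimed series.

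The main step --- and the principal obstacle --- is to justify the interchange of the infinite sum and the stochastic integral, i.e.\ to prove
$$\int_0^t\sigma(s)\U_{t-s}\,d\tilde L(s)=\sum_{n\in I}\int_0^t\sigma(s)\U_{t-s}g_n\,dL_n(s)$$
with convergence in $L^2(\Omega,H_w)$. Writing $\tilde L^{(N)}(t):=\sum_{n=1}^N L_n(t)g_n$, the truncated integral equals the finite sum $\sum_{n=1}^N\int_0^t\sigma(s)\U_{t-s}g_n\,dL_n(s)$, so it suffices to control the remainder $R_N:=\tilde L-\tilde L^{(N)}=\sum_{n>N}L_n g_n$. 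Since each $L_n$ is a square integrable L\'evy martingale and $(L_n,L_m)$ is jointly L\'evy, the predictable brackets are $\<L_n,L_m\>_t=t\,\E(L_n(1)L_m(1))=t\,1_{\{n=m\}}$; consequently $R_N$ is an $H_w$-valued L\'evy process with covariance $\Q_{R_N}x=\sum_{n>N}\<g_n,x\>g_n$. The It\^o isometry then yields
$$\E\Big\Vert\int_0^t\sigma(s)\U_{t-s}\,dR_N(s)\Big\Vert_w^2=\E\int_0^t\sigma(s)^2\,\tr\big(\U_{t-s}\Q_{R_N}\U_{t-s}^*\big)\,ds\le\E\int_0^t\sigma(s)^2 e^{2\beta(t-s)}\sum_{n>N}\Vert g_n\Vert_w^2\,ds,$$
where I use the quasi-contractivity $\Vert\U_{t-s}\Vert_{\op}\le e^{\beta(t-s)}$ from Lemma~\ref{l:quasi-contractive}. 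By local boundedness of $\sigma$ and the summability $\sum_{n\in I}\Vert g_n\Vert_w^2<\infty$, the right-hand side tends to $0$ as $N\to\infty$; the same bound shows the series converges in $L^2(\Omega,H_w)$ independently of the enumeration. This establishes the first representation after substituting into \eqref{G:f Volterra representation}.

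For the refined representation under $g_n\in\dom(\partial_x)$, the plan is to invoke the generator identity $\U_u g_n=g_n+\int_0^u\U_v g_n'\,dv$ valid for $g_n'=\partial_x g_n\in H_w$ (Theorem~\ref{s:Halbgruppe von dx}), substitute it into each integral, split off the constant term $g_n\int_0^t\sigma(s)\,dL_n(s)$, and apply a stochastic Fubini theorem to exchange the $ds$- and $dL_n$-integrations over the triangle $\{0\le r\le s\le t\}$; the required integrability is guaranteed by the local boundedness of $\sigma$ and of $\Vert\U_v\Vert_{\op}$ on $[0,t]$. Finally, the two spot-price formulas follow by applying the continuous linear functional $\delta_0$ to the respective representations of $f(t)$: continuity lets $\delta_0$ pass through the $L^2$-convergent sum and, by the It\^o isometry, inside each stochastic integral, and the identities $\delta_0\U_u=\delta_u$, $(\U_u g)(0)=g(u)$ together with $\U_tf_0(0)=f_0(t)$ produce the stated expressions. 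The only genuinely delicate point throughout is the uniform-in-$N$ estimate of the preceding paragraph, which is exactly where the hypotheses $\sum\Vert g_n\Vert_w^2<\infty$ and the quasi-contractivity of $(\U_t)_{t\ge0}$ are used.
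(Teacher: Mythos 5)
Your proposal is correct and takes essentially the same route as the paper's own proof: apply Theorem~\ref{s:allgemeine Darstellung} to the process $\T L$, whose covariance is $\T\Q\T^*$, interchange the resulting series with the stochastic integral, use the identity $\U_u g_n = g_n + \int_0^u \U_v g_n'\,dv$ together with a stochastic Fubini argument for the refined representation, and apply $\delta_0$ to obtain the spot formulas. The only difference is that you make explicit the uniform-in-$N$ remainder estimate (via the It\^o isometry, quasi-contractivity of $(\U_t)_{t\geq 0}$ and $\sum_{n}\Vert g_n\Vert_w^2<\infty$) and the Fubini step, which the paper asserts without detail.
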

\begin{proof} The representations for $S$ simply follow from applying the continuous linear functional $\delta_0$ to the representations of $f$. 

Theorem \ref{s:allgemeine Darstellung} applied to the L\'evy process 
$\T(L(t))_{t\geq 0}$ whose covariance is $\T \Q\T^*$ yields a family of mean-zero $\mathbb R$-valued uncorrelated L\'evy processes $(L_n)_{n\in I}$ with $\E (L^2_n(1)) = 1$ such that
$$ \T (L(t)) = \sum_{n\in I} g_n dL_n(t). $$
Thus we have
\begin{align*}
 \int_0^t \U_{t-s}\Psi(s)dL(s) &= \int_0^t \sum_{n\in I} (\U_{t-s}g_n) \sigma(s) dL_n(s) \\
                              &= \sum_{n\in I}\int_0^t (\U_{t-s}g_n) \sigma(s) dL_n(s)
\end{align*}
for any $t\geq 0$. This yields the first representation for $f$. Under the additional assumption we have
 \begin{align*}
  &\ \int_0^t \U_{t-s}g_n \sigma(s)dL_n(s) 
     = \int_0^t\int_0^s \U_{s-r}g'_n\sigma(r)dL_n(r)ds + g_n\int_0^t\sigma(s)dL_n(s)\,,
 \end{align*}
which finalises the proof.
\end{proof}
Exponential curves are of course of particular interest, i.e.\ curves of the form $g_n(x)=\lambda_ne^{-\gamma_nx}$. In this case one can calculate the integrals more explicitly and it turns out that the forward curve $f(t)$ corresponds to a sum of L\'evy-driven Ornstein-Uhlenbeck type processes:
\begin{cor}
\label{cor:OU-repr}
 Let $I=\{1,\dots,d\}$, $d\in\mathbb N$, $(\gamma_n)_{n\in I}$ a family of pairwise different elements in $\mathbb R_+$ and $(\lambda_n)_{n\in I}$ be a family in $\mathbb R$, $g_n(x):=\lambda_ne^{-\gamma_nx}$ for $n\in I$, $x\in\mathbb R_+$ and assume that 
\begin{itemize}
 \item $g_n\in H_w$, and hence $g_n$ is in the domain of $\partial_x$, for any $n\in I$,
 \item $\T\Q\T^*g = \sum_{n\in I}\<g_n,g\>g_n,\quad g\in H_w$ and
 \item $\beta$ has values in the vector space generated by $g_n$, $n\in I$.
\end{itemize}
Then $$ f(t) = \U_tf_0 + \sum_{n\in I} g_n X_n(t)$$
 where 
 $$dX_n(t) = (\mu_n(t)-\gamma_nX_n(t))dt +\sigma(t)dL_n(t).$$
 for any $t\in\mathbb R_+$ where $L_n$ is as in Theorem \ref{s:Levy Treiber} and $\mu_n:\mathbb R_+\times\Omega\rightarrow\mathbb R$ is some predictable process. Moreover, we have
 $$ S(t) = f_0(t) + \sum_{n\in I} Y_n(t)$$
 where $dY_n(t) = (\mu_n(t)-\gamma_nY_n(t))dt + \lambda_n\sigma(t)dL_n(t)$ for any $t\in\mathbb R_+$.
\end{cor}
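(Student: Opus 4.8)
The plan is to reduce everything to the eigenfunction property of the exponential functions under the shift semigroup and then invoke Theorem~\ref{s:Levy Treiber}. The first observation is that for $g_n(x)=\lambda_n e^{-\gamma_n x}$ we have $(\U_t g_n)(x)=g_n(t+x)=e^{-\gamma_n t}g_n(x)$, so each $g_n$ is an eigenvector of $\U_t$ with eigenvalue $e^{-\gamma_n t}$; in particular $g_n'=-\gamma_n g_n\in H_w$, which is exactly why $g_n\in\dom(\partial_x)$. Since the $\gamma_n$ are pairwise distinct the $g_n$ are linearly independent, so the hypotheses of Theorem~\ref{s:Levy Treiber} (a finite, linearly independent family with $\T\Q\T^* g=\sum_{n\in I}\<g_n,g\>g_n$) are satisfied, and I may apply its first representation.

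Next I would treat the drift. By assumption $\beta(s)$ takes values in the finite-dimensional, hence closed, subspace $V:=\mathrm{span}\{g_1,\dots,g_d\}$, so composing the predictable $\beta$ with the continuous coordinate maps of the basis $(g_n)$ of $V$ produces predictable real processes $\mu_1,\dots,\mu_d$ with $\beta(s)=\sum_{n\in I}\mu_n(s)g_n$. Inserting this together with the eigenrelation $\U_{t-s}g_n=e^{-\gamma_n(t-s)}g_n$ into both the drift integral and the stochastic integrals in the representation of Theorem~\ref{s:Levy Treiber} factors $g_n$ out of every term and yields
$$f(t)=\U_t f_0+\sum_{n\in I} g_n X_n(t),\qquad X_n(t):=\int_0^t e^{-\gamma_n(t-s)}\mu_n(s)\,ds+\int_0^t e^{-\gamma_n(t-s)}\sigma(s)\,dL_n(s).$$

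The third step is to recognise $X_n$ as the solution of the asserted Ornstein-Uhlenbeck equation. The process above is precisely the variation-of-constants (mild) solution of $dX_n(t)=(\mu_n(t)-\gamma_n X_n(t))\,dt+\sigma(t)\,dL_n(t)$ with $X_n(0)=0$, which is confirmed by applying the integration-by-parts formula to $e^{\gamma_n t}X_n(t)$; this is routine. Finally, for the spot price I would apply the bounded linear functional $\delta_0$ to the representation of $f$, using $\delta_0\U_t f_0=(\U_t f_0)(0)=f_0(t)$ and $\delta_0 g_n=g_n(0)=\lambda_n$, to get $S(t)=f_0(t)+\sum_{n\in I}\lambda_n X_n(t)$; setting $Y_n:=\lambda_n X_n$ and absorbing the factor $\lambda_n$ into the predictable drift gives $dY_n(t)=(\mu_n(t)-\gamma_n Y_n(t))\,dt+\lambda_n\sigma(t)\,dL_n(t)$, as claimed.

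I expect the only genuinely delicate point to be the measurability and predictability of the coordinate processes $\mu_n$, with the verification that $X_n$ solves the linear SDE a close second; both are standard, the former because the coordinate maps on a finite-dimensional subspace are continuous and linear (so composition with the predictable $\beta$ stays predictable), and the latter because it is the classical stochastic variation-of-constants formula for a linear equation driven by a square integrable semimartingale. Since $I$ is finite, no questions of convergence of the series arise.
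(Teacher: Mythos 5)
Your proof is correct and follows essentially the same route as the paper's: the eigenrelation $\U_{t-s}g_n=e^{-\gamma_n(t-s)}g_n$, expansion of $\beta$ in the coordinate (biorthogonal) system of the finite linearly independent family $(g_n)_{n\in I}$, application of the first representation in Theorem~\ref{s:Levy Treiber}, identification of $X_n$ as the variation-of-constants solution of the Ornstein--Uhlenbeck equation via integration by parts, and application of $\delta_0$ for the spot price. Your treatment of the $Y_n$ drift, absorbing the factor $\lambda_n$ into the predictable process, is if anything slightly more careful than the paper's, which reuses the symbol $\mu_n$ unchanged after applying $\delta_0$.
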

\begin{proof}
 Since $g_n'=-\lambda_ng_n\in H_w$ we have $g_n$ is in the domain of $\partial_x$. Since $(\gamma_n)_{n\in I}$ are pairwise different we have that $(g_n)_{n\in I}$ is linearly independent. Let $V\subseteq H_w$ be the subspace generated by $g_n$, $n\in I$ and $(g_n^*)_{n\in I}$ be the biorthogonal system of $(g_n)_{n\in I}$ in $V$. Define
 $$ \mu_n(t) := \<g_n^*,\beta(t)\>,\quad t\in\mathbb R_+.$$
Then we have $\beta(t) = \sum_{n\in I} g_n\mu_n(t)$, $t\geq 0$. Furthermore,
$$
\mathcal{U}_{t-s}g_n(x)=\lambda_n e^{-\gamma_n(x+t-s)}=g_n(x) e^{-\gamma_n(t-s)}\,,
$$
which yields, using Theorem~\ref{s:Levy Treiber},
\begin{align*}
f(t)&=\U_t f_0+\int_0^t\U_{t-s}\beta(s)\,ds+
\sum_{n\in I}\int_0^t\sigma(s)\mathcal{U}_{t-s}g_n\,dL_n(s) \\
&=\U_t f_0+\int_0^t\U_{t-s}\sum_{n\in I}g_n\mu_n(s)\,ds+\sum_{n\in I}g_n\int_0^t
\sigma(s) e^{-\gamma_n(t-s)}\,dL_n(s)\,\\
&= \U_t f_0+\sum_{n\in I}g_n\left(\int_0^te^{-\gamma_n(t-s)}\mu_n(s)\,ds+\int_0^t
\sigma(s) e^{-\gamma_n(t-s)}\,dL_n(s)\right)\\
&= \U_t f_0+\sum_{n\in I}g_n\left(\int_0^te^{-\gamma_n(t-s)}\big(\mu_n(s)\,ds+\sigma(s)dL_n(s)\big)\right).
\end{align*}
Define the real-valued process
$$
X_n(t)=\int_0^te^{-\gamma_n(t-s)}\big(\mu_n(s)\,ds+\sigma(s)dL_n(s)\big)\,,
$$
which, by integration-by-parts, is an Ornstein-Uhlenbeck process
$$
dX_n(t)=(\mu_n(t)-\gamma_n X_n(t))\,dt+\sigma(t)\,dL_n(t)\,.
$$
The representation for $S$ follows simply from \eqref{def-spot} by applying $\delta_0$ to the representation of $f$.
\end{proof}
Note that the condition $g_n\in H_w$ is of course equivalent to 
$$
\int_0^{\infty}w(x) e^{-2\gamma_n x}\,dx<\infty\,.
$$
In the typical case where $w$ is an exponential function $w(x)=\exp(\alpha x)$ with $\alpha>0$, 
this condition is satisfied if and only if the coefficients $\gamma_n$ are strictly bigger than $\alpha/2$. One possibility to ensure the second condition is to define the driving noise $L(t) :=\sum_{n=1}^d g_n B_n(t)$ where $B$ is some $\mathbb R^d$-valued standard Brownian motion. In that specific setup we have
\begin{cor}\label{C:brownian OU}
  Let $I=\{1,\dots,d\}$, $d\in\mathbb N$, $\alpha>0$, $w(x):=e^{\alpha x}$ for any $x\in\mathbb R_+$, $(\gamma_n)_{n\in I}$ a family of pairwise different elements in $\mathbb R_+$ with $\gamma_n>\alpha/2$ and $(\lambda_n)_{n\in I}$ be a family in $\mathbb R$, $g_n(x):=\lambda_ne^{-\gamma_nx}$ for $n\in I$, $x\in\mathbb R_+$ and assume that 
\begin{itemize}
 \item $L(t) = W(t) := \sum_{n=1}^d g_nB_n(t)$,
 \item $\T$ is the identity operator on $H_w$,
 \item the market $(F(t,T))_{0\leq t\leq T}$ does not allow for arbitrage in the sense of Delbaen and Schachermayer~\cite[Main theorem 1.1]{delbaen.schachermayer.98}
\end{itemize}
where $B=(B_1,\ldots,B_d)$ is an $\mathbb R^d$-valued standard Brownian motion.
Then $$ f(t) = \U_tf_0 + \sum_{n\in I} g_n X_n(t)$$
 where 
 $$dX_n(t) = (\mu_n(t)-\gamma_nX_n(t))dt +\sigma(t)dB_n(t)$$
 and $\mu_n:\mathbb R_+\times\Omega\rightarrow\mathbb R$ is some predictable process. Moreover, we have
 $$ S(t) = f_0(t) + \sum_{n\in I} Y_n(t)$$
 where $dY_n(t) = (\mu_n(t)-\gamma_nY_n(t))dt + \lambda_n\sigma(t)dB_n(t)$ for any $t\in\mathbb R_+$.
\end{cor}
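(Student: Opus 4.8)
The plan is to deduce this corollary directly from Corollary~\ref{cor:OU-repr}, by checking that the present hypotheses imply the three structural assumptions of that corollary; once these are verified the two asserted representations of $f$ and $S$ follow immediately, and the only point requiring genuine argument is the drift condition, which must be extracted from the no-arbitrage assumption.

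First I would record that $g_n\in H_w$. By the remark following Corollary~\ref{cor:OU-repr} this is equivalent to $\int_0^\infty w(x)e^{-2\gamma_n x}\,dx<\infty$, and for $w(x)=e^{\alpha x}$ this integral is finite precisely because $\gamma_n>\alpha/2$; consequently $g_n'=-\gamma_n g_n\in H_w$, so each $g_n$ lies in $\dom(\partial_x)$. Next I would identify the covariance operator. Since $\T$ is the identity we have $\T\Q\T^*=\Q$, and because $W(t)=\sum_{n=1}^d g_n B_n(t)$ with $(B_n)$ independent standard Brownian motions one computes, for $x,y\in H_w$,
$$\<\Q x,y\>=\E\big[\<W(1),x\>\<W(1),y\>\big]=\sum_{n=1}^d\<g_n,x\>\<g_n,y\>,$$
so that $\Q x=\sum_{n\in I}\<g_n,x\>g_n$, which is exactly the required form of $\T\Q\T^*$.

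It remains to verify that $\beta$ takes values in $V:=\mathrm{span}\{g_1,\dots,g_d\}$. Here I would invoke the discussion following Proposition~\ref{p:f ODE}: the absence of arbitrage in the sense of Delbaen and Schachermayer~\cite{delbaen.schachermayer.98} forces $\beta$ to take values in $\ran(\Q^{1/2})$. Since the $\gamma_n$ are pairwise distinct, $(g_n)_{n\in I}$ is linearly independent, so $\Q$ has finite rank with $\ran(\Q)=V$: indeed $\Q V\subseteq V$ and, in the basis $(g_n)$, $\Q$ acts through the Gram matrix $(\<g_n,g_m\>)_{n,m}$, which is positive definite and hence invertible on $V$. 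A finite-dimensional range is closed, whence $\ran(\Q^{1/2})=\ran(\Q)=V$, and therefore $\beta$ has values in $V$ as needed. With all three hypotheses of Corollary~\ref{cor:OU-repr} verified, that corollary yields the stated representations of $f$ and $S$. Finally I would observe that in the present Wiener setting the auxiliary processes $L_n$ produced in Theorem~\ref{s:allgemeine Darstellung} (and used in Corollary~\ref{cor:OU-repr}) are nothing but the $B_n$: writing $L_n(t)=\<z_n,W(t)\>$ for the dual system $z_n$ to $(g_n)$ gives $L_n(t)=\sum_m\<z_n,g_m\>B_m(t)=B_n(t)$, which turns the Ornstein-Uhlenbeck dynamics of Corollary~\ref{cor:OU-repr} into those driven by $B_n$. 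The main obstacle is thus the drift step, but it is already settled by the no-arbitrage analysis recorded after Proposition~\ref{p:f ODE}.
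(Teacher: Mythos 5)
Your proposal is correct in substance and its skeleton is the same as the paper's own proof: both arguments reduce Corollary~\ref{C:brownian OU} to Corollary~\ref{cor:OU-repr} by verifying its three hypotheses, and in both cases the only non-routine point is showing that the drift $\beta$ takes values in $V:=\mathrm{span}\{g_1,\dots,g_d\}$. Your computation of the covariance, $\Q x=\sum_{n\in I}\<g_n,x\>g_n$, is a direct (and correct) alternative to the paper's argument via the biorthogonal system $(g_n^*)$, and your closing observation that the processes $L_n$ of Theorem~\ref{s:allgemeine Darstellung} are literally the $B_n$ (since $L_n(t)=\<z_n,W(t)\>=B_n(t)$) is a point the paper leaves implicit, so it is a welcome addition rather than a detour.

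The one place where you genuinely diverge, and where you should be careful, is the drift step. You cite the informal risk-neutral discussion following Proposition~\ref{p:f ODE}, which concludes that no-arbitrage forces $\beta(t)\in\ran(\Q^{1/2})$; combined with your correct identification $\ran(\Q^{1/2})=\ran(\Q)=V$ for the finite-rank covariance, this gives $\beta(t)\in V$. However, that discussion explicitly \emph{assumes} the existence of a true martingale measure (``to avoid further complications we simply assume\dots''), and it rests on the infinite-dimensional Girsanov theorem of Carmona and Tehranchi~\cite{carmona.tehranchi.06}, whereas the hypothesis of the corollary --- no arbitrage in the sense of Delbaen and Schachermayer~\cite{delbaen.schachermayer.98} --- only delivers an equivalent $\sigma$-martingale measure $Q$. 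The paper's proof bridges exactly this gap: it applies the finite-dimensional Girsanov theorem (Jacod and Shiryaev~\cite[Theorem III.3.24]{js.87}) to $W$, which lives in the $d$-dimensional space $V$, to write $W=W^Q+\int_0^{\cdot}\beta^Q(s)\,ds$ with $\beta^Q$ $V$-valued, and then kills the total drift of $F(\cdot,T)$ using only its $\sigma$-martingale property under $Q$ (a continuous $\sigma$-martingale is a local martingale), concluding $\U_{T-t}\beta(t)\in V$ for all $T\geq t$ and hence, taking $T=t$, $\beta(t)\in V$. So your argument is morally the same but, as written, leans on a passage whose standing assumption is stronger than the hypothesis you are given; to make it complete you should either note that the drift-vanishing argument works verbatim for $\sigma$-martingales, or repeat the paper's finite-dimensional Girsanov argument. (A second, minor point: the cited discussion produces the drift restriction for the noise covariance of the SPDE with general $\Psi$; in the present setting $\Psi(t)=\sigma(t)\,\T$ with $\T$ the identity, so the Girsanov correction indeed stays in $\ran(\Q^{1/2})=V$, and your conclusion is unaffected.)
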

\begin{proof}
 We start to calculate the covariance operator $Q$ of $W$. Let $V\subseteq H_w$ be the vector space generated by $g_1,\dots,g_d$. Then for any $h,g\in H_w$ where $h$ orthogonal to $V$ we have $\<W(1),g\> = 0$ and hence
 $$ \< \Q h,g\> = \E(\<W(1),h\>\<W(1),g\>) = 0. $$
 Let $(g^*_1,\dots,g^*_d)$ be the biorthogonal system for $(g_1,\dots,g_d)$ in $V$. For $n,k\in I$ we have
 $$\<W(1),g_n^*\> = B_n(1)$$ and hence $ \< \Q g^*_n,g^*_k\> = 1_{\{n=k\}}.$ Consequently, $\Q g = \sum_{n=1}^d \<g,g_n\>g_n$.

 Now we show that $\beta$ takes values in the vector space $V\subseteq H_w$ generated by $g_1,\dots,g_d$. The fundamental theorem of asset pricing by Delbaen and
Schachermayer~\cite[Main theorem 1.1]{delbaen.schachermayer.98} yields an equivalent probability measure $Q$ such that $(F(t,T))_{0\leq t\leq T}$ is a $\sigma$-martingale under $Q$ for any $T>0$. Since the $P$-dynamics are given by
 $$ F(t,T) = \U_Tf_0 + \int_0^t \U_{T-t}\beta(s)ds + \int_0^t \U_{T-s}\sigma(s)dW(s) $$
 we know, by using Jacod and Shiryaev~\cite[Theorem III.3.24]{js.87}, that $W(t) = W^Q(t) + \int_0^t\beta^Q(s)ds$  where $W^Q$ is a $Q$-Brownian motion and $\beta^Q:\mathbb R_+\times\Omega\rightarrow V$ is some predictable process. Hence the $Q$-dynamics are given by
 $$ F(t,T) = \U_Tf_0 + \int_0^t \beta_Q(s)ds + \int_0^t \sigma(s)\U_{T-s}dW^Q(s)$$
 where $\beta_Q(t):=\beta(t)+\sigma(t)\U_{T-t}\beta(t)$, $t\geq 0$ is some predictable process. However, since $F(t,T)$, $t\in[0,T]$ is a $\sigma$-martingale under $Q$ we have $\beta_Q=0$. Thus, $\U_{T-t}\beta(t) = -\sigma(t)\U_{T-t}\beta(t) \in V$ for any $0\leq t\leq T$.
\end{proof}

Several remarks are in place in connection to the Corollary above. First of all,
we recall that typical spot price models in commodity markets are given as sums
of Ornstein-Uhlenbeck processes. Lucia and Schwartz ~\cite{lucia.schwart.02}
propose, among other models, a Brownian driven Ornstein-Uhlenbeck dynamics for the spot
price of power in the Nordic electricity market NordPool. They suggest a one or two-factor
model, where in the latter case the Ornstein-Uhlenbeck process degenerates to drifted
Brownian motion.  Benth, Kallsen and Meyer-Brandis~\cite{benth.et.al.05}
propose a general class of models for power prices defined as a sum of Ornstein-Uhlenbeck
processes with different speeds of mean reversion, driven by L\'evy processes.
Corollary~\ref{cor:OU-repr} tells us that our specific choice of the volatility 
structure $\Psi$ in the forward curve dynamics implies such a spot dynamics. 

The representation of the forward curve in terms of a finite sum of Ornstein-Uhlenbeck
processes can also be seen as a special case of the theory of finite dimensional 
realisations of forward rate models in fixed-income theory. In the paper of
Bj\"ork and Gombani~\cite{BG}, it is shown that the solution of the stochastic partial 
differential equation for forward rates can be represented as a linear combination of 
Ornstein-Uhlenbeck processes for special affine-like volatility term structures. Their result
matches very well with Corollary~\ref{cor:OU-repr} above. The theory of Bj\"ork and Gombani~\cite{BG} has later been significantly generalized by Filipovi\'c and Teichmann~\cite{FT1,FT2}, Ekeland and Taflin~\cite{ET}, and more recently by Tappe~\cite{Tappe.10}. The theory of Tappe~\cite{Tappe.10} has been modified and slightly extended to forward curve models in commodity markets by Benth and Lempa~\cite{BL}, where the authors apply it to optimal portfolio management.   

A simulation scheme for the forward price dynamics based on theory for
numerical solution of hyperbolic stochastic differential equations have been desgined and
analysed in Barth and Benth~\cite{BB}. The method applies a decomposition of the 
covariance operator along with a finite element method.


\section{Operators on $H_{w}$}


In this Section we provide an in-depth analysis of various classes 
of operators on the space $H_w$ which are relevant in connection with forward 
price modeling. In the dynamics of the forward price $f(t)$ in \eqref{G:f Volterra representation}, we have
operators present in the volatility, $\Psi$ and the the covariance 
$\Q$ of the noise $L$. Positive trace class operators play an important role for square integrable L\'evy processes, 
cf.\ Peszat and Zabczyk~\cite[Sections 4.4, 4.6]{peszat.zabczyk.07}. These operators are squares of symmetric Hilbert-Schmidt operators, and we provide a complete characterisation of these. It turns out that Hilbert-Schmidt operators on $H_w$ are almost integral operators. Moreover, we analyse the particular cases of integral and 
multiplication operators, which are highly relevant for 
concrete modelling purposes in commodities and energy.  

Throughout this Section, the operator $\mathcal W$ mapping $H_w$ into $L^2(\mathbb R_+)$ defined by
\begin{equation}
\label{r:quadratintegrierbarkeit}
\mathcal W f=\sqrt{w}f'\,,
\end{equation}
will become useful. It holds that 
\begin{equation}
\label{r:quadratintegrierbarkeit:isometrie}
(\delta_0,\mathcal W):H_w\rightarrow \mathbb R\times L^2(\mathbb R_+),f\mapsto(f(0),\sqrt{w} f')
\end{equation}
 is an isometric isomorphism of the Hilbert spaces $H_w$ and $\mathbb R\times L^2(\mathbb R_+)$.

\subsection{Integral operators}\label{subsect:int_op}

A useful class of operators are integral operators. Naturally, an integral operator
on  $H_w$  is 
defined as 
\begin{equation}
\mathcal If(x)=\int_0^{\infty}r(x,y)f(y)\,dy
\end{equation}
for elements $f\in H_w$. Obviously, to have $\mathcal If\in H_w$ depends on 
the integral kernel function $r:\mathbb R_+^2\mapsto\mathbb R$. However,
using the representation
$$
f(y)=f(0)+\int_0^yf'(z)\,dz\,,
$$
we find
\begin{align*}
\mathcal If(x)&=\int_0^{\infty}r(x,y)\,dy\,f(0)+\int_0^{\infty}r(x,y)\int_0^yf'(z)\,dz\,dy \\
&=\int_0^{\infty}r(x,y)\,dy\,f(0)+\int_0^{\infty}\int_z^{\infty}r(x,y)\,dy f'(z)\,dz\,.
\end{align*}
The first term in this representation is a multiplication of a function given by the kernel $r$
with the evaluation of $f$ at zero. The second term is again an integral operator, but now of 
the form
$$
\mathcal{T} f(x)=\int_0^{\infty}q(x,y)f'(y)\,dy  
$$
for some kernel function $q:\mathbb R_+^2\mapsto\mathbb R$. We study these in the sequel,
and consider mutliplication operators in Subsection~\ref{a:muliplikatoren}.

\begin{defn}
 Let $q:\mathbb R_+^2\rightarrow\mathbb R$ be Borel measurable. The {\em integral operator on $H_w$ with kernel $q$} is 
 defined on its domain $\dom(\T)$ for those functions $f\in H_w$ which satisfy
\begin{enumerate}
 \item $\int_0^\infty \vert q(x,y)f'(y)\vert dy<\infty$ for any $x\in\mathbb R_+$ and
 \item $\left(x\mapsto \int_0^\infty q(x,y)f'(y)dy\right) \in H_w$
\end{enumerate}
and it is given by
   $$ \T f(x) := \int_0^\infty q(x,y)f'(y)dy,\quad f\in\dom(\T), x\in\mathbb R_+.$$
\end{defn}
For example,
one can realize the noise field $L$ on $H_w$ and let its covariance
operator $\Q$ be represented as an integral operator. This means 
that we define $\Q$ via a kernel function $q$ as in the definition
above. This kernel function gives a functional description
of the correlation between two neighbouring locations along the curve $L(t)$.
Also, letting $\Psi$ in the dynamics of $f$ in \eqref{G:f dynamik} be an integral operator, enables us to mix the noise $L$ with a kernel function $q$, which in a sense is scaling the various noise sources for 
different locations to make up the noise in one. This is a natural 
generalisation of the case of finite-dimensional noise, where one is
typically having a volatility which is a linear combination of the various 
components in the noise vector. 

In order to understand integral operators we proceed in the following way.

\begin{rem}\label{r:ueberall definiert}
 If $\T$ is an integral operator with kernel $q$, then it is eveywhere defined, i.e.\ its domain equals $H_w$ if and only if
\begin{itemize}
 \item[(1')] $q(x,\cdot)/\sqrt{w}\in L^2(\mathbb R_+)$ for any $x\in\mathbb R_+$ and
 \item[(2')] $\left(x\mapsto \int_0^\infty q(x,y)f'(y)dy\right) \in H_w$ for any $f\in H_w$.
\end{itemize}
\end{rem}

The condition (1') ensures that $f_q(x):=\int_0^{\infty} q(x,y)f'(y)dy$, $x\in\mathbb R_+$ defines a measurable function. While condition (2') ensures that $f_q$ is actually an element of $H_w$.

Sometimes an integral operator $\T$ with kernel $q$ can be extended to those functions $f\in H_w$ where the integral $\int_0^\infty q(x,y)f'(y)dy$ makes sense for any $x\geq 0$. However, we cannot expect that the resulting functions are members of $H_w$ or even continuous but as we show next they are measurable. This technical extension is needed later.
\begin{lem}\label{l:messbarkeit}
 Let $\T$ be an integral operator with kernel $q$ and $f\in H_w$ such that 
$$\int_0^\infty \vert q(x,y)f'(y)\vert dy<\infty$$ for any $x\in\mathbb R_+$. Define
$$ f_q(x):= \int_0^\infty q(x,y)f'(y)dy$$
for any $x\in\mathbb R_+$. Then $f_q$ is a measurable function.
\end{lem}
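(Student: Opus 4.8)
The plan is to recognize this as the standard statement that a parametrised integral of a jointly measurable function is measurable in the parameter, and to reduce it to the Tonelli/Fubini theorem. Writing $g(x,y):=q(x,y)f'(y)$, the claim is exactly that $x\mapsto\int_0^\infty g(x,y)\,dy$ is measurable. Two ingredients are needed: that $g$ is jointly (product) measurable on $\mathbb R_+^2$, and that for each fixed $x$ the slice $y\mapsto g(x,y)$ is integrable. The latter is precisely the hypothesis $\int_0^\infty\vert q(x,y)f'(y)\vert\,dy<\infty$, so the real work lies in the joint measurability.

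First I would fix a genuinely Borel-measurable representative of $f'$. This exists because $f\in H_w\subseteq\mathrm{AC}(\mathbb R_+,\mathbb R)$ has a derivative in $L^1_{\mathrm{loc}}$; equivalently, $\mathcal W f=\sqrt w\,f'\in L^2(\mathbb R_+)$ by \eqref{r:quadratintegrierbarkeit:isometrie}, so $f'=(\mathcal W f)/\sqrt w$ is measurable and may be replaced by a Borel version without changing any of the integrals in the statement. The kernel $q$ is Borel measurable on $\mathbb R_+^2$ by definition. Since $(x,y)\mapsto f'(y)$ is the composition of the coordinate projection with the Borel function $f'$, it is product-measurable, and hence the product $g(x,y)=q(x,y)f'(y)$ is a product-measurable function on $\mathbb R_+^2$.

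With joint measurability in hand, I would split $g=g^+-g^-$ into positive and negative parts, each nonnegative and product-measurable. Tonelli's theorem then gives that the two partial integrals $x\mapsto\int_0^\infty g^\pm(x,y)\,dy$ are measurable functions with values in $[0,\infty]$. The integrability hypothesis forces both of these to be finite for every $x$, so $f_q(x)=\int_0^\infty g^+(x,y)\,dy-\int_0^\infty g^-(x,y)\,dy$ is a pointwise-defined difference of two finite-valued measurable functions, and is therefore measurable.

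The only point requiring genuine care is the joint measurability of the integrand, specifically fixing a Borel representative of $f'$ so that $g$ is product-measurable; once that is secured, the conclusion is a routine application of Tonelli to $g^+$ and $g^-$ together with the per-slice integrability assumption. I do not expect any substantive obstacle beyond this bookkeeping.
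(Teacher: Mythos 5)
Your proof is correct, but it takes a genuinely different route from the paper's. The paper argues by approximation: it chooses elementary functions $r_n\rightarrow q$ pointwise with $\vert r_n\vert\leq 2\vert q\vert$ and elementary $g_n\rightarrow f'$ pointwise Lebesgue-a.e.\ with $\vert g_n\vert\leq 2\vert f'\vert$, observes that each $h_n(x)=\int_0^\infty r_n(x,y)g_n(y)\,dy$ is elementary in $x$ and hence measurable, and then uses dominated convergence (the dominating function $4\vert q(x,\cdot)f'\vert$ being integrable for each fixed $x$ by hypothesis) to exhibit $f_q$ as a pointwise limit of the $h_n$, hence measurable. You instead establish joint product-measurability of $g(x,y)=q(x,y)f'(y)$ --- after the necessary step of fixing a Borel representative of $f'$, which is legitimate since altering $f'$ on a Lebesgue-null set changes none of the integrals in the statement --- and then apply Tonelli to $g^{+}$ and $g^{-}$, using the per-slice integrability hypothesis to write $f_q$ as a difference of two finite-valued measurable functions. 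Both arguments are sound. Your version is shorter and delegates the substantive step to a standard theorem, and it has the merit of making explicit the choice of a Borel version of $f'$, a point the paper leaves implicit in the a.e.\ convergence $g_n\rightarrow f'$. The paper's version is more self-contained: the measurability of partial integrals of nonnegative product-measurable functions (precisely the part of Tonelli you invoke) is itself ordinarily proved by the kind of approximation argument the paper writes out, so in effect the paper inlines a hands-on proof of the tool you cite, at the cost of glossing over why each $h_n$ is measurable --- which is exactly the point your joint-measurability discussion addresses head-on.
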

\begin{proof}
 Let $r_n:\mathbb R_+^2\rightarrow\mathbb R$ such that $r_n$ is elementary $r_n\rightarrow q$ pointwise and $\vert r_n(x,y)\vert\leq 2\vert q(x,y)\vert$. Let $g_n:\mathbb R_+\rightarrow\mathbb R$ such that $g_n$ is elementary, $g_n\rightarrow f'$ pointwise Lesbesgue-a.e.\ and $\vert g_n(x)\vert\leq 2\vert f'(x)\vert$. Then $r_ng_n\rightarrow qf'$ and $\vert r_ng_n\vert \leq 4\vert qf'\vert$. Thus the dominated convergence theorem yields
 $$ h(x) := \int_0^\infty q(x,y)f'(y) dy = \lim_{n\rightarrow\infty} \int_0^\infty r_n(x,y)g_n(y)dy.$$
 However, $h_n:\mathbb R_+\rightarrow\mathbb R,x\mapsto\int_0^\infty r_n(x,y)g_n(y)dy$ is elementary and hence measurable. Thus $h$ is the pointwise limit of measurable functions and hence measurable.
\end{proof}

Recall, that an operator $\T$ is defined to be closed if for any sequence $(f_n)_{n\in\mathbb N}$ in the domain of $\T$ such that $f_n$ converges to some $f\in H_w$ and $\T f_n$ converges to some $g\in H_w$ we have $f$ is in the domain of $\T$ and $\T f=g$. Typically, on $L^2(\mathbb R_+)$ integral operators, which are exactly defined for those functions where the integral expression yields an $L^2$-function, are {\em not closed!}, cf.\ Grafakos~\cite[Theorem I.1]{grafakos.08}. However, an integral operator $\T$ on $H_w$ satisfying condition (1') is closed as we now show.
\begin{lem}\label{l:abgeschlossenheit von T}
 Let $\T$ be an integral operator with kernel $q$. Assume that (1') in Remark \ref{r:ueberall definiert} is satisfied, i.e.\ $q(x,\cdot)/\sqrt{w}\in L^2(\mathbb R_+)$ for any $x\in\mathbb R_+$. Then $\T$ is a closed linear operator.
\end{lem}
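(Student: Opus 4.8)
The plan is to exploit the isometric isomorphism \eqref{r:quadratintegrierbarkeit:isometrie} together with the continuity of the point evaluations established in Lemma~\ref{l:stetigkeit der Punktauswertung}. First I would observe that condition (1') lets us rewrite the action of $\T$ as a fixed $L^2$-pairing. Indeed, for every $f\in H_w$ the isometry gives $\sqrt{w}f'\in L^2(\mathbb R_+)$, so by the Cauchy--Schwarz inequality the product $q(x,\cdot)f'$ is integrable, whence condition (1) in Remark~\ref{r:ueberall definiert} holds automatically and
$$
\T f(x)=\int_0^\infty q(x,y)f'(y)\,dy=\left\langle \frac{q(x,\cdot)}{\sqrt{w}},\,\sqrt{w}f'\right\rangle_{L^2(\mathbb R_+)}
$$
for each $x\in\mathbb R_+$. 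The right-hand side is a fixed bounded linear functional of $\sqrt{w}f'\in L^2(\mathbb R_+)$, the bound depending only on $x$ through the fixed element $q(x,\cdot)/\sqrt{w}$.

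Next I would take a sequence $(f_n)_{n\in\mathbb N}$ in $\dom(\T)$ with $f_n\to f$ and $\T f_n\to g$ in $H_w$, and establish the two pointwise limits that force the conclusion. By the isometry \eqref{r:quadratintegrierbarkeit:isometrie}, convergence $f_n\to f$ in $H_w$ entails $\sqrt{w}f_n'\to\sqrt{w}f'$ in $L^2(\mathbb R_+)$; hence, for each fixed $x$, continuity of the $L^2$-inner product in its second argument yields
$$
\T f_n(x)\longrightarrow\left\langle \frac{q(x,\cdot)}{\sqrt{w}},\,\sqrt{w}f'\right\rangle_{L^2(\mathbb R_+)}=\int_0^\infty q(x,y)f'(y)\,dy.
$$
On the other hand, $\T f_n\to g$ in $H_w$ and the continuity of the point evaluation $\delta_x$ on $H_w$ (Lemma~\ref{l:stetigkeit der Punktauswertung}) give $\T f_n(x)\to g(x)$ for every $x$.

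Comparing the two limits then shows $\int_0^\infty q(x,y)f'(y)\,dy=g(x)$ for all $x\in\mathbb R_+$. Since $g\in H_w$, the candidate function $x\mapsto\int_0^\infty q(x,y)f'(y)\,dy$ coincides with $g$ and is therefore an element of $H_w$; that is, condition (2) in Remark~\ref{r:ueberall definiert} is met, so $f\in\dom(\T)$ and $\T f=g$, which is exactly closedness.

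The only delicate point — and the reason (1') is precisely the right hypothesis — is the rewriting in the first step: without (1') the integral defining $\T f(x)$ need not converge, and one could not represent $\T f(x)$ as a fixed continuous functional of $\sqrt{w}f'$. Once that representation is in place, closedness is essentially automatic, because $H_w$-convergence simultaneously controls the $L^2$-norm of the derivative (through $\mathcal W$) and every single point value (through $\delta_x$), so the two pointwise limits are guaranteed to agree and no genuine analytic obstacle remains.
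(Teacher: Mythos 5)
Your proof is correct and follows essentially the same route as the paper's: both arguments use the isometry $\mathcal W f=\sqrt{w}f'$ to turn $H_w$-convergence of $f_n$ into $L^2$-convergence of the weighted derivatives, pair against the fixed $L^2$-element $q(x,\cdot)/\sqrt{w}$ (your ``continuity of the inner product'' is the paper's $L^1$-convergence of $b(x,\cdot)\mathcal W f_n$ via Cauchy--Schwarz), and then invoke the continuity of the point evaluation $\delta_x$ to identify the pointwise limit with $g(x)$. The paper's proof is the same argument written with the kernel $b(x,y)=q(x,y)/\sqrt{w(y)}$ made explicit.
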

\begin{proof}
 Linearity of $\T$ is obvious. Define $b(x,y):=q(x,y)/\sqrt{w(y)}$ for any $x,y\in\mathbb R_+$. Then $b(x,\cdot)\in L^2(\mathbb R_+)$ for any $x\in\mathbb R_+$ by assumption. Let $(f_n)_{n\in\mathbb N}$ be a sequence of elements in the domain of $\T$ which converges in $H_w$ to some $f$ such that 
$\T f_n$ converges to some $g$ in $H_w$. Recall the 
operator $\mathcal W$ defined in \eqref{r:quadratintegrierbarkeit}, which is a continuous
linear operator.
Thus $\mathcal W f_n\rightarrow \mathcal W f$ in $L^2$. Let $x\in\mathbb R_+$. We have
 $$q(x,\cdot)f'_n = b(x,\cdot)\mathcal W f_n \rightarrow b(x,\cdot)\mathcal W f = q(x,\cdot)f'$$
 where the convergence is in $L^1(\mathbb R_+)$. Hence
 \begin{eqnarray*}
    \T f_n(x) &=& \int_0^\infty q(x,y)f'_n(y) dy \\
            &\rightarrow& \int_0^\infty q(x,y)f'(y)dy \\
            &=:& h(x).
 \end{eqnarray*}
 Lemma \ref{l:stetigkeit der Punktauswertung} yields that 
$\T f_n(x)\rightarrow g(x)$. Thus $h = g$. Consequently, 
$h\in H_w$ which implies that $f\in\dom(\T)$ and $g=\T f$.
\end{proof} 

The following corollary shows that an integral operator is continuous if and only if it is everywhere defined. This is a huge improvement compared to the statement in Remark \ref{r:ueberall definiert}. However, boundedness conditions for integral operators which are not everywhere defined are much more involved.
\begin{cor}\label{k:stetigkeit}
 Let $\T$ be an integral operator with kernel $q$. 
$\T$ is continuous if and only if the following two statements 
are satisfied.
\begin{enumerate}
 \item $q(x,\cdot)/\sqrt{w}\in L^2(\mathbb R_+)$ for any $x\in\mathbb R_+$.
 \item $\left(x\mapsto \int_0^\infty q(x,y)f'(y)dy\right) \in H_w$ for any $f\in H_w$.
\end{enumerate}
\end{cor}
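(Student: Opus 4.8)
The plan is to recognise that conditions (1) and (2) of the corollary are verbatim the conditions (1') and (2') appearing in Remark~\ref{r:ueberall definiert}, and that the latter characterise exactly when the integral operator $\T$ is everywhere defined, i.e.\ when $\dom(\T)=H_w$. Thus the corollary is equivalent to the assertion that $\T$ is continuous if and only if it is everywhere defined, and the proof splits into two implications around this reformulation.

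For the forward implication I would argue that continuity of $\T$ already presupposes that $\T$ is a bounded linear map $H_w\rightarrow H_w$, hence everywhere defined; applying Remark~\ref{r:ueberall definiert} in the direction ``everywhere defined $\Rightarrow$ (1') and (2')'' then delivers (1) and (2) immediately. This direction is essentially bookkeeping and requires no new estimate.

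The substance lies in the converse. Assuming (1) and (2), Remark~\ref{r:ueberall definiert} yields $\dom(\T)=H_w$, so $\T$ is a linear operator defined on the whole Hilbert space. Next I would invoke Lemma~\ref{l:abgeschlossenheit von T}: condition (1) is precisely its hypothesis (1'), so $\T$ is a \emph{closed} operator. Having both that $\T$ is closed and that its domain is all of the Banach space $H_w$, the closed graph theorem applies and forces $\T$ to be continuous. The key point is that the operator-theoretic heavy lifting — closedness of $\T$ despite integral operators typically failing to be closed on $L^2$, as noted after Lemma~\ref{l:messbarkeit} — has already been carried out in Lemma~\ref{l:abgeschlossenheit von T}, so here one only needs to assemble ``closed $+$ everywhere defined $\Rightarrow$ bounded.''

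The main obstacle I anticipate is purely conceptual rather than computational: one must be careful that the two numbered conditions in the corollary are genuinely the same as (1') and (2') in Remark~\ref{r:ueberall definiert}, so that no gap is hidden in the phrase ``everywhere defined.'' Once that identification is made, the proof is short, the only non-elementary ingredient being the closed graph theorem together with the previously established closedness of $\T$.
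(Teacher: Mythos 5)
Your sufficiency argument is exactly the paper's: conditions (1) and (2) make $\T$ everywhere defined, Lemma~\ref{l:abgeschlossenheit von T} (whose hypothesis is precisely condition (1)) makes it closed, and the closed graph theorem yields continuity. Where you diverge is the necessity direction, and there your proof is not self-contained: you dispatch it by citing Remark~\ref{r:ueberall definiert}, but that Remark is stated in the paper without proof, and its ``only if'' half --- everywhere defined implies $q(x,\cdot)/\sqrt{w}\in L^2(\mathbb R_+)$ --- is exactly the nontrivial content you need. The paper's own proof does not invoke the Remark here; it argues directly: if $\T$ is continuous, hence everywhere defined, then for every $f\in H_w$ and every $x\in\mathbb R_+$,
$$
\int_0^\infty \vert q(x,y)f'(y)\vert\,dy=\int_0^\infty \left\vert \frac{q(x,y)}{\sqrt{w(y)}}\right\vert\,\vert \sqrt{w(y)}f'(y)\vert\,dy<\infty,
$$
and since $\sqrt{w}f'$ runs through all of $L^2(\mathbb R_+)$ as $f$ runs through $H_w$ (given $g\in L^2(\mathbb R_+)$, take $f(x)=\int_0^x w(y)^{-1/2}g(y)\,dy$), the function $q(x,\cdot)/\sqrt{w}$ is integrable against every $L^2$ function and must itself lie in $L^2(\mathbb R_+)$; this last step is a converse-H\"older fact, obtained for instance by applying the closed graph theorem or uniform boundedness to the functional $g\mapsto\int_0^\infty q(x,y)w(y)^{-1/2}g(y)\,dy$. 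Condition (2) is then immediate from the definition of $\dom(\T)$. In the paper's logical arrangement it is this corollary's proof that supplies the justification for the hard direction of Remark~\ref{r:ueberall definiert}, so leaning on the Remark for that direction verges on circularity; the repair is cheap --- substitute the duality argument above for the citation --- after which your proof coincides with the paper's.
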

\begin{proof}
 Assume that the two conditions are satisfied. Then $\T$ 
is everywhere defined by definition. Lemma \ref{l:abgeschlossenheit von T} 
yields that $\T$ is a closed operator. The closed graph theorem 
implies that $\T$ is a continuous linear operator.

 Now assume that $\T$ is a continuous linear operator. 
Let $f\in H_w$ and $x\in\mathbb R_+$. Then $f$ is in the domain of 
$\T$ and consequently it satisfies condition (2) by definition 
and
$$
\int_0^\infty \vert q(x,y)/w(y) (wf')(y))\vert dy = \int_0^\infty \vert q(x,y)f'(y)\vert dy<\infty.
$$ 
Since this is true for any $f\in H_w$ we conclude that 
$y\mapsto q(x,y)/w(y) \in L^2(\mathbb R_+)$. This proves the Corollary.
\end{proof}
Let us consider an example which constitute an important case for power markets. 
In our definition of the forward curve dynamics $f(t)$, see \eqref{G:f dynamik}, 
we model the forward price of contracts with a {\it fixed} time to delivery
$x\in\mathbb R_+$. 
In the power market one is trading in contracts delivering over a
given {\it period} of time, and these contracts can be viewed as holding a
portfolio of forwards delivering at each fixed time instant in the 
delivery period (see Benth and Koekebakker~\cite{BK}). 
If we denote by $F^{\tau}(t,x)$ the forward price
at time $t$ of a contract with time to delivery $x\in\mathbb R_+$ and
{\it length} of delivery $\tau\in\mathbb{R}_+$, we find that
\begin{equation}
F^{\tau}(t,x)=\frac{1}{\tau}\int_x^{x+\tau}f(t,y)\,dy\,.
\end{equation}
The forward price $F^{\tau}(t,x)$ is the average of the function 
$y\mapsto f(t,y)$ over the 
delivery period $[x,x+\tau]$ by market convention. From the representation 
$$
f(t,y)=f(t,x)+\int_x^y\frac{\partial f}{\partial x}(t,z)\,dz
$$ 
for $y\geq x$, we find from Fubini's theorem
\begin{align*}
F^{\tau}(t,x)&=\frac1{\tau}\int_x^{x+\tau}\left(f(t,x)+\int_x^y
\frac{\partial f}{\partial x}(t,z)\,dz\right)\,dy \\
&=f(t,x)+\frac1{\tau}\int_x^{x+\tau}\int_z^{x+\tau}\,dy\frac{\partial f}{\partial x}(t,z)\,dz \\
&=f(t,x)+\int_0^{\infty}q^{\tau}(x,y)\frac{\partial f}{\partial x}(t,y)\,dy
\end{align*}
with
$$
q^{\tau}(x,y)=\frac1{\tau}(x+\tau-y)\mathbf{1}_{[x,x+\tau]}(y)\,.
$$
Denote by $\T^{\tau}$ the integral operator with kernel
$q^{\tau}$. We observe that the function $y\mapsto q^{\tau}(x,y)$ has support 
on $[x,x+\tau]$ for each given $x\in \mathbb R_+$, and hence Condition (1) 
in Corollary~\ref{k:stetigkeit} is trivially satisfied  
since $w$ is a continuously increasing function with $w(0)=1$.
Moreover, since
$$
(\T^{\tau}f)'(x)=\frac1{\tau}\frac{d}{dx}\int_x^{x+\tau}(x+\tau-y)
f'(y)\,dy=\frac{f(x+\tau)-f(x)}{\tau} - f'(x)
$$
it holds that
\begin{align*}
 \int_0^\infty w(x) ((\T^\tau f)'(x)+f'(x))^2dx 
                   &\leq \frac{1}{\tau}\int_0^\infty w(x) \int_x^{x+\tau} (f'(y))^2\frac{dy}{\tau}dx \\
                   &\leq \frac{1}{\tau}\int_0^\infty \int_x^{x+\tau} w(y)(f'(y))^2\frac{dy}{\tau}dx \\
                   &\leq \frac{1}{\tau}\int_0^\infty  w(y)(f'(y))^2 dy < \infty, 
\end{align*}
for every $f\in H_w$, where we used respectively Jensen's inequality, that $w$ is increasing 
and Tonelli's theorem. Hence,
by the triangle inequality we find,
\begin{align*}
 \Vert \T^{\tau}f \Vert_w &\leq \Vert \T^{\tau} f + f\Vert_w + \Vert f \Vert_w <\infty.
\end{align*}
This shows that Condition (2) of 
Corollary~\ref{k:stetigkeit} is satisfied, which implies that 
$\T^{\tau}$ is a continuous integral operator on $H_w$. Moreover,
it follows that $F^{\tau}(t)\in H_w$ for any forward curve $f(t)\in H_w$. 
Hence, we conclude that the power forward price dynamics for given length 
of delivery $\tau$ can be realized in $H_w$ as an application of a 
continuous integral operator along with the identity 
map on forward curves realized in $H_w$:
$$
F^{\tau}(t)=(\text{Id}+\T^{\tau})(f(t))
$$ 
We note here that $\tau$, the length of delivery, is viewed as a
parameter. One may in fact introduce this as a second variable, and 
introduce function spaces on $\mathbb R_+^2$ in a similar fashion as 
$H_w$. We leave this for forwards studies.

We continue with the general analysis of integral operators on $H_w$. 
In order to show that an integral operator defined by a function $q$ is bounded it is useful to consider Schur's lemma (see Grafakos~\cite[Appendix I.1]{grafakos.08}).
Remark that this Lemma is sometimes known as {\it Schur's test}. However, we first have to identify integral operators on $H_w$ with integral operators on $L^2(\mathbb R_+)$.
\begin{thm}\label{S:stetige integraloperatoren}
 Let $\T$ be an integral operator on $H_w$ with kernel $q$. 
Assume that $q$ has an absolute derivative with respect to its first 
component, define $$b(x,y):=\partial_1q(x,y)\sqrt{w(x)/w(y)}$$ and assume that 
\begin{enumerate}
 \item $q(0,\cdot)/\sqrt{w}\in L^2(\mathbb R_+)$,
 \item $\sup_{x\in\mathbb R}\int_0^\infty \vert b(x,y)\vert dy=:A<\infty$ and
 \item $\sup_{y\in\mathbb R}\int_0^\infty \vert b(x,y)\vert dx=:B<\infty$.
\end{enumerate}
Then $\T$ is densely defined and bounded by 
$$
c := \left(\int_0^\infty\frac{q^2(0,y)}{w(y)}dy+ AB\right)^{1/2}\,.
$$ 
Consequently, $\T$ can be uniquely extended to a continuous 
linear operator. Moreover, $b$ is locally integrable in its second 
variable and hence we can define the function
$$
q^*(y,x) := \int_0^y\sqrt{\frac{w(x)}{w(z)}}b(x,z)dz\,.
$$
The dual of $\T$ is a continuous linear operator, bounded by $c$ and 
$$
\T^*g(y) = \left(\int_0^y\frac{q(0,z)}{w(z)}dz\right)g(0) 
+ \int_0^\infty q^*(y,x)g'(x) dx\,,
$$
for $g\in H_w$, $y\in\mathbb R_+$ such that 
$wg'\in L^1(\mathbb R_+)\cap L^\infty(\mathbb R_+)$.

Moreover, if additionally 
$$
(1')\quad q(x,\cdot)/\sqrt{w}\in L^2(\mathbb R_+)\,,
$$
for any $x\in\mathbb R_+$, then $\T$ is a continuous linear operator. 
If $q^*(y,\cdot)/\sqrt{w}\in L^2(\mathbb R_+)$ for any $y\in\mathbb R_+$, 
then the representation for $\T^*$ extends to all functions in $H_w$.
\end{thm}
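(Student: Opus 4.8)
The plan is to push the entire statement through the isometric isomorphism $(\delta_0,\mathcal{W})\colon H_w\to\mathbb{R}\times L^2(\mathbb{R}_+)$ of \eqref{r:quadratintegrierbarkeit:isometrie}, under which an integral operator of the given shape becomes an honest integral operator on $L^2(\mathbb{R}_+)$ governed by Schur's test (Grafakos~\cite[Appendix I.1]{grafakos.08}). Writing $f'=\mathcal{W}f/\sqrt{w}$, at the level of the two coordinates one has $(\T f)(0)=\int_0^\infty q(0,y)f'(y)\,dy=\langle q(0,\cdot)/\sqrt{w},\mathcal{W}f\rangle_{L^2}$, and, using the representation $q(x,y)=q(0,y)+\int_0^x\partial_1 q(s,y)\,ds$ together with a Fubini argument, $\T f$ is absolutely continuous with $\mathcal{W}(\T f)(x)=\int_0^\infty b(x,y)\,\mathcal{W}f(y)\,dy$. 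Thus $f\mapsto\T f$ is, in the $\mathcal{W}$-coordinate, exactly the integral operator $K$ on $L^2(\mathbb{R}_+)$ with kernel $b$, while the $\delta_0$-coordinate is the bounded functional attached to $q(0,\cdot)/\sqrt{w}$, which lies in $L^2(\mathbb{R}_+)$ by (1).

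With this identification the bound is immediate. Assumptions (2) and (3) are precisely the row/column hypotheses of Schur's test, so $\Vert K\Vert_{\op}\le\sqrt{AB}$; combined with $\vert(\T f)(0)\vert\le\Vert q(0,\cdot)/\sqrt{w}\Vert_{L^2}\Vert\mathcal{W}f\Vert_{L^2}$ and the Pythagorean identity $\Vert\T f\Vert_w^2=((\T f)(0))^2+\Vert\mathcal{W}\T f\Vert_{L^2}^2$ together with $\Vert\mathcal{W}f\Vert_{L^2}\le\Vert f\Vert_w$, this gives $\Vert\T f\Vert_w\le c\Vert f\Vert_w$. I would run this on the dense subspace $D:=\{f\in H_w:wf'\in L^1(\mathbb{R}_+)\cap L^\infty(\mathbb{R}_+)\}$ (dense because bounded compactly supported functions are dense in the $\mathcal{W}$-coordinate $L^2(\mathbb{R}_+)$); the same estimates show $D\subseteq\dom(\T)$, so $\T$ is densely defined and bounded by $c$ and extends uniquely to a continuous operator.

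For the adjoint I would evaluate $\langle\T f,g\rangle_w$ for $f\in D$ coordinatewise, obtaining $\langle\T f,g\rangle_w=\langle\mathcal{W}f,\,g(0)\,q(0,\cdot)/\sqrt{w}+K^*\mathcal{W}g\rangle_{L^2}$, where $K^*$ is the integral operator with transposed kernel $(y,x)\mapsto b(x,y)$. Since $\T$ kills constants this depends on $f$ only through $\mathcal{W}f$; matching against $\langle f,\T^*g\rangle_w=f(0)(\T^*g)(0)+\langle\mathcal{W}f,\mathcal{W}\T^*g\rangle_{L^2}$ forces $(\T^*g)(0)=0$ and $\mathcal{W}\T^*g=g(0)q(0,\cdot)/\sqrt{w}+K^*\mathcal{W}g$. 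Integrating this identity from $0$ to $y$ and interchanging the order of integration in the $K^*$-term — legitimate because (2) makes $b$ integrable in its second variable while $wg'\in L^1\cap L^\infty$ dominates the $g$-factor — produces exactly $\T^*g(y)=\left(\int_0^y q(0,z)/w(z)\,dz\right)g(0)+\int_0^\infty q^*(y,x)g'(x)\,dx$, with $q^*$ as defined (its finiteness, i.e.\ the asserted local integrability of $b$, is again (2)). The norm bound $\Vert\T^*\Vert_{\op}=\Vert\T\Vert_{\op}\le c$ is automatic.

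The two closing refinements are short density arguments. If (1') holds then $\T$ is closed by Lemma~\ref{l:abgeschlossenheit von T}, and approximating an arbitrary $f\in H_w$ by $f_n\in D$ with $\T f_n$ converging to the continuous extension shows, via closedness, that $f\in\dom(\T)$ and $\T f$ equals that extension; hence $\T$ is everywhere defined and continuous (equivalently, invoke Corollary~\ref{k:stetigkeit}, whose condition (1) is (1') and whose condition (2) follows from the pointwise convergence $\T f_n(x)=\langle q(x,\cdot)/\sqrt{w},\mathcal{W}f_n\rangle\to\langle q(x,\cdot)/\sqrt{w},\mathcal{W}f\rangle$ and continuity of $\delta_x$, Lemma~\ref{l:stetigkeit der Punktauswertung}). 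Likewise, if $q^*(y,\cdot)/\sqrt{w}\in L^2(\mathbb{R}_+)$ then for each fixed $y$ the functional $g\mapsto\int_0^\infty q^*(y,x)g'(x)\,dx=\langle q^*(y,\cdot)/\sqrt{w},\mathcal{W}g\rangle$ is continuous on all of $H_w$, as is $g\mapsto\delta_y(\T^*g)$, so the two sides of the $\T^*$-formula agree on $H_w$ by density of $D$. The main obstacle is none of the algebra but the analytic bookkeeping underlying the two interchange-of-limits steps: justifying the differentiation/Fubini that yields $\mathcal{W}\T f=K\mathcal{W}f$ and the Fubini producing $q^*$, i.e.\ checking that (1)--(3) genuinely supply the required domination and local integrability. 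Once the operator is transported to $L^2(\mathbb{R}_+)$, Schur's test carries the quantitative content.
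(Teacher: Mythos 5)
Your proposal is correct and takes essentially the same route as the paper's own proof: both transport $\T$ through the isometry $(\delta_0,\mathcal W)$ so that it becomes the Schur-test integral operator with kernel $b$ on $L^2(\mathbb R_+)$ together with the bounded functional $\langle q(0,\cdot)/\sqrt{w},\cdot\rangle$ coming from condition (1), combine the two bounds via the Pythagorean identity to get the constant $c$, obtain dense definedness from functions whose $\mathcal W$-coordinate is bounded with compact support, identify the dual through the transposed kernel (yielding $q^*$), and settle the $(1')$ claim with the closedness Lemma~\ref{l:abgeschlossenheit von T}. The only deviation is organizational — you derive $\T^*$ by a direct pairing computation and integration of the coordinate identity, whereas the paper constructs the integral operator with kernel $q^*$ and verifies duality in the $L^2$ picture — which is the same computation arranged differently, not a different method.
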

\begin{proof}
  Schur's lemma, cf.\ Grafakos~\cite[Appendix I.1]{grafakos.08}, yields that
 $$
\widetilde{\mathcal R}f(x) := \int_0^\infty b(x,y)f(y)dy,\quad f\in 
D(\widetilde{\mathcal R}),x\in\mathbb R_+$$
 is an operator on $L^2(\mathbb R_+)$ with bound $(AB)^ {1/2}$, 
whose domain includes $L^1(\mathbb R_+)\cap L^\infty(\mathbb R_+)$. 
Since 
$$
\partial_1q(x,y) = \sqrt{w(y)/w(x)}b(x,y)\,,
$$ we conclude from Condition (2) and from local boundedness of $w$ that 
$\partial_1q$ is locally integrable.

Let 
$$
\widetilde{\mathcal W}:L^2(\mathbb R_+)\rightarrow H_w,g\mapsto \int_0^x (w(z))^{-1/2}g(z)dz\,.
$$
Then $\widetilde{\mathcal W}$ is an isometry and 
$\mathcal W\widetilde{\mathcal W}$ is the identity on 
$L^2(\mathbb R_+)$ where $\mathcal W$ is defined in 
\eqref{r:quadratintegrierbarkeit}. Thus
 $\mathcal R := \widetilde{\mathcal W} \widetilde{\mathcal R}\mathcal W$ 
is a densely defined operator with bound $(AB)^{1/2}$. Let $f\in H_w$ 
such that $f'$ is continuous and has compact support $C$. 
Then $\mathcal W f \in \dom(\widetilde{\mathcal R})$ and hence 
$f\in\dom(\mathcal R)$. Thus
 \begin{eqnarray*}
  \mathcal Rf(x) &=& \int_0^x (w(z))^{-1/2} \int_0^\infty b(z,y)\sqrt{w(y)}1_C(y)f'(y)dydz  \\
                    &=& \int_0^x \int_C \partial_1q(z,y)f'(y)dydz \\
                    &=& \int_C\int_0^x\partial_1q(z,y)dzf'(y)dy \\
                    &=& \int_C (q(x,y)-q(0,y)) f'(y)dy
 \end{eqnarray*}
where we used Fubini's theorem for the second last equation. 
Condition (1) implies that 
$$
\int_0^\infty \vert q(0,y)f'(y)\vert dy<\infty\,,
$$ 
and hence we have $\mathcal Rf(x) + \T f(0) = \T f(x)$, 
$x\in\mathbb R_+$. The vector space 
$$
V:=\{ f\in H_w: \mathcal W f\in L^1(\mathbb R_+)\cap L^\infty(\mathbb R_+)\}\,,
$$
is dense in $H_w$ because \eqref{r:quadratintegrierbarkeit:isometrie} 
states that $(\delta_0,\mathcal W)$ is an isometry, 
$$
(\delta_0,\mathcal W)(V) = \mathbb R\times \left(L^1(\mathbb R_+)\cap L^\infty(\mathbb R_+)\right)
$$
and $L^1(\mathbb R_+)\cap L^\infty(\mathbb R_+)$ is dense in 
$L^2(\mathbb R_+)$. Since $V$ is contained in the domain of $\T$, 
we have that $\T$ is densely defined. 
Now let $f\in\dom(\T)$ with $\Vert f\Vert\leq 1$. 
Then the Pythagorean theorem, the Cauchy-Schwarz inequality and the 
bound for $\mathcal R$ imply
 $$ \Vert\T f\Vert^2 = \Vert (x\mapsto \T f(0))\Vert^2 + 
\Vert\mathcal Rf\Vert^ 2\leq \int_0^\infty \frac{q^2(0,y)}{w(y)}dy+AB.
$$

Observe, that $q^*$ also satisfies conditions (1), (2) and (3). Hence 
the integral operator $\mathcal R^*$ with kernel $q^*$ is densily 
defined and bounded by $(AB)^{1/2}$. Moreover, analogue computations 
to those above show that
 $$ 
\mathcal W\mathcal R^*\widetilde{\mathcal W} g(x) = \int_0^\infty b(y,x)g(y) dy\,,
$$
for bounded compactly supported functions $g$, and hence 
$\mathcal W\mathcal R^*\widetilde{\mathcal W}$ is the dual of 
$\widetilde{\mathcal R}$. Consequently, $\mathcal R^*$ is the dual 
operator of $\mathcal R$. Now it is easy to deduce the dual operator of 
$\T$.

Lemma \ref{l:abgeschlossenheit von T} states that the additional assumption 
implies that $\T$ is closed. However, a closed, bounded and 
densely defined operator is everywhere defined and continuous.
\end{proof}
\begin{rem}
 In the special case that $q(x,y)=q(0,y)$ we see that the dual 
of $\T$ is an operator which takes the initial value of the 
function and multiplies it with
$$
h(y):=\int_0^y\frac{q(0,z)}{w(z)}dz\,.
$$
Hence, one might write $\T^* =\mathcal M_h\delta_0$, where 
$\mathcal M_h$ denotes the multiplication operator by the function $h$. We will return
to a more detailed study of general multiplication operators in Subsection~\ref{a:muliplikatoren}.
\end{rem}

A prominent class of integral operators are convolution type operators, 
i.e.\ operators of the form
 $$ 
\T f(x) = \int_0^\infty k(x-y)f'(y)dy\,,
$$
or, in other words, an integral operator with kernel $q(x,y):=k(x-y)$. 
The next Corollary restates the conditions of 
Theorem~\ref{S:stetige integraloperatoren} for such convolution type 
operators. However, for simplicity, we will only do this in the case that 
the weight function $w$ in the space $H_w$ is an exponential function.
\begin{cor}
  Let $k:\mathbb R\rightarrow\mathbb R_+$ be a measurable function 
and $\T$ be an integral operator on $H_w$ with kernel 
$q:\mathbb R_+^2\rightarrow \mathbb R,(x,y)\mapsto k(x-y)$. 
Assume that 
\begin{enumerate}
 \item $w(x) = e^{\alpha x}$ for some $\alpha>0$ and any $x\in\mathbb R_+$,
 \item $\int_0^\infty \vert k(x-y)\vert^2e^{-\alpha y}dy <\infty$ for any $x\in\mathbb R_+$,
 \item $k$ has an absolutely continuous derivative and
 \item $\int_{-\infty}^\infty\vert k'(s)\vert e^{\alpha s/2}ds <\infty$.
\end{enumerate}
Then $\T$ is a continuous linear operator.
\end{cor}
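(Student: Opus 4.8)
The plan is to verify that the hypotheses of Theorem~\ref{S:stetige integraloperatoren} are satisfied and then to invoke that theorem directly. Since $q(x,y)=k(x-y)$, Condition (3) guarantees that $x\mapsto q(x,y)$ is absolutely continuous with $\partial_1 q(x,y)=k'(x-y)$, so the auxiliary kernel appearing in the theorem becomes $b(x,y)=\partial_1 q(x,y)\sqrt{w(x)/w(y)}=k'(x-y)e^{\alpha(x-y)/2}$, where we used $w(x)=e^{\alpha x}$ from Condition (1).

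First I would check Condition (1) of Theorem~\ref{S:stetige integraloperatoren}: one has $\int_0^\infty q(0,y)^2/w(y)\,dy=\int_0^\infty k(-y)^2 e^{-\alpha y}\,dy$, which is finite by Condition (2) of the corollary evaluated at $x=0$. Next I would bound the two Schur integrals of the theorem. For the row sum, the substitution $s=x-y$ gives $\int_0^\infty |b(x,y)|\,dy=\int_{-\infty}^{x}|k'(s)|e^{\alpha s/2}\,ds\le \int_{-\infty}^\infty |k'(s)|e^{\alpha s/2}\,ds$, which is finite and independent of $x$ by Condition (4); this yields $A<\infty$. For the column sum, the same substitution $s=x-y$ gives $\int_0^\infty |b(x,y)|\,dx=\int_{-y}^{\infty}|k'(s)|e^{\alpha s/2}\,ds$, again dominated by the same finite integral from Condition (4), so $B<\infty$.

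With Conditions (1)--(3) of Theorem~\ref{S:stetige integraloperatoren} in place, that theorem shows $\T$ is densely defined and bounded. To upgrade this to continuity of $\T$ itself I would verify the additional assumption $(1')$ of the theorem, namely that $q(x,\cdot)/\sqrt{w}\in L^2(\mathbb R_+)$ for every $x\in\mathbb R_+$; this is precisely Condition (2) of the corollary for general $x$, since $\int_0^\infty q(x,y)^2/w(y)\,dy=\int_0^\infty k(x-y)^2 e^{-\alpha y}\,dy<\infty$. Theorem~\ref{S:stetige integraloperatoren} then gives that $\T$ is a continuous linear operator, which completes the proof.

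The argument is essentially a bookkeeping exercise translating the corollary's hypotheses into those of the theorem; the only step requiring any care is the change of variables in the two Schur integrals, together with the observation that Condition (4) supplies a bound that is uniform in $x$ (respectively in $y$) — which is exactly what the finiteness of $A$ and $B$ requires. I do not expect a genuine obstacle here beyond confirming the identity $b(x,y)=k'(x-y)e^{\alpha(x-y)/2}$ and checking that both tails of the substituted integral are controlled by the single integral in Condition (4).
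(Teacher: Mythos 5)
Your proof is correct and follows essentially the same route as the paper's own proof: both compute $b(x,y)=k'(x-y)e^{\alpha(x-y)/2}$, bound the two Schur integrals by $\int_{-\infty}^\infty|k'(s)|e^{\alpha s/2}\,ds$ via the substitution $s=x-y$, and verify condition $(1')$ from Condition (2) so that Theorem~\ref{S:stetige integraloperatoren} yields continuity (not just boundedness on a dense domain). No gaps.
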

\begin{proof}
 Define $$b(x,y):=\partial_1q(x,y)\sqrt{w(x)/w(y)},\quad x,y\in\mathbb R_+.$$
 Then $b(x,y) = k'(x-y)e^{1/2\alpha(x-y)}$ for any $x,y\in\mathbb R_+$. Let $x\in\mathbb R_+$, then
 $$\int_0^\infty \vert b(x,y)\vert dy = \int_{-\infty}^x\vert k'(s)\vert e^{1/2\alpha s}ds$$
 and hence
 $$ \sup_{x\in\mathbb R_+}\int_0^\infty \vert b(x,y)\vert dy = \int_{-\infty}^\infty\vert k'(s)\vert e^{1/2\alpha s}ds.$$
 Along the same lines we get
 $$ \sup_{y\in\mathbb R_+}\int_0^\infty \vert b(x,y)\vert dx = \int_{-\infty}^\infty\vert k'(s)\vert e^{1/2\alpha s}ds.$$
 We have
 \begin{align*}
   \int_0^\infty \left\vert q(x,y)/\sqrt{w(y)}\right\vert^2 dy &\ = \int_0^\infty \vert k(x-y)\vert^2e^{-\alpha y}dy
 \end{align*}
 for any $x\in\mathbb R_+$. The claim follows from Theorem \ref{S:stetige integraloperatoren}.
\end{proof}
Let us end this Subsection on integral operators by coming back to the spot dynamics
as presented in Thm.~\ref{s:dynamik}. Recall that the spot volatility is given by
$$
\sigma^2(t,s)=\<\Psi(s)\mathcal Q \Psi^*(s)h_{t-s},h_{t-s}\>\,.
$$
Here, $\mathcal{Q}$ is the covariance operator of the driving noise of the 
forward dynamics $f(t)$ and $\Psi$ is the volatility operator in its 
dynamics. If we assume now that $\Psi(s)=\widetilde{\sigma}(s)\mathcal T$, for an 
integral operator on $H_w$
$$
\mathcal{T}g(x)=\int_0^{\infty}\psi(x,y)g'(y)\,dy\,.
$$
Observe that $\widetilde{\sigma}$ is a predictable $\mathbb R$-valued stochastic
process. 
We also assume $\mathcal{Q}$ to be an integral operator on $H_w$, cf.\ Theorem \ref{S:Darstellung von HS operatoren}
$$
\mathcal{Q}g(x)=\int_0^{\infty}q(x,y)g'(y)\,dy\,.
$$
Under the assumptions above we can compute $\sigma^2 (t,s)$ 
and find that
$$
\sigma^2(t,s)=\widetilde{\sigma}^2(s)\int_0^{\infty}\int_0^{\infty}\frac1{w(u)}\psi(t-s,z) q_1(z,u)\psi(t-s,u)\,du\,dz
$$ 
where $q_1$ is the partial derivative of $q$ with respect to its first argument. Thus, 
$\sigma^2(t,s)$ will be of the form
$$
\sigma(t,s)=\widetilde{\sigma}(s)\gamma(t-s)\,,
$$
which gives an LSS process (see Barndorff-Nielsen et al.~\cite{BNBV-spot}). We see that 
the {\it deterministic} kernel function $\gamma(t-s)$ will be an integral in terms
of the kernels from both the volatility operator $\Psi$ and the covariance operator
$\mathcal{Q}$. A simplistic special case could be to assume that $\psi(x,y)=\xi(x)\theta(y)$,
in which case we find
$$
\sigma(t,s)=\widetilde{\sigma}(s)c\xi(t-s)
$$  
for a constant $c$. In this simple example we can recover interesting LSS processes for
the spot price dynamics by choosing $\xi$ appropriately. For example, we
can recover so-called continuous-time autoregressive moving average (CARMA) processes by
letting
$$
\xi(x)=\mathbf{b}^*\exp(A x)\mathbf{e}_p\,.
$$
Here, $\mathbf{e}_p$ is the $p$'th canonical basis vector in $\mathbb R^p, p\in\mathbb N$,
$\mathbf b\in \mathbb R^p$ is a vector $\mathbf{b}^*=(b_0,b_1,\ldots,b_{q}=1,0,..,0)$
for $q<p$, and $A$ is a $p\times p$-matrix of the form
$$
A=\left[\begin{array}{cc} 0 & I  \\  -\alpha_p & -\alpha_{p-1} \cdots -\alpha_1    \end{array}\right],
$$
with $\alpha_i>0$, $i=1,\ldots,p$ and $I$ the $p-1\times p-1$ identity matrix. 
We say that we have a volatility modulated CARMA($p,q$) dynamics for the spot for this
special choice. Note that 
we are free to choose the covariance kernel $q$ and the $\theta$, which means that we can 
have many different forward curve models resulting in the same CARMA spot model. 
CARMA-processes have been applied to temperature models (see Benth and 
\v{S}altyt\.{e} Benth.~\cite{BSB-book} and H\"ardle and Lopez-Cabrera \cite{HLC}),
spot power prices (see Garcia, Kl\"uppelberg and M\"uller~\cite{GKM}) and the prices
of oil (see Paschke and Prokopczuk~\cite{PP}). For a detailed analysis of CARMA processes
we refer the reader to Brockwell \cite{Brock}.

\subsection{Hilbert-Schmidt operators}
In this Section we analyse the Hilbert-Schmidt operators on $H_w$. 
The Hilbert-Schmidt operators on $L^2$ are exactly the integral operators where the kernel is an $L^2$ function and the Hilbert-Schmidt norm is their $L^2$-norm. This can be exploited to identify the Hilbert-Schmidt operators on $H_w$ in terms of integral operators, as $H_w$ is almost isometric to $L^2$ .
\begin{lem}\label{L:HS fuer L^2}
 Let $\T$ be a Hilbert-Schmidt operator on $L^2((\mathbb R_+,\mathcal B,\lambda),\mathbb R)$. Then there is $b\in L^2((\mathbb R^2_+,\mathcal B_2,\lambda_2),\mathbb R)$ such that
\begin{eqnarray}\label{e:HS operatoren auf L^2}
\T f(x) = \int_0^\infty b(x,y)f(y) dy 
\end{eqnarray}
 for any $f\in L^2((\mathbb R_+,\mathcal B,\lambda),\mathbb R)$ where $\lambda_2$ denotes the two-dimensional Lebesgue measure. The Hilbert-Schmidt norm of 
$\T$ is the $L^2$-norm of $b$.

Moreover, if $b$ is an element of $L^2((\mathbb R^2_+,\mathcal B_2,\lambda_2),\mathbb R)$, then
 $$\T f(x) := \int_0^\infty b(x,y)f(y) dy$$
defines a Hilbert-Schmidt operator.
\end{lem}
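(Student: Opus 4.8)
The plan is to reduce everything to Parseval's identity via the tensor-product structure $L^2(\mathbb R_+^2)\cong L^2(\mathbb R_+)\otimes L^2(\mathbb R_+)$. First I would fix an orthonormal basis $(e_n)_{n\in\mathbb N}$ of $L^2(\mathbb R_+)$ and record the classical fact that the products $e_{m,n}$ defined by $e_{m,n}(x,y):=e_m(x)e_n(y)$ form an orthonormal basis of $L^2(\mathbb R_+^2)$. This is the only genuinely structural input, and its completeness part — that finite linear combinations of such products are dense in $L^2(\mathbb R_+^2)$ — is where I expect the main (though routine and well-documented) work to sit; I would either cite a standard functional-analysis reference or establish it by a monotone-class argument starting from indicators of rectangles.

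For the forward direction, given a Hilbert--Schmidt operator $\T$, I would set $c_{m,n}:=\<\T e_n,e_m\>$ and note that by the definition of the Hilbert--Schmidt norm together with Parseval,
\[
\sum_{m,n}|c_{m,n}|^2=\sum_n\Vert\T e_n\Vert^2=\Vert\T\Vert_{\mathrm{HS}}^2<\infty.
\]
Hence $b:=\sum_{m,n}c_{m,n}e_{m,n}$ converges in $L^2(\mathbb R_+^2)$ and satisfies $\Vert b\Vert_{L^2}=\Vert\T\Vert_{\mathrm{HS}}$. It then remains to check that the integral operator $I_b$ with kernel $b$ coincides with $\T$; since both are bounded, it suffices to compare matrix elements, and Fubini gives $\<I_b e_n,e_m\>=\int_0^\infty\!\int_0^\infty b(x,y)e_n(y)e_m(x)\,dy\,dx=c_{m,n}=\<\T e_n,e_m\>$, so $I_b=\T$ because they agree on a basis.

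For the converse, given $b\in L^2(\mathbb R_+^2)$, I would first invoke Tonelli's theorem to see that $b(x,\cdot)\in L^2(\mathbb R_+)$ for Lebesgue-a.e.\ $x$, so that $\T f(x):=\int_0^\infty b(x,y)f(y)\,dy$ is defined a.e.\ via Cauchy--Schwarz; the same estimate, integrated in $x$, yields $\Vert\T f\Vert^2\le\Vert b\Vert_{L^2}^2\Vert f\Vert^2$, so $\T$ is bounded. Expanding $b=\sum_{m,n}c_{m,n}e_{m,n}$ with $\sum_{m,n}|c_{m,n}|^2=\Vert b\Vert_{L^2}^2$, one computes $\<\T e_n,e_m\>=c_{m,n}$, whence $\sum_n\Vert\T e_n\Vert^2=\sum_{m,n}|c_{m,n}|^2<\infty$, proving that $\T$ is Hilbert--Schmidt with $\Vert\T\Vert_{\mathrm{HS}}=\Vert b\Vert_{L^2}$. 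The only subtlety to watch is transferring the $L^2(\mathbb R_+^2)$-convergence of the series for $b$ to the pointwise identity $\<\T e_n,e_m\>=c_{m,n}$, which follows because the coefficient functional $g\mapsto\<g,e_{m,n}\>_{L^2(\mathbb R_+^2)}$ is continuous on $L^2(\mathbb R_+^2)$.
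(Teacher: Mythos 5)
Your proof is correct and is essentially the paper's argument spelled out in detail: the paper's one-line proof simply cites the isomorphisms between the Hilbert--Schmidt operators on $L^2(\mathbb R_+)$, the tensor product $L^2(\mathbb R_+)\otimes L^2(\mathbb R_+)$, and $L^2(\mathbb R_+^2)$, and your orthonormal product basis together with the Parseval computation is exactly how those isomorphisms are realized concretely. One cosmetic point: the boundedness of the kernel operator $I_b$, which you invoke in the forward direction in order to compare matrix elements, is the Cauchy--Schwarz estimate you only establish in the converse part, so that estimate should be stated first (or the converse proved first); this is a matter of ordering, not a gap.
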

\begin{proof}
 This is immediate from the isomorphisms from the Hilbert-Schmidt operators to
$$H:=L^2((\mathbb R_+,\mathcal B,\lambda),\mathbb R)\otimes L^2((\mathbb R_+,\mathcal B,\lambda),\mathbb R)$$
and from $H$ to $L^2((\mathbb R^2_+,\mathcal B_2,\lambda_2),\mathbb R)$.
\end{proof}
The next Lemma shows that the Hilbert-Schmidt operators on a subspace of $H_w$ with codimension $1$ are integral operators: 
\begin{lem}\label{K:HS auf G}
 Let $\mathcal C$ be a closed linear operator on $H_w^0:=\{f\in H_w:f(0)=0\}$. Then the following statements are equivalent.
\begin{enumerate}
 \item $\mathcal C$ is a Hilbert-Schmidt operator.
 \item $\mathcal C$ is an integral operator and there is $b\in L^2(\mathbb R_+^2)$ such that the kernel of $\mathcal C$ is given by
  \begin{eqnarray}\label{e:q und b}
   q(x,y) := \int_0^x\sqrt{w(y)/w(z)}b(z,y)dz,\quad x,y\in\mathbb R_+.   
  \end{eqnarray}
\end{enumerate}
If the second statement holds, then the Hilbert-Schmidt norm coincides with the $L^2(\mathbb R_+^2)$-norm of $b$.
\end{lem}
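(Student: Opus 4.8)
The plan is to reduce the whole statement to the already-treated $L^2$-case, Lemma~\ref{L:HS fuer L^2}, by transporting $\mathcal C$ through the natural isometry between $H_w^0$ and $L^2(\mathbb R_+)$.

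First I would exploit that on the codimension-one subspace $H_w^0$ the point-evaluation term drops out of the norm: for $f\in H_w^0$ we have $f(0)=0$, so $\Vert f\Vert_w^2=\int_0^\infty w(x)f'(x)^2\,dx=\Vert\mathcal W f\Vert_{L^2}^2$ with $\mathcal W$ as in \eqref{r:quadratintegrierbarkeit}. Hence the restriction of $\mathcal W$ to $H_w^0$ is an isometry into $L^2(\mathbb R_+)$, and it is onto with inverse
$$\widetilde{\mathcal W}\colon L^2(\mathbb R_+)\to H_w^0,\quad g\mapsto\Big(x\mapsto\int_0^x w(z)^{-1/2}g(z)\,dz\Big),$$
since $\widetilde{\mathcal W}g(0)=0$, $\mathcal W\widetilde{\mathcal W}=\mathrm{Id}_{L^2}$, and $\widetilde{\mathcal W}\mathcal W f(x)=\int_0^x f'(z)\,dz=f(x)$ for $f\in H_w^0$ (this last identity is exactly where $f(0)=0$ enters). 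Thus $\mathcal W$ is an isometric isomorphism of the Hilbert spaces $H_w^0$ and $L^2(\mathbb R_+)$.

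Next I would use the invariance of the Hilbert-Schmidt class under conjugation by an isometric isomorphism. Put $\mathcal R:=\mathcal W\mathcal C\widetilde{\mathcal W}$, an operator on $L^2(\mathbb R_+)$ with $\mathcal C=\widetilde{\mathcal W}\mathcal R\mathcal W$. If $(e_n)$ is an orthonormal basis of $H_w^0$, then $(\mathcal W e_n)$ is one of $L^2(\mathbb R_+)$ and $\mathcal R(\mathcal W e_n)=\mathcal W\mathcal C e_n$, whence $\Vert\mathcal R(\mathcal W e_n)\Vert_{L^2}=\Vert\mathcal C e_n\Vert_w$ and $\Vert\mathcal R\Vert_{\HS}=\Vert\mathcal C\Vert_{\HS}$. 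Consequently $\mathcal C$ is Hilbert-Schmidt on $H_w^0$ if and only if $\mathcal R$ is Hilbert-Schmidt on $L^2(\mathbb R_+)$, and by Lemma~\ref{L:HS fuer L^2} this happens exactly when $\mathcal R g(x)=\int_0^\infty b(x,y)g(y)\,dy$ for some $b\in L^2(\mathbb R_+^2)$, with $\Vert\mathcal R\Vert_{\HS}=\Vert b\Vert_{L^2(\mathbb R_+^2)}$. This already yields the claimed equality of norms.

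It then remains to translate the kernel $b$ of $\mathcal R$ into the kernel $q$ of $\mathcal C$. For $f\in H_w^0$ set $g=\mathcal W f=\sqrt{w}\,f'$; then $\mathcal C f=\widetilde{\mathcal W}\mathcal R g$, i.e.
$$\mathcal C f(x)=\int_0^x w(z)^{-1/2}\int_0^\infty b(z,y)\sqrt{w(y)}\,f'(y)\,dy\,dz,$$
and interchanging the order of integration by Fubini gives
$$\mathcal C f(x)=\int_0^\infty q(x,y)f'(y)\,dy,\qquad q(x,y)=\int_0^x\sqrt{w(y)/w(z)}\,b(z,y)\,dz,$$
which is exactly \eqref{e:q und b}; reading the same chain of identities backwards shows that any integral operator whose kernel is of the form \eqref{e:q und b} is conjugate to the $L^2$-integral operator with kernel $b$, hence Hilbert-Schmidt. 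The main obstacle is justifying this Fubini interchange, since $\mathcal C$ is not assumed bounded a priori. I would settle it by Cauchy-Schwarz: using $\sqrt{w}\,f'=\mathcal W f\in L^2$ and $w\ge1$,
$$\int_0^x\!\!\int_0^\infty\frac{\vert b(z,y)\vert\,\vert(\mathcal W f)(y)\vert}{\sqrt{w(z)}}\,dy\,dz\le\Big(\int_0^x\frac{dz}{w(z)}\Big)^{1/2}\Vert b\Vert_{L^2(\mathbb R_+^2)}\Vert\mathcal W f\Vert_{L^2}<\infty,$$
so the double integral converges absolutely for each $x$ and the interchange is licit. Finally I would record the minor point that a Hilbert-Schmidt operator is automatically bounded and everywhere defined, so the closedness hypothesis on $\mathcal C$ is consistent with both statements and requires no separate argument.
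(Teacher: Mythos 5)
Your proof is correct and follows essentially the same route as the paper: conjugating $\mathcal C$ by the isometric isomorphism $\mathcal W$ between $H_w^0$ and $L^2(\mathbb R_+)$, invoking Lemma~\ref{L:HS fuer L^2} to obtain the kernel $b$, and translating back via Fubini to get the kernel $q$ in \eqref{e:q und b}. Your explicit Cauchy--Schwarz justification of the Fubini interchange is a detail the paper leaves implicit, but it is not a different argument.
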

\begin{proof}
  Assume that $\mathcal C$ is a Hilbert-Schmidt operator. Then 
$\T:=\mathcal W\mathcal C\mathcal W^{-1}$ is a Hilbert-Schmidt operator where 
$\mathcal W$ is the ismorphism from \eqref{r:quadratintegrierbarkeit} restricted to $H_w^0$ and $\T$ and $\mathcal C$ have the same Hilbert-Schmidt norm. Lemma \ref{L:HS fuer L^2} yields that there is a square integrable function $b$ such that 
$\T f(x) = \int_0^\infty b(x,y) f(y) dy$ and the Hilbert-Schmidt norm 
coincides with the $L^2$-norm of $b$. Since
$$
\mathcal W^{-1}g(x)=\int_0^x\frac{g(y)}{\sqrt{w(y)}}\,dy\,,
$$
a short computation yields the representation of 
$\mathcal C$. Indeed, for $f\in H_w^0$ we have
 \begin{eqnarray*}
   \mathcal Cf(x) &=& \mathcal W^{-1}\T\mathcal W f(x) \\
         &=& \int_0^x\frac{1}{\sqrt{w(y)}}\T\mathcal W f(y) dy \\
         &=& \int_0^x\frac{1}{\sqrt{w(y)}} \int_0^\infty b(y,z)\mathcal W f(z)dz dy \\
         &=& \int_0^x\frac{1}{\sqrt{w(y)}} \int_0^\infty b(y,z) \sqrt{w(z)} f'(z)dz dy \\
         &=& \int_0^\infty q(x,z)f'(z) dz
 \end{eqnarray*}
 where we used Fubini's theorem and $q$ is defined by Equation \eqref{e:q und b}.

 Now assume that $\mathcal C$ is an integral operator given by a function as in condition (2). Then $\T$ defined by Equation \eqref{e:HS operatoren auf L^2} is a Hilbert Schmidt-operator. The same equations as before yield that 
$\mathcal C=\mathcal W^{-1}\T\mathcal W$.
\end{proof}
The next Theorem gives a complete characterisation of Hilbert-Schmidt operators on
$H_w$. The proof makes use of the above Lemma. 
\begin{thm}\label{S:Darstellung von HS operatoren}
 Let $\mathcal C$ be a Hilbert-Schmidt operator on $H_w$. Then there is a square-integrable function $b:\mathbb R_+^2\rightarrow\mathbb R$, $g,h\in H_w$ with $g(0)=0=h(0)$ and $c\in\mathbb R$ such that
 $$\mathcal Cf(x) = cf(0) + \<g,f\> + f(0)h(x) + \int_0^\infty q(x,z)f'(z) dz$$
 where $q(x,z) :=  \int_0^x\sqrt{w(z)/w(y)}b(y,z)dy.$ Moreover, the Hilbert-Schmidt norm of $\mathcal C$ is given by
 $$ \Vert\mathcal C\Vert^2_{HS} = c^2 + \<g,g\> + \<h,h\> + \int_{\mathbb R_+^2} b^2(x,y)dxdy.$$

 The dual operator of $\mathcal C$ is given by
 $$\mathcal C^*f(x) = cf(0) + g(x)f(0) + \<f,h\> + \int_0^\infty q^*(x,z)f'(z)dz$$
 where $q^*(x,y) = \int_0^x\sqrt{w(y)/w(z)}b(y,z)dz$. In particular, $\mathcal C$ is symmetric if and only if $b$ is symmetric and $g=h$.
\end{thm}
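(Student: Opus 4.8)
The plan is to exploit the orthogonal decomposition $H_w=\mathbb{R}\mathbf{1}\oplus H_w^0$, where $\mathbf{1}$ is the constant function $1$ and $H_w^0=\{f\in H_w:f(0)=0\}$. That this is orthogonal follows from \eqref{r:quadratintegrierbarkeit:isometrie}: since $\langle\mathbf{1},f\rangle=f(0)$, the vector $\mathbf{1}$ is a unit vector orthogonal to $H_w^0$, and every $f$ splits as $f=f(0)\mathbf{1}+(f-f(0)\mathbf{1})$. Let $P_0f:=f(0)\mathbf{1}$ and $P_1:=\mathrm{Id}-P_0$ be the associated orthogonal projections, and write $\mathcal{C}=P_0\mathcal{C}P_0+P_0\mathcal{C}P_1+P_1\mathcal{C}P_0+P_1\mathcal{C}P_1$. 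I would then identify the four blocks separately and read off the representation by evaluating at a point $x$.

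First I would treat the three degenerate blocks. As $\mathcal{C}$ is Hilbert-Schmidt it is bounded, hence everywhere defined and closed. Setting $c:=(\mathcal{C}\mathbf{1})(0)$ gives $P_0\mathcal{C}P_0f=cf(0)\mathbf{1}$, contributing $cf(0)$; setting $h:=P_1\mathcal{C}\mathbf{1}=\mathcal{C}\mathbf{1}-c\mathbf{1}\in H_w^0$ gives $P_1\mathcal{C}P_0f=f(0)h$, contributing $f(0)h(x)$. For the block $P_0\mathcal{C}P_1$, the map $f\mapsto(\mathcal{C}P_1f)(0)$ is a bounded linear functional on $H_w$, so by the Riesz representation theorem it equals $\langle g,f\rangle$ for a unique $g\in H_w$; since it annihilates $\mathbf{1}$ (because $P_1\mathbf{1}=0$) one gets $\langle g,\mathbf{1}\rangle=g(0)=0$, i.e. $g\in H_w^0$, and this block contributes $\langle g,f\rangle$. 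For the diagonal block, the compression $P_1\mathcal{C}P_1$ restricts to a Hilbert-Schmidt (hence bounded, closed) operator on $H_w^0$, so Lemma~\ref{K:HS auf G} furnishes $b\in L^2(\mathbb{R}_+^2)$ and the kernel $q$ in the stated form. Since $P_1f=f-f(0)\mathbf{1}$ has the same derivative as $f$, this block reads $\int_0^\infty q(x,z)f'(z)\,dz$, and collecting the four contributions yields the asserted formula for $\mathcal{C}f$.

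For the Hilbert-Schmidt norm I would compute $\sum_n\Vert\mathcal{C}e_n\Vert_w^2$ in the adapted orthonormal basis $e_0=\mathbf{1}$, with $(e_n)_{n\geq1}$ an orthonormal basis of $H_w^0$. One has $\mathcal{C}e_0=c\mathbf{1}+h$, giving $c^2+\langle h,h\rangle$ by orthogonality, while for $n\geq1$ the vector $\mathcal{C}e_n$ splits orthogonally into $\langle g,e_n\rangle\mathbf{1}$ and $P_1\mathcal{C}P_1e_n$; Parseval together with Lemma~\ref{K:HS auf G} then sums these to $\langle g,g\rangle+\int_{\mathbb{R}_+^2}b^2(x,y)\,dxdy$. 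The dual is obtained by taking adjoints block by block: the two rank-one off-diagonal blocks swap the roles of $g$ and $h$ (giving the terms $g(x)f(0)$ and $\langle f,h\rangle$), and the diagonal block's adjoint corresponds on $L^2(\mathbb{R}_+)$ to the transposed kernel $b(y,x)$, which through the isometry $\mathcal{W}$ and Lemma~\ref{K:HS auf G} produces exactly $q^*$.

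Finally, since $b$ is recovered from its kernel via $b(x,z)=\sqrt{w(x)/w(z)}\,\partial_xq(x,z)$, the assignment $b\mapsto q$ is injective, so $q=q^*$ is equivalent to $b$ being symmetric. Because the four-block decomposition $P_i\mathcal{C}P_j$ is unique, $\mathcal{C}=\mathcal{C}^*$ forces the blocks to match, which yields $g=h$ (from the off-diagonal pieces) and $q=q^*$ (from the diagonal piece); conversely these clearly give $\mathcal{C}=\mathcal{C}^*$. The main obstacle I anticipate is the careful bookkeeping of the kernel transformation under the $\mathcal{W}$-conjugation when passing to the adjoint, and confirming that the Riesz representative of the cross block genuinely lands in $H_w^0$; the remainder is routine block algebra.
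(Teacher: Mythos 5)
Your proposal is correct and takes essentially the same approach as the paper: the paper decomposes $H_w\otimes H_w$ into the four orthogonal blocks induced by $H_w=H_w^c\oplus H_w^0$ and then cites Lemma~\ref{K:HS auf G} plus the Pythagorean theorem, which is exactly your block decomposition $\mathcal C=\sum_{i,j}P_i\mathcal C P_j$ carried out in more detail. The paper's proof merely states these steps tersely (leaving the rank-one blocks, the Parseval computation for the norm, and the dual/symmetry claims to the reader), so your write-up is a faithful, more explicit version of the same argument.
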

\begin{proof}
 Let $H_w^c$ be the set of constant functions in $H_w$. Then $H_w$ is the orthogonal sum of $H_w^c$ and $H_w^0$ where $H_w^0$ is defined in Lemma~\ref{K:HS auf G}. Hence we have
$$H_w\otimes H_w = H_w^c\otimes H_w^c + H_w^c\otimes H_w^0+H_w^0\otimes H_w^c + H_w^0\otimes H_w^0$$
where the spaces on the right-hand side are orthogonal to each other. Thus the first claim follows from Lemma~\ref{K:HS auf G} and the value of the norm from the Pythagorean theorem. The structure of $\mathcal C^*$ follows from similar arguments.
\end{proof}

 Theorem~\ref{S:Darstellung von HS operatoren} can be used to find a representation for positive semidefinite trace-class operators. Indeed, any positive semidefinite trace class operator is the square of a symmetric Hilbert-Schmidt operator. Both types of operators play a key role in the stochastic integration theory with values in a Hilbert space, 
cf.\ Peszat and Zabczyk~\cite{peszat.zabczyk.07}.
\begin{cor}\label{K:Positive Spurklasseoperatoren}
 Let $\Q$ be a positive semidefinite trace class operator on $H_w$. Then there exist $c\in\mathbb R_+$ and a measurable function $\ell:\mathbb R_+^2\rightarrow\mathbb R$ such that $\ell$ is absolutely continuous in its first variable,\begin{enumerate}
 \item $\ell(0,\cdot)/\sqrt{w}\in L^2(\mathbb R_+)$,
 \item $(x,z)\mapsto \frac{w(z)}{w(x)}(\partial_1\ell(x,z))^2$ is symmetric and integrable,
 \item \begin{eqnarray*}
  \Q f(x) &=& \left(f(0)c + \int_0^\infty \ell(0,z)f'(z)dz\right)\left(c+\int_0^x\ell(0,z)/w(z)dz\right) \\
          &&+\ f(0)\int_0^\infty \ell(x,z)l(0,z)/w(z)dz + \int_0^\infty\int_0^\infty \ell(x,z)\partial_1\ell(z,y)dzf'(y)dy
 \end{eqnarray*}
 for any $f\in H_w$, $x\in\mathbb R_+$.
 \item \begin{eqnarray*}
   \<\Q f,f\> &=& \left(f(0)c + \int_0^\infty \ell(0,z)f'(z)dz\right)^2 \\
            &&+\ \int_0^\infty \left(f(0)\ell(0,x) + \int_0^\infty w(x)\partial_1\ell(x,z)f'(z)dy\right)^2 \frac{1}{w(x)}dx
 \end{eqnarray*}
  for any $f\in H_w$.
 \item $\tr(\Q) = c^2 + \int_0^\infty (q(0,z))^2/w(z)dz + \int_0^\infty \<q(x,\cdot),q(x,\cdot)\>/w(x)dx$
\end{enumerate}

Moreover, if $c\in\mathbb R_+$ and $\ell:\mathbb R_+^2\rightarrow\mathbb R$ is measurable, absolutely continuous in its first variable, $\ell$ satisfies (1) and (2) and 
$\Q$ is defined by (3), then $\Q$ is a positive trace class operator satisfying (4) and (5).
\end{cor}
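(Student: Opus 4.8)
The plan is to reduce everything to the already-established characterisation of symmetric Hilbert--Schmidt operators in Theorem~\ref{S:Darstellung von HS operatoren} via the square-root factorisation $\Q=C^2$. Since $\Q$ is a positive semidefinite trace class operator, its positive square root $C:=\Q^{1/2}$ is a self-adjoint Hilbert--Schmidt operator with $\tr(\Q)=\Vert C\Vert_{HS}^2$. Applying Theorem~\ref{S:Darstellung von HS operatoren} to $C$ in its symmetric case (so $g=h$ with $h(0)=0$, and the kernel $b\in L^2(\mathbb R_+^2)$ is symmetric) produces a constant $c\in\mathbb R$, a function $h\in H_w$, and $b$, with $Cf(x)=cf(0)+\<h,f\>+f(0)h(x)+\int_0^\infty q(x,z)f'(z)\,dz$ and $q(x,z)=\int_0^x\sqrt{w(z)/w(y)}\,b(y,z)\,dy$.

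First I would define the kernel $\ell$ by the prescription $\ell(0,z):=w(z)h'(z)$ and $\partial_1\ell(x,z):=\sqrt{w(x)/w(z)}\,b(x,z)$, that is $\ell(x,z)=w(z)h'(z)+\int_0^x\sqrt{w(y)/w(z)}\,b(y,z)\,dy$, which is absolutely continuous in its first variable. Condition (1) then reads $\ell(0,\cdot)/\sqrt{w}=\sqrt{w}\,h'\in L^2(\mathbb R_+)$, which is precisely the statement $h\in H_w$; and since $\sqrt{w(z)/w(x)}\,\partial_1\ell(x,z)=b(x,z)$, the map in (2) equals $b^2$, which is symmetric and integrable because $b$ is symmetric and square integrable. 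One also checks $c\in\mathbb R_+$ by evaluating $c=\<C\mathbf 1,\mathbf 1\>\geq0$ on the constant function $\mathbf 1\in H_w$, using $C\geq0$.

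The core step is the computation $\Q f=C(Cf)$, which I would organise through the isometric isomorphism $(\delta_0,\mathcal W):H_w\to\mathbb R\times L^2(\mathbb R_+)$ of \eqref{r:quadratintegrierbarkeit:isometrie}. Under this identification $C$ becomes the self-adjoint block operator on $\mathbb R\times L^2(\mathbb R_+)$ whose entries are $c$, the functional $g\mapsto\<v,g\>$, its adjoint $t\mapsto tv$, and the integral operator $B$ with kernel $b$, where $v:=\sqrt{w}\,h'\in L^2(\mathbb R_+)$. Squaring this block matrix produces corner $c^2+\Vert v\Vert^2$, off-diagonal parts built from $cv+Bv$, and lower-right part $v\otimes v+B^2$; translating back through the inverse of $(\delta_0,\mathcal W)$, i.e. via $\mathcal W^{-1}g(x)=\int_0^x g(y)/\sqrt{w(y)}\,dy$, yields formula (3) after a Fubini rearrangement that converts the kernel of $B^2$ into the composed expression $\int_0^\infty\ell(x,z)\partial_1\ell(z,y)\,dz$. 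Formula (4) then follows at once from self-adjointness, $\<\Q f,f\>=\<Cf,Cf\>=\Vert Cf\Vert_w^2$, by expanding $(Cf)(0)=cf(0)+\int_0^\infty\ell(0,z)f'(z)\,dz$ and $\sqrt{w(x)}\,(Cf)'(x)=w(x)^{-1/2}\big(f(0)\ell(0,x)+\int_0^\infty w(z)\partial_1\ell(x,z)f'(z)\,dz\big)$. Formula (5) comes from $\tr(\Q)=\Vert C\Vert_{HS}^2=c^2+\<h,h\>+\<h,h\>+\Vert b\Vert_{L^2}^2$, rewritten with the identities $\<h,h\>=\int_0^\infty(\ell(0,z))^2/w(z)\,dz$ and $\Vert b\Vert_{L^2}^2=\int_0^\infty\int_0^\infty (w(z)/w(x))(\partial_1\ell(x,z))^2\,dz\,dx$.

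For the converse I would run the argument backwards: given $c\in\mathbb R_+$ and $\ell$ satisfying (1) and (2), set $b(x,z):=\sqrt{w(z)/w(x)}\,\partial_1\ell(x,z)$, symmetric and square integrable by (2), and let $h\in H_w$ be determined by $h(0)=0$ and $h'=\ell(0,\cdot)/w$, which lies in $H_w$ by (1). The converse direction of Theorem~\ref{S:Darstellung von HS operatoren} builds a symmetric Hilbert--Schmidt $C$ from $(c,h,b)$, and $\Q:=C^2$ is automatically positive and trace class; the same block computation shows that this $\Q$ coincides with the operator defined by (3) and satisfies (4) and (5). I expect the main obstacle to be the bookkeeping in deriving (3), namely separating the rank-one boundary contributions coming from $h$ (and from the constant $c$) from the genuine integral part coming from $b$, and justifying the interchanges of integration. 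The block-operator reformulation is exactly what tames this, since multiplication on $\mathbb R\times L^2(\mathbb R_+)$ keeps these pieces disentangled, and the integrability needed for Fubini is furnished by the Cauchy--Schwarz inequality together with conditions (1) and (2).
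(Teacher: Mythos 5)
Your overall route is the paper's: factor $\Q=\mathcal C^{2}$ through a symmetric root, apply Theorem~\ref{S:Darstellung von HS operatoren} in its symmetric case ($g=h$, $b$ symmetric), read off $\ell$, and compute the square; your block-operator presentation on $\mathbb R\times L^{2}(\mathbb R_+)$ is a tidy repackaging of the Fubini computation the paper performs directly in $H_w$, and your additions (positivity $c=\langle \mathcal C\mathbf 1,\mathbf 1\rangle\geq 0$ from taking the \emph{positive} root, the trace identity, and the converse, none of which the paper's proof actually writes out) are correct and valuable. The genuine problem is your definition of $\ell$: you set $\partial_1\ell(x,z):=\sqrt{w(x)/w(z)}\,b(x,z)$, but item (3) forces the reciprocal ratio. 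Formula (3) is exactly the expansion of $\mathcal C(\mathcal Cf)$ under the identity $\mathcal Cf(x)=f(0)(c+h(x))+\int_0^\infty\ell(x,z)f'(z)\,dz$, and matching this against the representation in Theorem~\ref{S:Darstellung von HS operatoren} pins down $\ell(x,z)=w(z)h'(z)+q(x,z)$ with $q(x,z)=\int_0^x\sqrt{w(z)/w(y)}\,b(y,z)\,dy$, i.e.\ $\partial_1\ell(x,z)=\sqrt{w(z)/w(x)}\,b(x,z)$. Since $b$ is symmetric, your kernel is the transpose $\partial_1\ell(z,x)$ of the correct one, and the two differ whenever $w$ is non-constant. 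Carrying out your own translation step shows this: the $B^{2}$ block contributes $\int_0^x (B^{2}\mathcal Wf)(u)\,w(u)^{-1/2}du$, and the inner integral $\int_0^x b(u,z)\,w(u)^{-1/2}du=q(x,z)/\sqrt{w(z)}$ produces the paper's ratio, not yours. A rank-one check makes the failure explicit: take $c=0$, $h=0$, $b(x,z)=\phi(x)\phi(z)$; then $\Q f(x)=\Vert\phi\Vert_{L^{2}}^{2}\bigl(\int_0^x\phi(u)w(u)^{-1/2}du\bigr)\int_0^\infty\phi(z)\sqrt{w(z)}f'(z)\,dz$, whereas your $\ell$ inserted into (3) yields $\Vert\phi\Vert_{L^{2}}^{2}\bigl(\int_0^x\phi(u)\sqrt{w(u)}\,du\bigr)\int_0^\infty\phi(z)w(z)^{-1/2}f'(z)\,dz$; these disagree for every non-constant $w$.

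The slip has an identifiable source: you chose $\ell$ so that condition (2) holds with the ratio $w(z)/w(x)$ exactly as printed in the corollary, and with your $\ell$ that expression indeed equals $b^{2}$. But the printed ratio is a typo in the paper: its own proof verifies $\frac{w(x)}{w(z)}(\partial_1\ell(x,z))^{2}=b^{2}(x,z)$, and it is this reciprocal version of (2) that is compatible with (3) — it is what makes $b(x,z)=\sqrt{w(x)/w(z)}\,\partial_1\ell(x,z)$ symmetric and square integrable in the converse direction. The same transposition propagates into your expansion of (4), where you write $\int_0^\infty w(z)\partial_1\ell(x,z)f'(z)\,dz$ in place of the statement's $w(x)\partial_1\ell(x,z)$: internally consistent with your $\ell$, but not the stated formula. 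The repair is one line — define $\ell(x,z)=w(z)h'(z)+q(x,z)$, i.e.\ flip the ratio — after which your block computation gives (3) and (4) exactly as in the paper, and your arguments for $c\geq 0$, for (5) via $\tr(\Q)=\Vert\mathcal C\Vert_{HS}^{2}=c^{2}+2\langle h,h\rangle+\Vert b\Vert_{L^{2}}^{2}$, and for the converse (best routed through Lemma~\ref{K:HS auf G}, since Theorem~\ref{S:Darstellung von HS operatoren} states no explicit converse) all go through.
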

\begin{proof}
 Let $\mathcal C$ be a symmetric root of $\Q$. Then $\mathcal C$ is a symmetric Hilbert-Schmidt operator on $H_w$. Let $b,g,h,c,q$ be as in Theorem~\ref{S:Darstellung von HS operatoren}. Since $\mathcal C$ is symmetric we can see from Theorem~\ref{S:Darstellung von HS operatoren} that $g=h$ and $b$ is symmetric. Define $\ell(x,z):=h'(z)w(z)+q(x,z)$. Then $\ell(0,\cdot)=wh'$ thus $\ell$ satisfies (1). Moreover,
 $$\frac{w(x)}{w(z)}(\partial_1\ell(x,z))^2 = b^2(x,z),\quad x,z\in\mathbb R_+$$
 thus $\ell$ satisfies (2). We also have
 \begin{eqnarray*}
  \mathcal Cf(x) &=& cf(0) + \<h,f\> + f(0)h(x) + \int_0^\infty q(x,z)f'(z) dz \\
        &=& f(0)(c+h(x)) + \int_0^\infty \ell(x,z)f'(z)dz 
 \end{eqnarray*}
 and hence
 \begin{eqnarray*}
  \Q f(x) &=& \mathcal C^2f(x) \\
   &=& \mathcal Cf(0)(c+h(x)) + \int_0^\infty \ell(x,z)(\mathcal Cf)'(z)dz \\
          &=& \left(f(0)c + \int_0^\infty \ell(0,z)f'(z)dz\right)(c+h(x)) \\
          &&+\ \int_0^\infty \ell(x,z)\left(f(0)h'(z) + \int_0^\infty \partial_1\ell(z,y)f'(y)dy\right)dz \\
          &=& \left(f(0)c + \int_0^\infty \ell(0,z)f'(z)dz\right)\left(c+\int_0^x\ell(0,z)/w(z)dz\right) \\
          &&+\ f(0)\int_0^\infty \ell(x,z)l(0,z)/w(z)dz + \int_0^\infty\int_0^\infty \ell(x,z)\partial_1\ell(z,y)dzf'(y)dy
 \end{eqnarray*}
 for any $f\in H_w$, $x\in\mathbb R_+$ where we used Fubini's theorem for the last equation. Moreover,
 \begin{eqnarray*}
   \<\Q f,f\> &=& \<\mathcal Cf,\mathcal Cf\> \\
            &=& (\mathcal Cf(0))^2 + \int_0^\infty ((\mathcal Cf)'(x))^2w(x)dx \\
            &=& \left(f(0)c + \int_0^\infty \ell(0,z)f'(z)dz\right)^2 \\
            &&+\ \int_0^\infty \left(f(0)\ell(0,x) + \int_0^\infty w(x)\partial_1\ell(x,z)f'(z)dy\right)^2 \frac{1}{w(x)}dx
 \end{eqnarray*}
 for any $f\in H_w$, $x\in\mathbb R_+$.
\end{proof}

If $\T_i$ is an integral operator with kernel $q_i$, $i=1,2,3$ and $\T_1\T_2=\T_3$, then, in many cases, one can represent $q_3$ in terms of $q_1$, $q_2$. For instance, if we let $\Psi$ in the dynamics of $f$ in \eqref{G:f dynamik} have values in the set of integral operators, then the covariance $f$ is given by
 $$ \<\< f,f\>\>(t) = \int_0^t \U_{t-s}\Psi(s)\Q\Psi(s)^*\U^*_{t-s} ds. $$
$\Psi\Q\Psi^*$ is a positive trace class operator and it can be represented by $(\Psi\C)(\Psi\C)^*$ where $\C$ is a symmetric root of $\Q$. Then $\C$ and hence $\Psi\C$ are Hilbert-Schmidt operators and hence they are 'almost' integral operators. If $\Psi$ is chosen to be an integral operator as well, then a formula for the kernel of $\Psi\C$ is obtainable.
Let us state general conditions for the connection of such operators.
\begin{thm}
\label{thm:convolution_kernels}
 Let $\T_i$ be an integral operator with kernel $q_i$, $i=1,2,3$ and $\T_3\subseteq\T_1\T_2$ where $\T_3$ is densely defined. Assume that
 \begin{enumerate}
  \item $q_2$ is absolutely continuous with respect to the first variable,
  \item $\left(z\mapsto \sup\left\{ \frac{\vert q_2(y,z)\vert}{\sqrt{w(z)}} : y\in[y_0,y_0+1] \right\}\right) \in L^2(\mathbb R_+)$ for any $y_0\in\mathbb R_+$,
  \item $\left(z\mapsto \sup\left\{ \frac{\vert \partial_1q_2(y,z)\vert}{\sqrt{w(z)}} : y\in[y_0,y_0+1] \right\}\right) \in L^2(\mathbb R_+)$ for any $y_0\in\mathbb R_+$ and
  \item $\left(z\mapsto \frac{1}{\sqrt{w(z)}}\int_0^\infty \vert q_1(x,y)\partial_1q_2(y,z)\vert dy \}\right) \in L^2(\mathbb R_+)$ for any $x\in\mathbb R_+$.
 \end{enumerate}
 Then
 $$ q_3(x,z) = \int_0^\infty q_1(x,y)\partial_1q_2(y,z)dy,\quad x,y\in\mathbb R_+. $$ 
\end{thm}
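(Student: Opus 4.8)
The plan is to verify the formula by computing $\T_1\T_2 f$ directly on the dense domain $\dom(\T_3)$ and then identifying the resulting kernel with $q_3$. Fix $f\in\dom(\T_3)$; since $\T_3\subseteq\T_1\T_2$ we have $f\in\dom(\T_1\T_2)$, so $\T_2 f\in\dom(\T_1)$ and both iterated integrals below are meaningful. Throughout I will use that $\sqrt{w}f'=\mathcal W f\in L^2(\mathbb R_+)$ by \eqref{r:quadratintegrierbarkeit}. The computational heart of the argument is the differentiation-under-the-integral identity
\begin{equation*}
(\T_2 f)'(y)=\int_0^\infty\partial_1 q_2(y,z)f'(z)\,dz,\qquad y\in\mathbb R_+,
\end{equation*}
which I would establish first.

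To prove this formula I would use the absolute continuity hypothesis (1) to write $q_2(y,z)=q_2(0,z)+\int_0^y\partial_1 q_2(u,z)\,du$, substitute into $\T_2 f(y)=\int_0^\infty q_2(y,z)f'(z)\,dz$, and apply Fubini on $[0,y]\times\mathbb R_+$. Hypotheses (2) and (3), together with the Cauchy--Schwarz inequality and $\sqrt{w}f'\in L^2(\mathbb R_+)$, supply the local (in $y$) integrable domination
\begin{equation*}
|\partial_1 q_2(u,z)f'(z)|\le\Big(\sup_{u\in[0,y]}\tfrac{|\partial_1 q_2(u,z)|}{\sqrt{w(z)}}\Big)\,\sqrt{w(z)}\,|f'(z)|,
\end{equation*}
whose right-hand side is integrable in $z$ over $\mathbb R_+$. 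This legitimises the Fubini step, shows $\T_2 f$ is absolutely continuous, and yields the displayed derivative formula.

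Next I would compute $\T_1\T_2 f(x)=\int_0^\infty q_1(x,y)(\T_2 f)'(y)\,dy$ by inserting the derivative formula and interchanging the order of integration. The interchange is justified by hypothesis (4): the double integral of $|q_1(x,y)\partial_1 q_2(y,z)f'(z)|$ is bounded, via Cauchy--Schwarz, by $\|\sqrt{w}f'\|_{L^2}$ times the $L^2$-norm appearing in (4), hence is finite. After the interchange one reads off
\begin{equation*}
\T_1\T_2 f(x)=\int_0^\infty\widetilde q_3(x,z)f'(z)\,dz,\qquad \widetilde q_3(x,z):=\int_0^\infty q_1(x,y)\partial_1 q_2(y,z)\,dy,
\end{equation*}
and condition (4) simultaneously guarantees $\widetilde q_3(x,\cdot)/\sqrt{w}\in L^2(\mathbb R_+)$ for each $x$, so the final integral converges absolutely.

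It remains to identify $q_3$ with $\widetilde q_3$. Since $\T_3 f=\T_1\T_2 f$ for every $f\in\dom(\T_3)$, we obtain $\int_0^\infty\big(q_3(x,z)-\widetilde q_3(x,z)\big)f'(z)\,dz=0$ for all such $f$ and all $x$. Because $\T_3$ is densely defined and $(\delta_0,\mathcal W)$ is the isometric isomorphism of \eqref{r:quadratintegrierbarkeit:isometrie}, the set $\{\sqrt{w}f':f\in\dom(\T_3)\}$ is dense in $L^2(\mathbb R_+)$; pairing the difference, divided by $\sqrt{w}$, against this dense set forces $q_3(x,\cdot)=\widetilde q_3(x,\cdot)$ Lebesgue-a.e.\ for each $x$, which is the asserted equality. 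The step I expect to be the main obstacle is the rigorous justification of the derivative formula for $\T_2 f$, namely transferring the absolute continuity of (1) to $\T_2 f$ with the claimed pointwise derivative under the dominations (2)--(3); once this is in hand, the two Fubini applications are routine. A secondary technical point is the a.e.\ uniqueness of the kernel, which rests on the density above and on both $q_3(x,\cdot)/\sqrt{w}$ and $\widetilde q_3(x,\cdot)/\sqrt{w}$ belonging to $L^2(\mathbb R_+)$.
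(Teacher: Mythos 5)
Your proposal is correct and follows essentially the same route as the paper's proof: conditions (1)--(3) are used to justify the derivative formula $(\T_2 f)'(y)=\int_0^\infty \partial_1 q_2(y,z)f'(z)\,dz$, condition (4) justifies the Fubini interchange in $\T_1(\T_2 f)$, and the kernels are identified via the density of $\dom(\T_3)$. The only cosmetic differences are that the paper differentiates under the integral sign directly (using the uniform domination in $y$) where you go through the fundamental theorem of calculus plus Fubini, and that the paper invokes a terse ``monotone class argument'' for the final identification where you use $L^2$-duality against the dense set $\{\mathcal W f : f\in\dom(\T_3)\}$ --- your version being, if anything, more explicit about the integrability of $q_3(x,\cdot)/\sqrt{w}$ that this last step requires.
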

\begin{proof}
  Observe, that Conditions (1) to (3) imply, $z\mapsto \partial_1q_2(y,z)f'(z)$ and $z\mapsto q_2(y,z)f'(z)$ are integrable for any $y\in\mathbb R_+$ because $\vert f'\vert\sqrt{w}$ is square-integrable for any $f\in H_w$. Moreover, the uniform condition in $y$ allows to interchange the $z$-integration and the derivative with respect to $y$, hence
 \begin{align*}
  \int_0^\infty\partial_1q_2(y,z)f'(z) dz &= \partial_y\int_0^\infty q_2(y,z)f'(z) dz.
 \end{align*}
  Similarily, Condition (4) yields 
  $$ (y,z)\mapsto q_1(x,y)\partial_1q_2(y,z) f'(z) $$
 is integrable for any $f\in H_w$, $x\in\mathbb R_+$. Thus we have
 \begin{align*}
   \int_0^\infty q_3(x,z)f'(z) dz &= \T_3f(x) \\
                                  &= \T_1(\T_2f)(x) \\
                                  &= \int_0^\infty q_1(x,y) \partial_y\left(\int_0^\infty q_2(y,z)f'(z)dz \right) dy \\
                                  &= \int_0^\infty q_1(x,y) \int_0^\infty \partial_1q_2(y,z)f'(z)dz dy \\
                                  &= \int_0^\infty \int_0^\infty q_1(x,y) \partial_1q_2(y,z)dy f'(z) dz
 \end{align*}
 for any $f\in \dom(\T_3)$, $x\in\mathbb R_+$ where we used Fubini's theorem in the last step. Since $\dom(\T_3)$ is densely defined a monotone class argument yields
 $$ q_3(x,z) = \int_0^\infty q_1(x,y) \partial_1q_2(y,z)dy, $$
 and the Theorem follows.
\end{proof}

We end this analysis of Hilbert-Schmidt operators with looking at the covariance operator
$\Q$, and investigate the correlation implied on the L\'evy field $L(t,x)$ driving the noise
of the forward dynamics in the case $L$ is $H_w$-valued. 

Assume $\Q$ is the given as the integral operator 
$$
\Q f(x)=\int_0^{\infty}q(x,y)f'(y)\,dy
$$
for some "nice" kernel function $q$, i.e.\ the function $h_x$ defined in Lemma~\ref{l:stetigkeit der Punktauswertung} is in the domain of $\Q$. Recall that 
$$
\E[L(t,x)L(s,y)]=\E[\delta_xL(t)\delta_yL(s)]=t\wedge s\langle \Q h_x,h_y\rangle .
$$ 
We have that 
$$
h_x'(v)=\left\{\begin{array}{ll} 0, & v>x \\ w^{-1}(v), & v\leq x\,. \end{array}\right.
$$
Thus,
$$
\Q h_x(u)=\int_0^{\infty}q(u,v)\mathbf{1}(v\leq x)w^{-1}(v)\,dv=\int_0^x
q(u,v)w^{-1}(v)\,dv\,.
$$
Moreover, 
$$
(\Q h_x)'(u)=\partial_u\int_0^xq(u,v)w^{-1}(v)\,dv\,.
$$
But then for $x,y\in\mathbb R_+^2$,
\begin{align*}
\langle\Q h_x,h_y\rangle&=\Q h_x(0)h_y(0)+\int_0^{\infty}w(u)(\Q h_x)'(u)h_y'(u)\,du \\
&=\int_0^xq(0,v)w^{-1}(v)\,dv+\int_0^{\infty}w(u)\partial_u\int_0^xq(u,v)w^{-1}(v)\,dv
w^{-1}(u)\mathbf{1}(u\leq y)\,du \\
&=\int_0^xq(0,v)w^{-1}(v)\,dv+\int_0^{y}\partial_u\int_0^xq(u,v)\,w^{-1}(v)\,dvdu \\
&=\int_0^xq(y,v)w^{-1}(v)\,dv.
\end{align*}
Evidently, the covariance between $L(t,x)$ and $L(s,y)$ will depend on the choice of 
$q$ as well as the weight function $w$ of the space $H_w$. For example, choosing 
$q(y,v)=\exp(-\delta(|y-v|)$ and $w(v)=\exp(-\alpha v)$ for $\alpha>\delta>0$, then, 
supposing $y\geq x$ we find that the spatial correlation structure becomes
$$
\text{corr}(L(1,x),L(1,y))=e^{-\frac12\delta(y-x)}\sqrt{\frac{1-e^{-(\alpha-\delta)x}}{1-e^{-(\alpha-\delta)y}}}.
$$
Note that this is not stationary in distance $y-x$ between the time to maturities.

\begin{rem}
 $\Psi\Q\Psi^*$ is a positive trace-class operator and hence it has a unique positive root $\mathcal C$ which is a Hilbert-Schmidt operator. Then 
Theorem~\ref{S:Darstellung von HS operatoren} yields the existence of a kernel $k:\mathbb R_+\times\mathbb R_+\rightarrow\mathbb R$ and a function $g:\mathbb R_+\rightarrow\mathbb R$ such that
$$ \mathcal Cf(x) = g(x)f(0) + \int_0^\infty k(x,y)f'(y)dy,\quad x\in\mathbb R_+$$
for any $f\in H_w$. By a similar discussion as above, we can find the covariance (or
correlation) structure for the field $f(t,x)$ in terms of $k$.
But in the case $\Q$ and $\Psi$ are integral operators with sufficiently
regular kernels, we can associate $k$ with these kernels by appealing to 
Theorem~\ref{thm:convolution_kernels}.
\end{rem}

\subsection{Multiplication operators}\label{a:muliplikatoren}
An other useful class of operators are multiplication operators. For example,
we recall them appearing in connection with integral operators in 
Subsection~\ref{subsect:int_op}.

\begin{defn}
 Let $f:\mathbb R_+\rightarrow\mathbb R$ be a measurable function and
$$D_f:=\{g\in H_w: fg\in H_w\}.$$
The {\em multiplication operator with kernel $f$} is defined by
 $$\mathcal M_f:\mathcal D_f\rightarrow H_w,g\mapsto fg.$$
\end{defn}
We have the following: 
\begin{lem}
The multiplication operator $\mathcal M_f$ with kernel $f$ is closed.
\end{lem}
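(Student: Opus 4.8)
The plan is to verify the definition of closedness directly, exploiting the fact that norm convergence in $H_w$ is strong enough to force pointwise convergence at every point. Concretely, I would start with a sequence $(g_n)_{n\in\mathbb N}$ in $D_f$ such that $g_n\to g$ in $H_w$ and $\mathcal M_f g_n = fg_n\to h$ in $H_w$ for some $g,h\in H_w$, and aim to show that $g\in D_f$ and $\mathcal M_f g=h$.

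The key observation is that for each fixed $x\in\mathbb R_+$ the point evaluation $\delta_x$ is a continuous linear functional on $H_w$, as recorded in Lemma~\ref{l:stetigkeit der Punktauswertung}. Applying $\delta_x$ to the two convergent sequences therefore gives, for every $x\in\mathbb R_+$,
\begin{align*}
 g_n(x) &\longrightarrow g(x), \\
 f(x)g_n(x) = (fg_n)(x) &\longrightarrow h(x).
\end{align*}
Since $f(x)$ is a fixed scalar, the first line yields $f(x)g_n(x)\to f(x)g(x)$, and comparing with the second line forces $h(x)=f(x)g(x)$ for every $x\in\mathbb R_+$. Thus $fg=h$ as functions on $\mathbb R_+$.

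It remains only to read off the conclusion: because $h\in H_w$ by assumption and $fg=h$, we have $fg\in H_w$, so $g$ belongs to $D_f$ by definition of the domain, and $\mathcal M_f g=fg=h$. This is exactly the closedness property. There is no real obstacle here beyond the single structural input that point evaluations are bounded on $H_w$; the only thing to be slightly careful about is that this pointwise identification of the limit is valid at \emph{every} $x\in\mathbb R_+$ (not merely almost everywhere), which is precisely what the continuity of each $\delta_x$ guarantees, and which is what lets us avoid any measure-theoretic subtleties in identifying the limit $h$ with the product $fg$.
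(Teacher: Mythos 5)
Your proof is correct and is essentially identical to the paper's own argument: both use the continuity of the point evaluations $\delta_x$ to identify the $H_w$-limit of $fg_n$ pointwise with $fg$, and then conclude $g\in D_f$ and $\mathcal M_f g = h$. Nothing is missing.
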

\begin{proof}
 Let $(g_n)_{n\in\mathbb N}$ be a sequence in $D_f$ which converges to some $g$ in $H_w$ such that $\mathcal M_fg_n$ converges to some $b$ in $H_w$. Let $x\in\mathbb R_+$. Then we have
\begin{eqnarray*}
   b(x) &=& \delta_x(b) \\
        &\leftarrow& \delta_x(\mathcal M_fg_n) \\
        &=& f(x)\delta_x(g_n) \\
        &\rightarrow& f(x)g(x).
\end{eqnarray*}
Thus $fg=b\in H_w$ and hence $g\in D_f$ and $\mathcal M_fg=b$.
\end{proof}
As it turns out, under some additional hypothesis on the weight function $w$, the domain $D_f$ of $\mathcal M_f$ is $H_w$ if and only if $f\in H_w$. In this case the multiplication operator is continuous. This is the content of the next Theorem: 
\begin{thm}\label{s:stetige Muliplikationsoperatoren}
Assume that $k^2:=\int_0^\infty \frac{1}{w(x)}\,dx<\infty$ and let $f:\mathbb R_+\rightarrow\mathbb R$ measurable. Then
the  following statements are equivalent.
 \begin{enumerate}
 \item $\mathcal M_f$ is a continuous linear operator on $H_w$,
 \item $\mathcal M_f$ is everwhere defined and
 \item $f\in H_w$.
 \end{enumerate}
If $\mathcal M_f$ is a continuous linear operator, then its operator norm is at most $\sqrt{5+4k^2}\Vert f\Vert_w$ and $\mathcal M_f^*g(x)=\<g,fh_x\>$ where $h_x$ is defined as in Lemma \ref{l:stetigkeit der Punktauswertung}.
\end{thm}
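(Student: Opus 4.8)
The plan is to close the cycle of implications $(3)\Rightarrow(1)\Rightarrow(2)\Rightarrow(3)$ and then read off the adjoint from the reproducing property of the point evaluations. Everything rests on the Banach algebra structure of $H_w$ under pointwise multiplication established in Proposition~\ref{k:Banachalgebra} (and already exploited in Corollary~\ref{k:Stetigkeit der Quadratfunktion}), which under the standing hypothesis $k^2<\infty$ furnishes the estimate
$$ \Vert fg\Vert_w \le \sqrt{5+4k^2}\,\Vert f\Vert_w\Vert g\Vert_w,\qquad f,g\in H_w. $$
Since this inequality is available to me from earlier in the paper, there is no serious analytic obstacle left; the real content is recognising that the whole statement reduces to it together with one well-chosen test function.

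For $(3)\Rightarrow(1)$ I would argue directly: if $f\in H_w$, the displayed estimate shows $fg\in H_w$ for every $g\in H_w$, so $D_f=H_w$ and $\mathcal M_f$ is everywhere defined, while the same estimate gives $\Vert\mathcal M_f g\Vert_w\le\sqrt{5+4k^2}\,\Vert f\Vert_w\Vert g\Vert_w$, hence boundedness with $\Vert\mathcal M_f\Vert_{\mathrm{op}}\le\sqrt{5+4k^2}\,\Vert f\Vert_w$, exactly the claimed norm bound. The implication $(1)\Rightarrow(2)$ is immediate, as a continuous linear operator on $H_w$ is by convention defined on all of $H_w$.

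The reverse step $(2)\Rightarrow(3)$ is where the geometry of $H_w$ enters, and it is pleasantly short. The constant function $\mathbf 1\colon x\mapsto 1$ lies in $H_w$ with $\Vert\mathbf 1\Vert_w=1$, since its derivative vanishes. If $\mathcal M_f$ is everywhere defined then $\mathbf 1\in D_f$, and therefore $f=f\cdot\mathbf 1=\mathcal M_f\mathbf 1\in H_w$. This single evaluation closes the cycle. I expect no difficulty here; the only thing to keep in mind is that all three conditions are stated under $k^2<\infty$, which is precisely what makes $H_w$ a Banach algebra in the first place.

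For the adjoint I would assume the equivalent conditions, so that $\mathcal M_f$ is bounded and $\mathcal M_f^*$ exists, and use that $\delta_x$ is represented by $h_x$, i.e.\ $g(x)=\langle h_x,g\rangle$ for all $g\in H_w$ (Lemma~\ref{l:stetigkeit der Punktauswertung}). Because $f,h_x\in H_w$, the Banach algebra property gives $fh_x\in H_w$, so the right-hand side is well defined, and for every $g\in H_w$ and $x\in\mathbb R_+$,
$$ (\mathcal M_f^* g)(x)=\langle h_x,\mathcal M_f^* g\rangle=\langle \mathcal M_f^* g,h_x\rangle=\langle g,\mathcal M_f h_x\rangle=\langle g,fh_x\rangle, $$
where I used symmetry of the real inner product and the defining relation of the adjoint. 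This is the asserted formula $\mathcal M_f^* g(x)=\langle g,fh_x\rangle$, completing the proof.
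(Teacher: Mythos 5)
Your easy implications and your adjoint computation coincide with the paper's own proof: $(1)\Rightarrow(2)$ is trivial, $(2)\Rightarrow(3)$ is exactly the paper's one-line argument $f=\mathcal M_f\mathbf 1$, and the adjoint formula is obtained in the paper by the same chain $\mathcal M_f^*g(x)=\<\mathcal M_f^*g,h_x\>=\<g,\mathcal M_fh_x\>=\<g,fh_x\>$. The problem is $(3)\Rightarrow(1)$, which carries all the analytic content of the theorem, and your argument for it is circular. You invoke the submultiplicativity estimate $\Vert fg\Vert_w\le\sqrt{5+4k^2}\,\Vert f\Vert_w\Vert g\Vert_w$ of Proposition~\ref{k:Banachalgebra} as if it were ``available from earlier in the paper,'' but it is not: Proposition~\ref{k:Banachalgebra} appears \emph{after} Theorem~\ref{s:stetige Muliplikationsoperatoren}, and its entire proof is the citation ``Theorem~\ref{s:stetige Muliplikationsoperatoren} yields $\Vert fg\Vert=k_1\Vert\mathcal M_fg\Vert_w\le k_1^2\Vert f\Vert_w\Vert g\Vert_w$.'' Likewise, Lemma~\ref{l:Stetigkeit der Quadratfunktion}, which you cite as prior use of the algebra structure, is deliberately left unproved where it is stated; the paper defers its proof to Corollary~\ref{k:Stetigkeit der Quadratfunktion}, which rests on Proposition~\ref{k:Banachalgebra} and hence again on this theorem. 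So the inequality you take as your key input is, in the paper's logical order, a \emph{consequence} of the statement you are proving.

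What is missing is the direct estimate that the paper carries out and that cannot be bypassed: for $f,g\in H_w$ one expands
\begin{align*}
\Vert fg\Vert_w^2 &= f^2(0)g^2(0)+\int_0^\infty w(x)\bigl(f'(x)g(x)+f(x)g'(x)\bigr)^2\,dx,
\end{align*}
bounds $f^2(0)g^2(0)\le\Vert f\Vert_w^2\Vert g\Vert_w^2$, controls the terms involving $\Vert f\Vert_\infty$ and $\Vert g\Vert_\infty$ by the sup-norm estimate $\Vert\cdot\Vert_\infty\le\sqrt{1+k^2}\,\Vert\cdot\Vert_w$ of Lemma~\ref{l:punktauswertung bei undendlich}, and handles the cross term via the Cauchy--Schwarz inequality applied to the auxiliary functions $\tilde f(x)=\int_0^x\vert f'(y)\vert\,dy$ and $\tilde g(x)=\int_0^x\vert g'(y)\vert\,dy$, whose $H_w$-norms are dominated by $\Vert f\Vert_w$ and $\Vert g\Vert_w$. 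Collecting terms gives the constant $1+4(1+k^2)=5+4k^2$ and with it both the boundedness of $\mathcal M_f$ and the claimed operator norm bound. Once this computation is inserted in place of your appeal to Proposition~\ref{k:Banachalgebra}, your proof is complete and agrees with the paper's; without it, the theorem is being used to prove itself.
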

\begin{proof}
The implication (1)$\Rightarrow$(2) is obvious. For the implication (2)$\Rightarrow$(3),  we 
observe that $f =\mathcal M_f1 \in H_w$ where $1\in H_w$ denotes the function which is constant one. The last implication (3)$\Rightarrow$(1)
requires more care:

Let $g\in H_w$ and recall from Lemma \ref{l:punktauswertung bei undendlich} that $\Vert g\Vert_\infty\leq c\Vert g\Vert_w$ where $c:=\sqrt{1+k^2}$. We have
 \begin{eqnarray*}
   \Vert fg\Vert^2_w &=& f^2(0)g^2(0) + \int_0^\infty w(x) ((fg)'(x))^2 dx \\
                &\leq& \Vert f\Vert_w^2\Vert g\Vert_w^2 + \int_0^\infty w(x) \left((f(x)g'(x))^2+2f(x)g(x)f'(x)g'(x)+(f'(x)g(x))^2\right)dx \\
                &\leq& \Vert f\Vert_w^2\Vert g\Vert_w^2 + \Vert f\Vert_\infty^2\Vert g\Vert_w^2+\Vert f\Vert_w^2\Vert g\Vert_\infty^2+2 \Vert f\Vert_\infty\Vert g\Vert_\infty \int_0^\infty w(x) \vert f'(x)g'(x)\vert dx \\
                &\leq& (1+2c^2)\Vert f\Vert_w^2\Vert g\Vert_w^2 + 2c \Vert f\Vert_w\Vert g\Vert_w \vert\<\tilde f,\tilde g\>\vert \\
                &\leq& (1+2c^2)\Vert f\Vert_w^2\Vert g\Vert_w^2 + 2c \Vert f\Vert_w\Vert g\Vert_w \Vert \tilde f\Vert_w \Vert\tilde g\Vert_w \\
                &=& (1+4c^2)\Vert f\Vert_w^2\Vert g\Vert_w^2
 \end{eqnarray*}
 where $\tilde f(x) :=\int_0^x \vert f'(x)\vert dx$, $\tilde g(x) := \int_0^x \vert g'(x)\vert dx$, $x\in\mathbb R_+$ and we used Cauchy-Schwarz inequality for the last inequality. Thus $\mathcal M_f$ is a continuous linear operator with operator norm at most $\sqrt{1+4c^2}\Vert f\Vert_w$.

 For the representation of the dual operator simply observe that
 $$\mathcal M_f^*g(x) = \<\mathcal M_f^*g,h_x\> = \<g,\mathcal M_fh_x\> = \<g,fh_x\>.$$
This completes the proof.
\end{proof}

As an example, we consider a specific forward dynamics using a multiplication operator to define the diffusion term. Suppose that $w(x)=e^{\alpha x},x\in\mathbb R_+$ for some $\alpha >0$ for the space $H_w$. In this case the assumption $k^2=\int_0^{\infty}w^{-1}(y)\,dy=1/(2\alpha)<\infty$ is satisfied. In our example we will assume that the driving noise is a Brownian motion $W$ on the space $H_w$, and that $\Psi(t) := \psi(t,f(t))$, $\psi(t,h) := \mathcal M_{hg(t)}$ for any $h\in H_w$, $t\geq 0$ where $g:\mathbb R_+\rightarrow H_w$ is continuous. For simplicity we will assume that $\beta=0$, i.e.\ no drift in the forward dynamics. 
Theorem~\ref{s:stetige Muliplikationsoperatoren} yields that $\psi$ satisfies the requirements of Proposition~\ref{p:f ODE}. In view of Proposition~\ref{p:f ODE} there is a stochastic c\`adl\`ag process $f$ which is a mild solution to the $H_w$-valued stochastic differential equation 
$$
df(t) = \partial_xf(t)dt + \Psi(f(t))dW(t),
$$
with $f(0)=f_0\in H_w$, i.e.\
$$ 
f(t) =\U_tf_0 + \int_0^t \U_{t-s}\psi(t,f(s))\, dW(s),\quad t\in\mathbb R_+.
$$
 Let $B_T$ be as in Example \ref{b:Psi=I}, $G(t,T):=g(t,T-t)$ for $0\leq t\leq T<\infty$, then we have
 \begin{align*}
  F(t,T) &= F(0,T) + \int_0^t F(s,T)G(s,T) dB_T(s),
\end{align*} 
or
\begin{align*}
    F(t,T)     = F(0,T) \mathcal E ( A_T(t) ),
\end{align*}
where $\mathcal E( A_T(t) )$ is the stochastic exponential of the process 
\begin{align*}
  A_T(t) &:= \int_0^t G(s,T) dB_T(s).
\end{align*}
In particular,
\begin{align*} 
  S(t) &= f_0(t) + \int_0^t F(s,t)G(s,t) dB_t(s),\quad t\in\mathbb R_+.
 \end{align*}
 In this example the dynamics of the forwards are given by a mere stochastic exponential while the spot price process follows a rather complicated dynamic. $A_T$ is a Gaussian process with independent increments. For the specific choice $G(t,T) = \exp(-\delta(T-s))\sigma(T)$, $0\leq t\leq T<\infty$, $\delta>0$ we recover the forward dynamics used in Audet et al.~\cite{audet.et.al.04}. Here, $\delta$ is used to model the Samuelson effect and the time-inhomogeneous function $\sigma$ covers possible seasonality effects in the diffusion.

We end with a structural result on the space $H_{w}$.
\begin{prop}\label{k:Banachalgebra} Assume that $k^2:=\int_0^\infty \frac{1}{w(x)}\,dx<\infty$.
 Let $\Vert f\Vert:=k_1\Vert f\Vert_w$ for $f\in H_w$ where $k_1=\sqrt{5+4k^2}$. Then 
 $$ (H_w,\Vert\cdot\Vert) $$
 is a separable Banach algebra relative to the pointwise multiplication.
\end{prop}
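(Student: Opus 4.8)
The plan is to verify separately the three defining features of a separable Banach algebra: that $(H_w,\Vert\cdot\Vert)$ is a separable Banach space, that $H_w$ is closed under pointwise multiplication so that the product is a genuine bilinear operation on $H_w$, and that the rescaled norm is submultiplicative. The purely algebraic axioms — associativity, commutativity, distributivity and bilinearity of pointwise multiplication — require no argument, since they are pointwise identities of real-valued functions; the constant function $1$ (which lies in $H_w$ because $w(0)=1$ and $1'=0$) serves as a multiplicative unit should a unital structure be desired.

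For the Banach space structure I would invoke \eqref{r:quadratintegrierbarkeit:isometrie}: the map $(\delta_0,\mathcal W)$ is an isometric isomorphism of $H_w$ onto $\mathbb R\times L^2(\mathbb R_+)$. Since $L^2(\mathbb R_+)$ is complete and separable, so is $\mathbb R\times L^2(\mathbb R_+)$, and hence $H_w$ under $\Vert\cdot\Vert_w$. Because $\Vert\cdot\Vert=k_1\Vert\cdot\Vert_w$ differs from $\Vert\cdot\Vert_w$ only by the positive constant $k_1$, the two norms are equivalent, so $(H_w,\Vert\cdot\Vert)$ is again a separable Banach space.

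The substance of the statement is submultiplicativity, and this is essentially already contained in the proof of Theorem~\ref{s:stetige Muliplikationsoperatoren}. There, under the standing hypothesis $k^2<\infty$, it is shown for $f,g\in H_w$ that $fg\in H_w$ with
\[
\Vert fg\Vert_w^2 \le (1+4c^2)\,\Vert f\Vert_w^2\,\Vert g\Vert_w^2,\qquad c:=\sqrt{1+k^2}.
\]
In particular $H_w$ is closed under multiplication, which supplies the one missing algebraic point. Since $1+4c^2 = 5+4k^2 = k_1^2$, the bound reads $\Vert fg\Vert_w\le k_1\Vert f\Vert_w\Vert g\Vert_w$, whence
\[
\Vert fg\Vert = k_1\Vert fg\Vert_w \le k_1^2\,\Vert f\Vert_w\,\Vert g\Vert_w = (k_1\Vert f\Vert_w)(k_1\Vert g\Vert_w) = \Vert f\Vert\,\Vert g\Vert .
\]
Thus the factor $k_1$ in the definition of $\Vert\cdot\Vert$ is chosen precisely to absorb the constant in the product estimate and render the norm submultiplicative. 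The only genuine obstacle — the product bound itself — is therefore inherited directly from Theorem~\ref{s:stetige Muliplikationsoperatoren}, and the remaining verification is routine.
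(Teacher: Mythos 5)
Your proposal is correct and follows essentially the same route as the paper: the paper's proof is precisely the one-line appeal to Theorem~\ref{s:stetige Muliplikationsoperatoren}, whose operator-norm bound $\Vert \mathcal M_f\Vert_{\op}\leq k_1\Vert f\Vert_w$ gives $\Vert fg\Vert = k_1\Vert fg\Vert_w\leq k_1^2\Vert f\Vert_w\Vert g\Vert_w = \Vert f\Vert\,\Vert g\Vert$. Your additional verification of completeness and separability via the isometry \eqref{r:quadratintegrierbarkeit:isometrie}, and of closure under multiplication, only makes explicit what the paper leaves tacit.
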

\begin{proof}
 Let $f,g\in H_w$. Theorem~\ref{s:stetige Muliplikationsoperatoren} yields
 $$ \Vert fg\Vert = k_1\Vert\mathcal M_fg\Vert_w \leq k_1^2\Vert f\Vert_w\Vert g\Vert_w = \Vert f\Vert\Vert g\Vert.$$
\end{proof}
The Banach algebraic property of $H_w$ turns out to simplify 
considerably the proof of a technical Lemma stated earlier 
(see Lemma~\ref{l:Stetigkeit der Quadratfunktion}): We show the following Corollary on 
Lipschitz-continuity of the square function:  
\begin{cor}\label{k:Stetigkeit der Quadratfunktion}
 Assume that $\int_0^\infty\frac{1}{w(x)}dx<\infty$. Then
 $$\Vert g_1^2-g_2^2\Vert_w \leq 3\Vert h_\infty\Vert_w \Vert g_1+g_2\Vert_w \Vert g_1-g_2\Vert_w$$
 for any $g_1,g_2\in H_w$.
\end{cor}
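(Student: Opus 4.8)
The plan is to reduce everything to the Banach algebra structure already established in Proposition~\ref{k:Banachalgebra}, so that almost no direct computation on $H_w$ is needed. The starting observation is the elementary pointwise factorisation
$$
g_1^2-g_2^2 = (g_1+g_2)(g_1-g_2),
$$
which holds in $H_w$ because $H_w$ is closed under pointwise products (this is precisely the content of Proposition~\ref{k:Banachalgebra}, which rests on Theorem~\ref{s:stetige Muliplikationsoperatoren}).

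Next I would invoke the submultiplicativity of the norm $\Vert\cdot\Vert$ from Proposition~\ref{k:Banachalgebra}. Recalling that $\Vert f\Vert = k_1\Vert f\Vert_w$ with $k_1=\sqrt{5+4k^2}$, the inequality $\Vert fg\Vert\leq\Vert f\Vert\,\Vert g\Vert$ unwinds to the scalar statement $\Vert fg\Vert_w\leq k_1\Vert f\Vert_w\Vert g\Vert_w$. Applying this to $f=g_1+g_2$ and $g=g_1-g_2$ yields
$$
\Vert g_1^2-g_2^2\Vert_w \;\leq\; k_1\,\Vert g_1+g_2\Vert_w\,\Vert g_1-g_2\Vert_w .
$$

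It then only remains to replace the constant $k_1$ by $3\Vert h_\infty\Vert_w$. Here I would use the value $\Vert h_\infty\Vert_w^2 = 1+k^2$ computed in Lemma~\ref{l:punktauswertung bei undendlich}, so that $3\Vert h_\infty\Vert_w=\sqrt{9(1+k^2)}=\sqrt{9+9k^2}$, and compare it with $k_1=\sqrt{5+4k^2}$. Since $9+9k^2-(5+4k^2)=4+5k^2\geq 0$ for every admissible $k$, we get $k_1\leq 3\Vert h_\infty\Vert_w$, and the claimed bound follows immediately. There is no genuine obstacle here: the analytic work is entirely absorbed into Proposition~\ref{k:Banachalgebra}, and the only remaining point is the trivial scalar comparison of the two constants, chosen so that the cleaner geometric quantity $\Vert h_\infty\Vert_w$ appears in the final estimate (as required for the earlier Lemma~\ref{l:Stetigkeit der Quadratfunktion}).
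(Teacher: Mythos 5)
Your proof is correct and follows essentially the same route as the paper: both factor $g_1^2-g_2^2=(g_1+g_2)(g_1-g_2)$ and apply the submultiplicativity of the norm coming from Proposition~\ref{k:Banachalgebra}. You are in fact slightly more complete than the paper's own proof, which stops at the bound $\sqrt{5+4k^2}\,\Vert g_1+g_2\Vert_w\Vert g_1-g_2\Vert_w$ and leaves implicit the final comparison $\sqrt{5+4k^2}\leq 3\Vert h_\infty\Vert_w=\sqrt{9+9k^2}$ that you spell out explicitly.
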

\begin{proof}
  Define $f_+:=g_1+g_2$ and $f_-:=g_1-g_2$. Then Corollary \ref{k:Banachalgebra} yields
\begin{align*}
 \Vert g_1^2-g_2^2 \Vert_w &= \Vert f_+f_- \Vert_w 
                            \leq \sqrt{1+4k^2} \Vert f_+\Vert_w\Vert f_-\Vert_w.
\end{align*}
\end{proof}

\appendix
\section{Some technical results}

We present some technical results and considerations which are used in the main text of 
the paper. Most of these results are known, but collected here for the 
convenience of the reader.

\subsection{Riesz bases}
%
%

 Let $\lambda\in\mathbb C$, $T>0$, $g_\lambda(x):=e^{-\lambda x}$ for any $x\in[0,T)$ and
\begin{equation}
\label{r:exponentialmultiplikator}
\mathcal M_\lambda:L^2([0,T),\mathbb C) \rightarrow L^2([0,T),\mathbb C), f\mapsto fg_\lambda.
\end{equation}
 Then $\mathcal M_\lambda$ is an invertible linear operator and
 $$ \Vert f\Vert\min\{1,e^{-T\Re(\lambda)}\} \leq \Vert\mathcal M_\lambda f\Vert\leq \Vert f\Vert\max\{1,e^{-T\Re(\lambda)}\}.$$

\begin{lem}\label{l:stetige Einbettung}
  Let $\lambda,T>0$, $$\mathrm{cut}:\mathbb R\rightarrow [0,T),x\mapsto x-\max\{Tz: z\in\mathbb Z:Tz\leq x\}$$ and
  \begin{align*}
   \mathcal A:L^2([0,T),\mathbb C)&\rightarrow L^2(\mathbb R_+,\mathbb C), \\
   f&\mapsto \left(x\mapsto e^{-\lambda x}f(\mathrm{cut}(x))\right).
  \end{align*}
  Then $\mathcal A$ is a bounded linear operator and its range is closed in $L^2(\mathbb R_+,\mathbb C)$. Moreover,
  $$ \frac{e^{-2T\lambda}}{1-e^{-2T\lambda}}\Vert f\Vert^2 \leq \Vert\mathcal Af\Vert^2 \leq \frac{1}{1-e^{-2T\lambda}}\Vert f\Vert^2$$
  for any $f\in L^2([0,T),\mathbb C)$.
\end{lem}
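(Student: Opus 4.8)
The plan is to compute $\Vert\mathcal Af\Vert^2$ exactly and to read off both the two-sided bound and the structural conclusions directly from the resulting formula. The crucial observation is that $\mathrm{cut}$ is $T$-periodic, so that $x\mapsto f(\mathrm{cut}(x))$ is simply the $T$-periodic extension of $f$ to $\mathbb R_+$. First I would decompose $\mathbb R_+=\bigcup_{n\geq 0}[nT,(n+1)T)$ and, on each block, substitute $u=x-nT\in[0,T)$, on which $\mathrm{cut}(x)=u$. This turns the contribution of the $n$-th block into $e^{-2\lambda nT}\int_0^T e^{-2\lambda u}\vert f(u)\vert^2\,du$, and summing the geometric series $\sum_{n\geq0}e^{-2\lambda nT}=(1-e^{-2\lambda T})^{-1}$, which converges because $\lambda,T>0$, should give the closed form
\[
\Vert\mathcal Af\Vert^2 = \frac{1}{1-e^{-2\lambda T}}\int_0^T e^{-2\lambda u}\vert f(u)\vert^2\,du.
\]

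From this identity the asserted inequalities are immediate. For $u\in[0,T)$ one has $e^{-2\lambda T}\leq e^{-2\lambda u}\leq 1$, so the weighted integral is sandwiched between $e^{-2\lambda T}\Vert f\Vert^2$ and $\Vert f\Vert^2$; dividing by $1-e^{-2\lambda T}$ then yields exactly the claimed bounds.

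The upper bound shows $\mathcal A$ is bounded (linearity being clear from its definition), while the lower bound shows $\mathcal A$ is bounded below, namely $\Vert\mathcal Af\Vert\geq c\Vert f\Vert$ with $c:=\bigl(e^{-2\lambda T}/(1-e^{-2\lambda T})\bigr)^{1/2}>0$. I would close the argument with the standard fact that a bounded-below operator has closed range: if $\mathcal Af_n\to g$ in $L^2(\mathbb R_+,\mathbb C)$, then $(\mathcal Af_n)$ is Cauchy, and the lower bound forces $(f_n)$ to be Cauchy in $L^2([0,T),\mathbb C)$; by completeness $f_n\to f$ for some $f$, and continuity of $\mathcal A$ gives $g=\mathcal Af\in\ran(\mathcal A)$. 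I do not expect a genuine obstacle here; the only points requiring mild care are the change of variables on each periodicity block and the convergence of the geometric series, both of which are routine.
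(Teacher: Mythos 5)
Your proof is correct and follows essentially the same route as the paper: decompose $\mathbb R_+$ into the blocks $[nT,(n+1)T)$, sum the geometric series to reduce $\Vert\mathcal Af\Vert^2$ to the weighted integral $\int_0^T e^{-2\lambda u}\vert f(u)\vert^2\,du$ (which the paper phrases via the multiplication operator $\mathcal M_\lambda$ and its norm estimate), and read off the two-sided bound from $e^{-2\lambda T}\leq e^{-2\lambda u}\leq 1$. The paper concludes closed range by noting $\mathcal A$ is an isomorphism onto its range; your Cauchy-sequence argument is just the standard proof of that same fact spelled out.
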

\begin{proof}
 $\mathcal A$ is obviously linear. Let $f\in L^2([0,T),\mathbb C)$. Then
 \begin{align*}
   \Vert Af\Vert^2  & = \sum_{n=0}^\infty \int_{nT}^{nT+T} e^{-2\lambda x} \vert f\vert^2(\mathrm{cut}(x)) dx \\
                    &= \sum_{n=0}^\infty \int_0^{T} e^{-2\lambda x-2\lambda T n} \vert f\vert^2(x) dx \\
                    &= \sum_{n=0}^\infty e^{-2\lambda T n} \Vert\mathcal M_\lambda f\Vert^2
 \end{align*}
 where $\mathcal M_\lambda$ is defined in \eqref{r:exponentialmultiplikator}. The norm
estimates on $\mathcal M_{\lambda}$ yields
 \begin{align*}
   \Vert\mathcal Af\Vert^2 &\leq \sum_{n=0}^\infty e^{-2\lambda Tn}\Vert f\Vert^2 
                   = \frac{1}{1-e^{-2T\lambda}}\Vert f\Vert^2
 \end{align*}
 and hence $\mathcal A$ is bounded by $\sqrt{\frac{1}{1-e^{-2T\lambda}}}$. Moreover, the computation at the beginning of this section also implies
 \begin{align*}
   \Vert\mathcal Af\Vert^2 &\geq \sum_{n=0}^\infty e^{-2\lambda T(n+1)}\Vert f\Vert^2 
                   =   \frac{e^{-2\lambda T}}{1-e^{-2\lambda T}}\Vert f\Vert^2.
 \end{align*}
  Hence $\mathcal A$ is an isomorphism on its range. Consequently, the range of 
$\mathcal A$ is closed.
\end{proof}

\begin{defn}\label{d:Einschraenkungsoperator}
 Let $T>0$. The {\em restriction operator} is the continuous linear projection given by
  $$\mathcal R_T :L^2(\mathbb R_+,\mathbb C)\rightarrow L^2([0,T),\mathbb C), f \mapsto f\vert_{[0,T)}.$$
\end{defn}

\begin{lem}\label{l:Riesz basis}
 Let $\lambda,x_0>0$, define $g_n:\mathbb R_+\rightarrow\mathbb C,x\mapsto \frac{1}{\sqrt{x_0}}e^{(\frac{2\pi i n}{x_0}-\lambda) x}$ and let $V$ be the closed subspace of $L^2(\mathbb R_+,\mathbb C)$ generated by $(g_n)_{n\in\mathbb Z}$. Then the following statements hold.
 \begin{enumerate}
  \item $(g_n)_{n\in\mathbb Z}$ is a Riesz basis of $V$ and
  \item The linear operator $\mathcal B:V\rightarrow L^2([0,x_0],\mathbb C), f\mapsto f\vert_{[0,x_0]}$ is invertible and continuous.
 \end{enumerate}
\end{lem}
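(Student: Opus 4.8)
The plan is to exhibit $(g_n)_{n\in\mathbb Z}$ as the image of the standard Fourier basis under the embedding $\mathcal A$ from Lemma~\ref{l:stetige Einbettung}, taken with $T=x_0$, and then to read off both assertions directly from the quantitative properties of $\mathcal A$.

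First I would set $e_n(x):=\frac{1}{\sqrt{x_0}}e^{2\pi i n x/x_0}$ for $x\in[0,x_0)$, which form an orthonormal basis of $L^2([0,x_0),\mathbb C)$. Since $\mathrm{cut}(x)=x-x_0\lfloor x/x_0\rfloor$, one has $e^{2\pi i n\,\mathrm{cut}(x)/x_0}=e^{2\pi i n x/x_0}$ for every $x\in\mathbb R_+$, so a direct substitution gives
$$\mathcal A e_n(x)=e^{-\lambda x}e_n(\mathrm{cut}(x))=\frac{1}{\sqrt{x_0}}e^{(2\pi i n/x_0-\lambda)x}=g_n(x).$$
By Lemma~\ref{l:stetige Einbettung}, $\mathcal A$ is bounded and bounded below, hence an isomorphism onto its closed range $R:=\ran(\mathcal A)$. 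Because $\mathcal A$ is bounded with closed range and $(e_n)$ spans $L^2([0,x_0),\mathbb C)$, applying $\mathcal A$ to this span and taking closures yields $R=\mathcal A(\overline{\mathrm{span}}(e_n))=\overline{\mathrm{span}}(g_n)=V$. Thus $(g_n)=(\mathcal A e_n)$ is the image of an orthonormal basis of $L^2([0,x_0),\mathbb C)$ under the invertible operator $\mathcal A\colon L^2([0,x_0),\mathbb C)\to V$, which is precisely the definition of a Riesz basis of $V$ recalled before Theorem~\ref{s:exponentielle Riesz basis}; this establishes (1).

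For (2) I would express the restriction operator through $\mathcal A$. For $x\in[0,x_0)$ we have $\mathrm{cut}(x)=x$, so for any $h\in L^2([0,x_0),\mathbb C)$ the restriction of $\mathcal A h$ to $[0,x_0)$ equals $e^{-\lambda\cdot}h=\mathcal M_\lambda h$, with $\mathcal M_\lambda$ the multiplication operator from \eqref{r:exponentialmultiplikator}. Since every $f\in V=R$ has the form $f=\mathcal A h$ with $h=\mathcal A^{-1}f$, this means $\mathcal B=\mathcal M_\lambda\mathcal A^{-1}$, after identifying $L^2([0,x_0],\mathbb C)$ with $L^2([0,x_0),\mathbb C)$. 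As $\lambda>0$, the norm estimate recorded just before Lemma~\ref{l:stetige Einbettung} gives $e^{-x_0\lambda}\Vert h\Vert\le\Vert\mathcal M_\lambda h\Vert\le\Vert h\Vert$, so $\mathcal M_\lambda$ is an isomorphism; being the composition of two isomorphisms, $\mathcal B$ is continuous and invertible, which proves (2).

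The main obstacle, and really the only nontrivial point, is recognizing that the seemingly ad hoc functions $g_n$ are exactly $\mathcal A e_n$ and that the restriction map collapses $\mathcal A$ to the elementary multiplication $\mathcal M_\lambda$; once these two identifications are in place, Lemma~\ref{l:stetige Einbettung} supplies all the quantitative content and no further estimates are required.
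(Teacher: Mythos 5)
Your proof is correct and follows essentially the same route as the paper: both identify $g_n=\mathcal A e_n$ for the Fourier basis $(e_n)$ of $L^2([0,x_0),\mathbb C)$, invoke Lemma~\ref{l:stetige Einbettung} to get an isomorphism onto the closed range $V$, and factor the restriction map through $\mathcal A$ and the multiplication operator (the paper writes the inverse $\mathcal B^-=\mathcal A\circ\mathcal M_{-\lambda}$, you write $\mathcal B=\mathcal M_\lambda\circ\mathcal A^{-1}$, which is the same identity read in the other direction). No gaps.
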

\begin{proof}
 \begin{enumerate}
  \item For any $n\in\mathbb Z$ define
 $$ e_n:[0,x_0]\rightarrow\mathbb C,x\mapsto \frac{1}{\sqrt{x_0}}e^{\frac{2\pi in}{x_0} x}.$$
 Then $(e_n)_{n\in\mathbb Z}$ is an orthonormal basis for $L^2([0,x_0],\mathbb C)$. Let $\mathcal A$ be defined as in Lemma \ref{l:stetige Einbettung}. Then 
$g_n = \mathcal Ae_n$ for any $n\in\mathbb Z$. Hence the closed subspace generated by $(g_n)_{n\in\mathbb Z}$ is the range of $\mathcal A$. The properties of 
$\mathcal A$ which are stated in Lemma \ref{l:stetige Einbettung} and Young~\cite[Theorem 1.9]{young.80} imply that $(g_n)_{n\in\mathbb Z}$ is a Riesz basis of $V$.
  \item Define $\mathcal B^-:=\mathcal A\circ \mathcal M_{-\lambda}$ where 
$\mathcal M_{-\lambda}$ is defined in \eqref{r:exponentialmultiplikator}. Then 
$\mathcal B$ is continuous, linear and invertible as an operator from $L^2([0,x_0],\mathbb C)$ to $V$. Its inverse $\mathcal B$ is also an invertible continuous linear operator and for $n\in\mathbb N$ we have $\mathcal B g_n = \mathcal M_\lambda e_n = g_n\vert_{[0,x_0]}$.
 \end{enumerate}
\end{proof}

\subsection{Properties of the stochastic integral}
\begin{prop}\label{P:Eig. quad int}
 Let $U$ be a Hilbert space, $T\in\mathbb R_+$, $\mathcal M^2_T$ be the set of $U$-valued square integrable martingales with a.s.\ c\`adl\`ag paths defined on the time intervall $[0,T]$ and $\mathcal M^2_{T,c}$ be the subset of $\mathcal M^2_T$ which contains all a.s.\ continuous square integrable martingales. Let 
$$\Vert\cdot\Vert_T:\mathcal M^2_T\rightarrow\mathbb R,M\mapsto \Vert M\Vert_T:=\sqrt{\E(M^2(T))}.$$
Then $\mathcal M^2_{T,c}$ is a closed subset of the pre-Hilbert space $(\mathcal M^2_T,\Vert\cdot\Vert_T)$. Moreover, $\Vert\cdot\Vert_T$ is equivalent to the norm
$$\Vert\cdot\Vert_T^*:\mathcal M^2_T\rightarrow\mathbb R,M\mapsto \Vert M\Vert_T:=\sqrt{\E(\sup\{M^2(t)):t\in[0,T]\})}.$$
Let $M,N\in\mathcal M^2_T$. Then $\Vert M-N\Vert_T=0$ if and only if there is a null set $N\subseteq\Omega$ such that
$$N(\omega,\cdot)=M(\omega,\cdot)$$
for all $\omega\in\Omega\backslash N$.
\end{prop}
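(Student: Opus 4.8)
The plan is to derive all three assertions from a single ingredient, namely Doob's maximal inequality applied to the real-valued submartingale $t\mapsto\Vert M(t)\Vert_U$. First I would establish the equivalence of norms, which is the heart of the statement. Since the norm $\Vert\cdot\Vert_U$ is convex, conditional Jensen's inequality shows that $t\mapsto\Vert M(t)\Vert_U$ is a non-negative càdlàg submartingale whenever $M\in\mathcal M^2_T$. The elementary bound $\Vert M(T)\Vert_U^2\le\sup_{t\in[0,T]}\Vert M(t)\Vert_U^2$ gives $\Vert M\Vert_T\le\Vert M\Vert_T^*$, while Doob's $L^2$-maximal inequality for this submartingale yields $\E\big(\sup_{t\in[0,T]}\Vert M(t)\Vert_U^2\big)\le 4\,\E\big(\Vert M(T)\Vert_U^2\big)$, that is, $\Vert M\Vert_T^*\le 2\Vert M\Vert_T$. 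This proves that $\Vert\cdot\Vert_T$ and $\Vert\cdot\Vert_T^*$ are equivalent.

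For the final characterisation I would observe that $M-N\in\mathcal M^2_T$ and invoke the equivalence just proved: $\Vert M-N\Vert_T=0$ is equivalent to $\Vert M-N\Vert_T^*=0$, i.e.\ to $\E\big(\sup_{t\in[0,T]}\Vert M(t)-N(t)\Vert_U^2\big)=0$. The latter holds precisely when $\sup_{t\in[0,T]}\Vert M(t)-N(t)\Vert_U=0$ off a $P$-null set $N_0$, which is exactly the assertion that $M$ and $N$ agree for every $t\in[0,T]$ outside $N_0$; the converse implication is immediate. In particular, taking $N=0$ shows that $\Vert\cdot\Vert_T$ is a genuine norm once indistinguishable processes are identified, so that $(\mathcal M^2_T,\Vert\cdot\Vert_T)$ is a pre-Hilbert space with inner product $\langle M,N\rangle_T:=\E\langle M(T),N(T)\rangle_U$.

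To prove that $\mathcal M^2_{T,c}$ is closed, I would take a sequence $(M_n)_{n\in\mathbb N}$ in $\mathcal M^2_{T,c}$ converging in $\Vert\cdot\Vert_T$ to some $M\in\mathcal M^2_T$. By the norm equivalence the convergence also holds in $\Vert\cdot\Vert_T^*$, hence $\E\big(\sup_{t\in[0,T]}\Vert M_n(t)-M(t)\Vert_U^2\big)\to 0$. Passing to a subsequence $(M_{n_k})$ along which $\sup_{t\in[0,T]}\Vert M_{n_k}(t)-M(t)\Vert_U\to 0$ $P$-almost surely, I obtain that on a set of full measure the paths of $M$ are uniform limits of continuous paths and are therefore continuous. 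Thus $M$ has a.s.\ continuous paths, so $M\in\mathcal M^2_{T,c}$ modulo indistinguishability, which is the desired closedness.

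The arguments are all standard once Doob's inequality is in hand, and indeed this maximal estimate is the only substantial analytic input; the remaining parts are formal consequences of it. The one point requiring care will be the passage to the limit in the closedness statement, where the $L^2$-convergence of the running suprema has to be upgraded to almost sure uniform convergence along a subsequence before one may conclude continuity of the limiting paths. Everything else reduces to the scalar maximal inequality and an elementary measure-theoretic extraction.
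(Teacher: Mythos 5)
Your proof is correct: the paper itself offers no argument for this proposition but simply cites Pr\'ev{\^o}t and R\"ockner~\cite{Prevot.Roeckner.07}, and the route you take---conditional Jensen to make $t\mapsto\Vert M(t)\Vert_U$ a c\`adl\`ag submartingale, Doob's $L^2$-maximal inequality for the norm equivalence, and then the indistinguishability characterisation and the subsequence extraction (a.s.\ uniform convergence of continuous paths) for closedness---is exactly the standard argument carried out in that reference. No gaps; your flagged point about upgrading $L^2$-convergence of the running suprema to almost sure uniform convergence along a subsequence is handled correctly.
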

\begin{proof}
  See Pr\'ev{\^ o}t and R\"ockner~\cite[Chapter 2]{Prevot.Roeckner.07}.
\end{proof}

As a simple consequence of Proposition \ref{P:Eig. quad int} one has the well-known
\begin{cor}\label{C:Pfadeigenschaft}
 Let $M$ be a square integrable martingale with a.s.\ c\`adl\`ag paths and values in some Hilbert-space $U$, $H$ be a Hilbert space and $\Psi\in\mathcal L^2_M(H)$. Then
 $$X(t) := \int_0^t\Psi(s)dM(s)$$
defines a square integrable martingale with a.s.\ c\`adl\`ag paths. If $M$ has a.s.\ continuous paths, then $X$ has a.s.\ continuous paths.
\end{cor}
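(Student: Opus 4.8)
The plan is to establish the claim first for elementary integrands, where everything is explicit, and then to reach a general $\Psi\in\mathcal L^2_M(H)$ by approximation, using Proposition~\ref{P:Eig. quad int} (applied with $H$ in the role of $U$) to upgrade convergence in the terminal norm $\Vert\cdot\Vert_T$ to uniform convergence of the paths. First I would fix $T>0$ and treat an elementary integrand $\Psi=\sum_{j=1}^k Y_j 1_{]a_j,b_j]}\phi_j$, with $Y_j$ an $\mathcal F_{a_j}$-measurable square integrable random variable and $\phi_j\in L(U,H)$. For such $\Psi$ the integral is the finite sum $X(t)=\sum_{j=1}^k Y_j\phi_j\left(M(t\wedge b_j)-M(t\wedge a_j)\right)$, which is a square integrable $H$-valued martingale whose paths are a.s.\ c\`adl\`ag because those of $M$ are, and a.s.\ continuous whenever $M$ is continuous. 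Thus $X\in\mathcal M^2_T$, and $X\in\mathcal M^2_{T,c}$ in the continuous case.

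For general $\Psi\in\mathcal L^2_M(H)$ I would pick elementary integrands $\Psi_n$ with $\Psi_n\to\Psi$ in $\mathcal L^2_{M,T}(H)$ and denote by $X_n$ the corresponding integrals. By the isometry property of the stochastic integral (Peszat and Zabczyk~\cite{peszat.zabczyk.07}), $(X_n)_n$ is Cauchy in $(\mathcal M^2_T,\Vert\cdot\Vert_T)$ and converges to $X$. Proposition~\ref{P:Eig. quad int} supplies the equivalence $\Vert\cdot\Vert_T\sim\Vert\cdot\Vert_T^*$, so that $\E\left(\sup_{t\le T}\Vert X_n(t)-X_m(t)\Vert^2\right)\to 0$; passing to a subsequence via Chebyshev's inequality and the Borel--Cantelli lemma then yields $X_{n_k}\to X$ uniformly on $[0,T]$ almost surely. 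Since a uniform limit of c\`adl\`ag functions is c\`adl\`ag, $X$ has a c\`adl\`ag modification and $X\in\mathcal M^2_T$. If moreover $M$ is continuous, each $X_n$ lies in $\mathcal M^2_{T,c}$, and since $\mathcal M^2_{T,c}$ is closed in $(\mathcal M^2_T,\Vert\cdot\Vert_T)$ by Proposition~\ref{P:Eig. quad int}, the limit satisfies $X\in\mathcal M^2_{T,c}$, i.e.\ $X$ has a.s.\ continuous paths.

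Finally I would let $T$ run through $\mathbb N$ and patch the modifications together: the last assertion of Proposition~\ref{P:Eig. quad int} shows that the modification obtained on $[0,T]$ and that on $[0,T']$ are indistinguishable on their common interval, so they determine a single process on $\mathbb R_+$ with the stated path regularity. The only genuinely delicate point is the passage from $L^2$-convergence at the terminal time to uniform convergence of the paths; this is exactly where the equivalence $\Vert\cdot\Vert_T\sim\Vert\cdot\Vert_T^*$ and the closedness of $\mathcal M^2_{T,c}$ provided by Proposition~\ref{P:Eig. quad int} do the essential work, while the density of elementary integrands and the isometry of the integral are the standard ingredients of the Hilbert-space stochastic integral.
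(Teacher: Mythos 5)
Your proof is correct and takes essentially the same route as the paper, which disposes of this corollary in one line as ``a simple consequence of Proposition~\ref{P:Eig. quad int}'' (itself quoted from Pr\'ev{\^o}t and R\"ockner). Your write-up merely supplies the standard details behind that remark---elementary integrands, the It\^o isometry, the equivalence $\Vert\cdot\Vert_T\sim\Vert\cdot\Vert_T^*$, closedness of $\mathcal M^2_{T,c}$, and the patching over $T\in\mathbb N$ via the indistinguishability statement---so there is nothing genuinely different to compare.
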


\begin{prop}\label{P:Assoziativitaet}
 Let $M$ be a square integrable martingale with a.s.\ c\`adl\`ag paths and values in some Hilbert-space $U$, $F,H$ be Hilbert spaces, $\Psi\in\mathcal L^2_M(F)$, $I(t):=\int_0^t\Psi(s)dM(s)$, $t\in\mathbb R_+$ and $\Gamma\in\mathcal L^2_{I}(H)$. Then 
$$\Phi(t):=\Gamma(t)\Psi(t),\quad t\in\mathbb R_+$$
is an element of $L^2_M(H)$ and
$$ \int_0^t \Phi(s) dM(s) = \int_0^t \Gamma(s) dI(s).$$
\end{prop}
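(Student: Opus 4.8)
The plan is to prove the two assertions---membership $\Phi\in\mathcal L^2_M(H)$ and the identity of integrals---by first reducing everything to a relation between the operator angle brackets of $M$ and $I$, then establishing the integral identity for elementary $\Gamma$, and finally passing to the limit.

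First I would record the operator angle bracket of $I$. Since $I(t)=\int_0^t\Psi(s)\,dM(s)$, Peszat and Zabczyk~\cite[Corollary 8.17]{peszat.zabczyk.07} give
$$\<\<I,I\>\>_t=\int_0^t\Psi(s)\Q^M_s\Psi(s)^*\,d\<M,M\>(s),$$
where $\Q^M$ denotes the martingale covariance of $M$. On the other hand, by definition $\<\<I,I\>\>_t=\int_0^t\Q^I_s\,d\<I,I\>(s)$ with $\Q^I$ the martingale covariance of $I$. Hence the two operator-valued measures $\Q^I_s\,d\<I,I\>(s)$ and $\Psi(s)\Q^M_s\Psi(s)^*\,d\<M,M\>(s)$ coincide. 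Using this I would settle integrability and, at the same time, exhibit the isometry that drives the density argument: since $\Gamma(s)\in L(F,H)$ and $\Psi(s)\in L(U,F)$, the product $\Phi(s)=\Gamma(s)\Psi(s)$ is a predictable $L(U,H)$-valued process with
$$\tr\big(\Phi(s)\Q^M_s\Phi(s)^*\big)=\tr\big(\Gamma(s)\,\Psi(s)\Q^M_s\Psi(s)^*\,\Gamma(s)^*\big),$$
so that, integrating against $d\<M,M\>$ and using the identity of measures,
$$\E\int_0^T\tr\big(\Phi\Q^M\Phi^*\big)\,d\<M,M\>=\E\int_0^T\tr\big(\Gamma\Q^I\Gamma^*\big)\,d\<I,I\>.$$
The right-hand side is finite precisely because $\Gamma\in\mathcal L^2_I(H)$; thus $\Phi\in\mathcal L^2_M(H)$ and moreover $\Vert\Phi\Vert_{\mathcal L^2_M}=\Vert\Gamma\Vert_{\mathcal L^2_I}$, i.e.\ $\Gamma\mapsto\Gamma\Psi$ is an isometry of $\mathcal L^2_I(H)$ into $\mathcal L^2_M(H)$.

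I would then verify the integral identity for an elementary integrand $\Gamma=\sum_{j=1}^n Y_j1_{]a_j,b_j]}\phi_j$ with $\phi_j\in L(F,H)$ and $Y_j$ bounded and $\mathcal F_{a_j}$-measurable. For such $\Gamma$,
$$\int_0^t\Gamma(s)\,dI(s)=\sum_{j=1}^n Y_j\phi_j\big(I(t\wedge b_j)-I(t\wedge a_j)\big)=\sum_{j=1}^n Y_j\phi_j\int_{t\wedge a_j}^{t\wedge b_j}\Psi(s)\,dM(s),$$
and pulling the bounded operators $Y_j\phi_j$ inside the integral yields $\int_0^t\Gamma(s)\Psi(s)\,dM(s)=\int_0^t\Phi(s)\,dM(s)$. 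Finally I would pass to general $\Gamma$: elementary integrands are dense in $\mathcal L^2_I(H)$, the It\^o isometry identifies $\Gamma\mapsto\int_0^\cdot\Gamma\,dI$ as an isometric map into the square-integrable martingales $(\mathcal M^2_T,\Vert\cdot\Vert_T)$ of Proposition~\ref{P:Eig. quad int}, and $\Gamma\mapsto\int_0^\cdot\Gamma\Psi\,dM$ is the composition of the isometry $\Gamma\mapsto\Gamma\Psi$ with a further It\^o isometry. Since both continuous maps agree on the dense set of elementary integrands, they agree on all of $\mathcal L^2_I(H)$, which gives the claimed identity.

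The main obstacle is the first step: correctly identifying the operator-valued covariance measure of $I$ and checking that $\Q^I_s\,d\<I,I\>(s)=\Psi(s)\Q^M_s\Psi(s)^*\,d\<M,M\>(s)$, since this is what converts $\mathcal L^2_I$-integrability into $\mathcal L^2_M$-integrability and underlies the whole isometry argument. One must be careful that the normalisation implicit in passing between the scalar bracket $\<\cdot,\cdot\>$ and the operator bracket $\<\<\cdot,\cdot\>\>$ is handled consistently throughout.
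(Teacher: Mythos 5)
Your proposal is correct, but it is organised differently from the paper's argument. The paper's proof is a two-line outline following Pr\'ev{\^o}t and R\"ockner~\cite[Chapter 2]{Prevot.Roeckner.07}: it approximates \emph{both} $\Psi$ and $\Gamma$ by simple processes, notes that the identity is an elementary computation for simple integrands, and passes to the limit in $L^2(\Omega,H)$; the membership $\Phi\in\mathcal L^2_M(H)$ is not addressed separately at all. You instead keep $\Psi$ fixed and general, use Peszat and Zabczyk~\cite[Corollary 8.17]{peszat.zabczyk.07} to identify the operator-valued measure $\Q^I_s\,d\<I,I\>(s)=\Psi(s)\Q^M_s\Psi(s)^*\,d\<M,M\>(s)$, and from this extract the isometry $\Gamma\mapsto\Gamma\Psi$ of $\mathcal L^2_I(H)$ into $\mathcal L^2_M(H)$, so that only $\Gamma$ needs to be approximated by elementary processes. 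This buys you two things the paper's sketch leaves implicit: the integrability assertion $\Phi\in\mathcal L^2_M(H)$ comes out explicitly (with equality of norms), and you avoid the delicacy of the double approximation, where the integrator $I_n=\int\Psi_n\,dM$ changes along the sequence and one must justify $\int\Gamma\,dI_n\to\int\Gamma\,dI$ even though the spaces $\mathcal L^2_{I_n}$ vary with $n$. The price is that your argument leans on the bracket machinery of Peszat--Zabczyk, and the one step you should spell out is the passage from equality of the two operator-valued measures to equality of the trace integrals $\E\int_0^T\tr(\Gamma\Q^I\Gamma^*)\,d\<I,I\>=\E\int_0^T\tr(\Gamma\Psi\Q^M\Psi^*\Gamma^*)\,d\<M,M\>$; this is a routine Radon--Nikodym argument (dominate both $d\<I,I\>$ and $d\<M,M\>$ by their sum and compare densities), but it is the hinge on which your whole isometry rests.
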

\begin{proof}
 We outline the proof taken from Pr\'ev{\^ o}t and R\"ockner~\cite[Chapter 2]{Prevot.Roeckner.07}:
First assume that $\Gamma$ and $\Psi$ are simple. Then $\Phi$ is simple and the equality follows from an elementary computation. General $\Gamma$ and $\Psi$ can be approximated by simple integrands such that $L^2(\Omega,H)$-convergence holds.
\end{proof}



\end{document}